\newtheorem{thm}{Theorem}
\newtheorem{cor}{Corollary}
\newtheorem{lem}{Lemma}
\newtheorem{prop}{Proposition}
\newtheorem{ex}{Example}
\newtheorem{defn}{Definition}
\newtheorem{prob}{Problem}
\newtheorem{rem}{Remark}
\newtheorem{ass}{Assumption}
\DeclareMathOperator*{\minimize}{min.}
\DeclareMathOperator{\st}{subj.\; to}
\DeclareMathOperator{\Boxed}{Box}
\DeclareMathOperator{\Viab}{Viab}
\DeclareMathOperator{\Disc}{Disc}
\DeclareMathOperator{\Inv}{Inv}
\DeclareMathOperator*{\Limsup}{Lim\,sup}
\DeclareMathOperator*{\Liminf}{Lim\,inf}
\newcommand{\tr}[1]{#1^\textsl{T}}
\newcommand{\s}{\mathcal{S}}
\newcommand{\Y}{\mathcal{Y}}
\newcommand{\D}{\mathcal{D}}
\newcommand{\V}{\mathcal{V}}
\newcommand{\C}{\mathcal{C}}
\newcommand{\B}{\mathcal{B}}
\newcommand{\U}{\mathcal{U}}
\newcommand{\Ul}{\mathscr{U}}
\newcommand{\Vl}{\mathscr{V}}
\newcommand{\Sl}{\mathbb{S}}
\newcommand{\X}{\mathcal{X}}
\newcommand{\K}{\mathcal{K}}
\newcommand{\A}{\mathcal{A}}
\newcommand{\Real}{\mathbb{R}}
\newcommand{\Lom}{\mathcal{L}}
\newcommand{\abs}[1]{\lvert#1\rvert}
\newcommand{\set}[1]{\left\{#1\right\}}
\newcommand{\norm}[1]{\left\lVert#1\right\rVert}
\newcommand{\normshort}[1]{\lVert#1\rVert}
\newcommand{\infnorm}[1]{\gamma}
\newcommand{\p}{\phantom{+}}
\begin{document}

\title{A Modified Riccati Transformation for Decentralized Computation of the Viability Kernel Under LTI Dynamics\thanks{Research supported by NSERC Discovery Grant \#327387 (M.~Oishi), NSERC Collaborative Health Research Project \#CHRPJ-350866-08 (G.~Dumont), and the Institute for Computing, Information and Cognitive Systems (ICICS). This work was mainly carried out at Electrical \& Computer Engineering, University of British Columbia, Vancouver, BC V6T 1Z4, Canada.}}

\author{Shahab Kaynama\thanks{S.\ Kaynama ({\tt\footnotesize kaynama@ece.ubc.ca}, cor.\ author) is currently with Electrical Engineering \& Computer Sciences, University of California at Berkeley, 337 Cory Hall, Berkeley, CA 94720, USA.} ~and Meeko Oishi\thanks{M.\ Oishi ({\tt\footnotesize oishi@unm.edu}) is with Electrical \& Computer Engineering, University of New Mexico, MSC01 1100, 1 University of New Mexico, Albuquerque, NM 87131, USA.}}

\date{(Preprint Submitted for Publication)}

\maketitle

\begin{abstract}
    Computing the viability kernel is key in providing guarantees of safety and proving existence of safety-preserving controllers for constrained dynamical systems. Current numerical techniques that approximate this construct suffer from a complexity that is exponential in the dimension of the state. We study conditions under which a linear time-invariant (LTI) system can be suitably decomposed into lower-dimensional subsystems so as to admit a conservative computation of the viability kernel in a decentralized fashion in subspaces. We then present an isomorphism that imposes these desired conditions, particularly on two-time-scale systems. Decentralized computations are performed in the transformed coordinates, yielding a conservative approximation of the viability kernel in the original state space. Significant reduction of complexity can be achieved, allowing the previously inapplicable tools to be employed for treatment of higher-dimensional systems. We show the results on two examples including a 6D system.
\end{abstract}





\section{Introduction} \label{S:Intro}

Constrained dynamical systems have received a tremendous amount of attention due to the presence of safety constraints and hard bounds that appear in many practical scenarios. Providing guarantees of constraint satisfaction and facilitating synthesis of constraint-satisfying controllers therefore is highly desirable, particularly in safety-critical applications. A class of safety-critical systems known as \emph{envelope protection problems} is concerned with ensuring that the trajectories remain in a safe, bounded ``envelope'' (subset) of the state space for a given time horizon. Such problems arise in e.g.\ flight management systems \cite{Margellos_Lygeros_2009,Lygeros1999,Tomlin2000a,TMBO03} where the safety constraints are defined as the aircraft's aerodynamic envelope and consequently the system must ensure that certain combinations of states are avoided to prevent stalling or other undesirable behaviors. Other application domains include control of depth of anesthesia \cite{kaynama_HSCC2012}, aircraft autolanders \cite{BMOT07}, automated highway systems \cite{Lygeros_Godbole_Sastry_1998}, control of under-actuated underwater vehicles \cite{Panagou_Margellos_Summers_Lygeros_Kyriakopoulos_2009}, stockout prevention of storage systems in manufacturing processes \cite{Borrelli_Del_Vecchio_Parisio_2009}, and management of a marine renewable resource \cite{Bene2001}, to name a few. 

Viability theory \cite{KurzhanskiFillipov87, aubin1991viability, blanchini2008set} provides a set-valued perspective on the behavior of the trajectories inside a given set. Thus it is naturally suited to handle envelope protection problems. By duality, \emph{minimal} reachability \cite{MBT05} is also capable of analyzing such problems by investigating the behavior of the trajectories outside of the envelope. For simplicity, in this paper we only focus on the constructs generated within the framework of viability theory. The \emph{viability kernel} is the set of initial states for which there exists at least one trajectory of the input-constrained system that respects the state constraint for all time. It is shown in \cite{aubin1991viability} and (by duality in \cite{MitchellHSCC07}) that the viability kernel is the \emph{only} construct that can be used to prove safety/viability of the system and to synthesize inputs that preserve this safety; cf.\ \cite[Chap.\ 1--2]{kaynama2012Thesis} for more detail. In general an exact computation of the viability kernel is extremely difficult if not impossible. Instead, approximations of this set are computed. Such computations have historically been subject to Bellman's ``curse of dimensionality'' \cite{ADFGLM06}. The numerical algorithms that approximate the viability kernel and its associated control laws (e.g., \cite{MBT05, Saint-Pierre_1994, cardaliaguet1999set, gao2006-reachability}), collectively referred to as \emph{Eulerian methods} \cite{MitchellHSCC07}, rely on gridding the state space and therefore their computational complexity increases exponentially with the dimension of the state. This renders them impractical for systems of dimension higher than three or four.

This paper presents a part of our efforts to address the curse of dimensionality by enabling the use of Eulerian algorithms for higher-dimensional LTI systems (and by extension, hybrid systems with LTI dynamics). We decompose the structure of the system, applying Eulerian algorithms on each individual lower-dimensional subsystem in a decentralized fashion. Significant computational gains can be obtained, since instead of one costly centralized computation on the full-order system, multiple less expensive subsystem computations are performed. The results are then mapped back to the full-order space to obtain a conservative approximation (i.e.\ an \emph{under-approximation}) of the viability kernel. The contribution of this paper is twofold: 1) We investigate various structures on system matrices that must be satisfied so that the behavior of the constrained system for envelope protection problems (with simply-connected, compact constraints) can be inferred conservatively from subspace decentralized analyses (Section~\ref{S:Decentralized_Viab}). 2) We then present an isomorphism through which the desired structure is \emph{imposed} on the system (albeit under certain conditions) to facilitate decentralized computations in the transformed space (Section~\ref{S:Riccati}). Numerical examples are provided in Section~\ref{S:Riccati_examples}.

\subsection{Related Work}

Complexity reduction for viability and minimal reachability has been addressed by many researchers. A projection scheme in \cite{MT03} based on Hamilton-Jacobi (HJ) partial differential equations (PDEs) over-approximates the projection of the true minimal reachable tube in lower dimensional subspaces, with the unmodeled dimensions treated as a disturbance. Similarly, \cite{SHT03} decomposes a full-order nonlinear system into either disjoint or overlapping subsystems and solves multiple HJ PDEs in lower dimensions. More recently, a mixed implicit-explicit HJ is presented in \cite{Mitchell_HSCC11} for nonlinear systems whose state vector contains states that are integrators of other states. The complexity of this new formulation is linear in the number of integrator states, while still exponential in the dimension of the rest of the states. These techniques assume that the system itself presents a certain structure that can be exploited.

In \cite{Coquelin_Martin_Munos_2007}, an approximate dynamic programming technique is presented that, although still grid-based, enables a more efficient computation of the viability kernel. The viability kernel (similarly to \cite{Lygeros2004}) is expressed as the zero sublevel set of the value function of the corresponding optimal control problem. It is assumed that the value function, which is a viscosity solution of a HJB PDE, is differentiable everywhere on the constraint set. The PDE is then discretized and the resulting value function is numerically computed on a grid using a function approximator such as the $k$-nearest neighbor algorithm. The error-bounded approximation is not conservative (it is an over-approximation) but converges to the true viability kernel in the limit as the number of grid points goes to infinity.


Another related approach is the search for a barrier certificate \cite{Prajna_Jadbabaie_2004}, a Lyapunov-like function that forms a separating hyper-surface between any two given sets $\A$ and $\B$ in the state space. If there exists a function non-positive on $\A$ and positive on $\B$, and whose Lie derivative (along the vector field) is non-positive on its zero level set for all states and controls, then no trajectories will ever go from $\A$ to $\B$. This technique can be adapted to analytically describe the boundary of the \emph{infinite-horizon} viability kernel: A certificate must now be formulated such that at every state along its zero level set there exists a control that makes the Lie derivative non-positive. For systems with polynomial vector fields and semi-algebraic constraints, efficient techniques based on Sum of Squares can be used to find the barrier certificate.\footnote{This method cannot be used to formulate the \emph{finite-horizon} viability kernel which may be useful when, for example, the infinite-horizon kernel is empty, or when safety is to be verified/enforced over a finite time interval. Moreover, there are no guarantees that a barrier certificate can be found for a given system no matter how simple its dynamics (even when a Lyapunov function is already known).}

Recently, we presented a connection between the viability kernel and efficiently-computable classes of reachability constructs known as maximal reachable sets. Owing to this connection, scalable numerical algorithms (collectively referred to as \emph{Lagrangian methods} \cite{MitchellHSCC07}) such as \cite{Le_Guernic_Girard_2010,SpaceEx_2011,Kurzhanski2000a, KV06, GGM06, Girard2008, Han_Krogh_2006}, originally developed for maximal reachability, can now be used to approximate the viability kernel. We presented two algorithms for LTI systems with convex constraints based on piecewise ellipsoidal representations \cite{kaynama_HSCC2012} and support vectors \cite{kaynama_Aut2012} that have polynomial complexity. In contrast to these results, the technique presented here reduces the complexity indirectly by decentralizing computations. The benefit of this approach is that it allows useful features of Eulerian methods such gradient-based control synthesis and handling of arbitrarily shaped nonconvex constraints be taken advantage of.


\section{Problem Statement} \label{S:problem_formulation}

Consider the continuous-time system
\begin{equation}\label{E:nonlinear_ss_eqn}
    \dot{x}=f(x,u)
\end{equation}
with state space $\X:=\Real^n$ (a finite-dimensional vector space), state vector $x(t)\in\X$, and input $u(t)\in\U$ where $\U$ is a compact (closed and bounded) and convex subset of $\Real^p$. The vector field $f\colon \X \times \U \to \X$ is assumed to be Lipschitz in $x$ and continuous in $u$. Let
\begin{equation}\label{E:U_measurable}
    \Ul_{[0,t]}:=\left\{u\colon [0,t] \to \Real^p \; \text{measurable}, \;\; u(s)\in \U \;\,\text{a.e.}\; s\in [0,t] \right\}.
\end{equation}
With an arbitrary, finite time horizon $\tau>0$, for every $t\in [0,\tau]$, $x_0\in \X$, and $u(\cdot)\in \Ul_{[0,t]}$, there exists a unique trajectory $\xi_{x_0,u}\colon [0,t] \to \X$ that satisfies \eqref{E:nonlinear_ss_eqn} and the initial condition $\xi_{x_0,u}(0)=x_0$.

For a nonempty, simply-connected, compact state constraint set $\K \subset \X$ we are concerned with computing the following backward construct:\footnote{By duality the arguments presented in this paper also hold for the minimal reachable tube of $\K^c$; cf.\ \cite{kaynama2012Thesis}.}
\begin{defn}[Viability Kernel] \label{D:viab}
    The finite-horizon viability kernel\footnote{The infinite-horizon viability kernel $\Viab_{\Real^+}(\K,\U)$ is also known as the \emph{maximal controlled-invariant set} \cite{Blanchini_1999}.} of $\K$ is the set of initial states for which there exists an input such that the trajectories emanating from those states remain in $\K$ for all time $t\in[0,\tau]$:
    \begin{equation*}
            \Viab_{[0,\tau]}(\K,\U):= \left\{ x_0 \in \K \mid  \exists  u(\cdot)\in \Ul_{[0,\tau]}, \,  \forall  t\in [0,\tau],\, \xi_{x_0,u}(t) \in \K \right\}.
    \end{equation*}
\end{defn}

Initial states belonging to this set are viable under \eqref{E:nonlinear_ss_eqn}, and the corresponding control laws are safety-preserving. The powerful Eulerian methods are capable of directly computing the viability kernel and its safety-preserving control policies. However, they rely on gridding the state space, and therefore are computationally intensive. Although versatile in terms of ability to handle various types of dynamics and constraints, the applicability of these techniques has been historically limited to systems of low dimensionality (up to 4D in practice) due to their exponential complexity.


We restrict ourselves to LTI systems of the form
\begin{equation} \label{E:general_ss_eqn}
    \dot{x}=Ax+Bu
\end{equation}
described by the matrix notation
\begin{equation} \label{E:general_LTI_system}
\s := \begin{bmatrix}
    \begin{array}{c|c}
       A &  B  \\
    \end{array} \\
    \end{bmatrix}
\end{equation}
with constant, appropriately sized $A$ and $B$ matrices.

\begin{prob}[Decentralized Viability]\label{Prob:problem_formulation}
    i) Identify a structure on $A$ and $B$ for which the viability kernel can be conservatively reconstructed from its subsystem analyses. ii) Find an isomorphic state space for \eqref{E:general_ss_eqn} in which the system has this desired structure.
\end{prob}

\subsection{Preliminaries}


\paragraph{Notation}
For a set $\A\subseteq \X$, $\A^c$ and $2^\A$ denote the complement and the power set of $\A$ in $\X$, respectively. For brevity, $\norm{\cdot}$ denotes the infinity norm. For a constant matrix $A = \left[a_{ij}\right]\in \Real^{m \times n}$ the induced norm is $\norm{A} := \sup_{v \in \Real^n,\, v \neq 0} \frac{\norm{Av}}{\norm{v}} = \max_{1 \leq j \leq n} \sum_{i=1}^m \abs{a_{ij}}$. For a Lebesgue measurable function $f\colon \Real \to \Real^n$ defined over an interval $[t_a,t_b]$ we denote $\norm{f}:=\norm{f(\cdot)}_{\Lom_\infty [t_a,t_b]} = \sup_{t\in [t_a,t_b]} \norm{f(t)} < \infty$. A linear transformation of $\s$ in \eqref{E:general_LTI_system} using a nonsingular matrix $T \in \Real^{n \times n}$ is defined as $\s' = T^{-1}(\s) := \begin{bmatrix}
\begin{array}{c|c}
       T^{-1}AT &  T^{-1}B
\end{array}
\end{bmatrix}$. A linear transformation of a set $\A \subseteq \X$ under the same mapping is $\Y = T^{-1} \A := \{ y \mid y = T^{-1} a, \: a \in \A \}$.

\begin{defn}[Disjoint Input]\label{Def:disjoint_input}
    The input $u =\tr{\left[u_1 \dotsb u_p\right]} \in \U \subset \Real^p$ is \emph{disjoint} across two subsystems
    \begin{subequations}
    \begin{align}\label{E:subsys_Defn_disjoint}
        \dot{x}_1 &= A_1 x_1 + \Delta_{12} x_2 +  B_1 u,\\
        \dot{x}_2 &= A_2 x_2 + \Delta_{21} x_1 + B_2 u
    \end{align}
    \end{subequations}
    of an LTI system with $x_1 \in \Real^k$ and $x_2 \in \Real^{n-k}$ if $\forall s\in \{1,\dots,p\}$, $i\neq j$,
    \begin{equation}
        \frac{\partial B_i u}{\partial u_s} \neq 0 \, \rightarrow \, \frac{\partial B_j u}{\partial u_s}=0,
    \end{equation}
    and $\U = \U_1 \times \U_2$, where $\U_i$ is any (possibly degenerate) subset of $\Real^p$ from which the portion of the vector $u$ acting directly on subsystem $i$ draws its values. 
\end{defn}

\begin{defn}[Unidirectionally Coupled] \label{D:UnidirectionalCouple}
    The subsystems
    \begin{subequations}
    \begin{align}
        \dot{x}_1 &= A_1 x_1 + B_1 u, \label{E:I0}\\
        \dot{x}_2 &= A_2 x_2 + \Delta_{21} x_1 + B_2 u \label{E:I}
    \end{align}
    \end{subequations}
    with disjoint input across them are said to be \emph{unidirectionally coupled} since the trajectories of \eqref{E:I} are affected by those of \eqref{E:I0}, while \eqref{E:I0} evolves independently from \eqref{E:I}. The worst-case unidirectional coupling can be characterized by $\norm{\Delta_{21}}$.
\end{defn}

\begin{defn}[ETUC] \label{D:trivial-unctrl}
    A subsystem is said to be \emph{externally trivially uncontrollable} (ETUC) if it possesses a null input matrix.
\end{defn}

\begin{rem}\label{rem:shape_of_U}
    The condition on $\U$ in Definition~\ref{Def:disjoint_input} enures that the inputs acting on each subsystems are independent of one another. This condition is satisfied for most physical systems where actuators are commonly uncorrelated, or for a system with an ETUC subsystem (in which case the shape of $\U$ becomes irrelevant). In the most general case, however, $\U$ can be (under-)approximated by a cross-product set. 
\end{rem}

\section{Decentralized Viability Computation}\label{S:Decentralized_Viab}

We begin by arriving at the desired structure on system matrices that would allow for decentralized (and conservative) computation of the viability kernel. Throughout the paper we assume a partitioning of \eqref{E:general_LTI_system} that results in two subsystems. The arguments can be easily generalized to multiple subsystems as discussed in Section~\ref{S:recursive_decomp}.


\subsection{Why Decoupling of $A$ Alone is Insufficient}

Consider the following system with block diagonal $A$-matrix, and a $B$-matrix of generic form:
\begin{equation}\label{E:sys_decoupled_genericB}
    \begin{bmatrix}
        \dot{x}_1\\
        \dot{x}_2
    \end{bmatrix}
    =\begin{bmatrix}
        A_1 & \mathbf{0}\\
        \mathbf{0} & A_2
    \end{bmatrix}
    \begin{bmatrix}
        x_1\\
        x_2
    \end{bmatrix}
    +
    \begin{bmatrix}
        B_1\\
        B_2
    \end{bmatrix}u, \quad u\in \U.
\end{equation}
Denote the two subspaces of $\Real^n$ in which the subsystems evolve as
\begin{equation}
    \Sl_1:=\Real^k \quad \text{and} \quad \Sl_2:=\Real^{n-k}.
\end{equation}
Let $\Pi_i x$ be the projection of the vector $x = \tr{[x_1 \; x_2 ]} \in \X$ onto $\Sl_i$:
\begin{equation}
    \Pi_i x = x_i \in \Sl_i,
\end{equation}
and $\Pi_i \K$ the projection of the set $\K\subset \X$ onto $\Sl_i$:
\begin{equation}
    \Pi_i \K = \left\{ x_i \in \Sl_i \mid \exists x\in \K, \, \Pi_i x = x_i \right\}.
\end{equation}

\begin{lem}
    For any $t$ and $u(\cdot)\in \Ul_{[0,t]}$ the projection of trajectory $\xi$ of system \eqref{E:sys_decoupled_genericB} with initial condition $\xi_{x_0,u}(0)=x_0$ is a subsystem trajectory $\xi^i$ initiating from the projection of $x_0$:
    \begin{equation}
        \Pi_i \xi_{x_0,u}(t) = \xi^i_{\Pi_i x_0, u}(t).
    \end{equation}
\end{lem}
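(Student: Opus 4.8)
The plan is to exploit the block-diagonal structure of the $A$-matrix in \eqref{E:sys_decoupled_genericB}, which makes the two coordinate blocks evolve by ODEs that are decoupled from one another (though driven by the common input $u$). Concretely, writing $\xi = \tr{[\xi^{(1)}\;\xi^{(2)}]}$ for the components of the full trajectory in the splitting $\X = \Sl_1 \times \Sl_2$, the dynamics \eqref{E:sys_decoupled_genericB} read $\dot\xi^{(1)} = A_1 \xi^{(1)} + B_1 u$ and $\dot\xi^{(2)} = A_2 \xi^{(2)} + B_2 u$; the cross-terms vanish precisely because the off-diagonal blocks of $A$ are zero. Since $\Pi_i$ is the linear projection onto the $i$-th block, $\Pi_i \xi = \xi^{(i)}$, so I would first observe that $t\mapsto \Pi_i \xi_{x_0,u}(t)$ satisfies the $i$-th subsystem ODE $\dot y = A_i y + B_i u$ with initial value $\Pi_i \xi_{x_0,u}(0) = \Pi_i x_0$.

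Next I would invoke uniqueness of solutions. By the standing hypotheses in Section~\ref{S:problem_formulation} (Lipschitz in $x$, continuous in $u$, and $u(\cdot)\in\Ul_{[0,t]}$ measurable), each subsystem $\dot y = A_i y + B_i u$ — being itself LTI, hence Lipschitz — admits a unique absolutely continuous solution through any initial condition for the given input. The subsystem trajectory $\xi^i_{\Pi_i x_0,u}$ is by definition that unique solution. Since $\Pi_i \xi_{x_0,u}$ is also a solution of the same subsystem IVP, the two must coincide on $[0,t]$, which is exactly the claimed identity $\Pi_i \xi_{x_0,u}(t) = \xi^i_{\Pi_i x_0, u}(t)$. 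Equivalently, one can make this completely explicit via the variation-of-constants formula: $\xi_{x_0,u}(t) = e^{At}x_0 + \int_0^t e^{A(t-s)}Bu(s)\,ds$, and because $A$ is block-diagonal, $e^{At} = \operatorname{diag}(e^{A_1 t}, e^{A_2 t})$, so projecting onto $\Sl_i$ yields $e^{A_i t}\Pi_i x_0 + \int_0^t e^{A_i(t-s)}B_i u(s)\,ds$, which is precisely $\xi^i_{\Pi_i x_0,u}(t)$.

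There is essentially no hard part here; the lemma is a direct consequence of the block structure plus uniqueness, and the main thing to be careful about is bookkeeping — making sure the projection $\Pi_i$ genuinely commutes with the flow, which relies entirely on the off-diagonal blocks of $A$ being zero (this is why the lemma would fail for the general coupled form \eqref{E:subsys_Defn_disjoint}). I would keep the argument at the level of the variation-of-constants identity since it is the cleanest and leaves nothing to check beyond the block-exponential identity $e^{At} = \operatorname{diag}(e^{A_1 t}, e^{A_2 t})$, which is standard.
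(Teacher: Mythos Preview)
Your proposal is correct and follows essentially the same approach as the paper: project the full ODE through $\Pi_i$, use the block-diagonal structure of $A$ to see that the projection satisfies the $i$-th subsystem ODE with initial condition $\Pi_i x_0$, and conclude. The paper's proof is a one-line computation that stops at the ODE identity, whereas you make the uniqueness step explicit and add the variation-of-constants alternative; both are sound elaborations of the same idea.
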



\begin{proof}
    $\Pi_i
    \left[\begin{smallmatrix}
            \dot{x}_1\\
            \dot{x}_2
        \end{smallmatrix}\right]
        =
        \Pi_i\left(
        \left[\begin{smallmatrix}
            A_1 & \mathbf{0}\\
            \mathbf{0} & A_2
        \end{smallmatrix}\right]
        \left[\begin{smallmatrix}
            x_1\\
            x_2
        \end{smallmatrix}\right]
        +
        \left[\begin{smallmatrix}
            B_1\\
            B_2
        \end{smallmatrix}\right]u
        \right)
        =
        A_i \Pi_i x_i + B_i u$.
\end{proof}

\begin{cor}\label{Cor:proj_traj_implies_traj_proj}
    %
    \begin{equation}
        \xi_{x_0,u}(t) \in \K \Rightarrow \xi^i_{\Pi_i x_0,u}(t) \in \Pi_i \K.
    \end{equation}
\end{cor}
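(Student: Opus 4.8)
The plan is to derive this as an immediate consequence of the preceding lemma together with the definition of the projected constraint set $\Pi_i\K$. There is essentially no analytic content beyond chaining two facts, so the proof will be short.

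First I would fix $t$ and $u(\cdot)\in\Ul_{[0,t]}$ and assume the hypothesis $\xi_{x_0,u}(t)\in\K$. Writing $x := \xi_{x_0,u}(t)\in\K$, the definition of the projection of a set gives $\Pi_i x \in \Pi_i\K$, since $x$ is a witness point in $\K$ whose $i$-th block is $\Pi_i x$. Next I would invoke the lemma, which identifies the projected full trajectory with the corresponding subsystem trajectory started from the projected initial condition: $\Pi_i \xi_{x_0,u}(t) = \xi^i_{\Pi_i x_0,u}(t)$. Combining the two displays yields $\xi^i_{\Pi_i x_0,u}(t) = \Pi_i x \in \Pi_i\K$, which is the claim.

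The only point requiring a word of care is that the \emph{same} admissible input $u(\cdot)$ is used for the full-order system and for subsystem $i$; this is exactly what the lemma asserts and is legitimate because in \eqref{E:sys_decoupled_genericB} both subsystems are driven by the common $u$, so any $u(\cdot)\in\Ul_{[0,t]}$ is simultaneously admissible for each subsystem. Thus the hard part — establishing that projecting a trajectory of the block-diagonal system produces a genuine subsystem trajectory — has already been discharged by the lemma, and the corollary follows with no further estimates.

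I would also remark (either here or in a following sentence) on what this does \emph{not} give: the converse fails in general, and consequently one cannot reconstruct $\Viab_{[0,\tau]}(\K,\U)$ merely from the subsystem viability kernels when $B$ has generic form — motivating the structural conditions developed in the remainder of Section~\ref{S:Decentralized_Viab}.
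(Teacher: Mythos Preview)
Your argument is correct and is exactly the intended one: the paper treats the corollary as an immediate consequence of the preceding lemma (no separate proof is given), and your chain ``$\xi_{x_0,u}(t)\in\K \Rightarrow \Pi_i\xi_{x_0,u}(t)\in\Pi_i\K$'' together with the lemma's identity $\Pi_i\xi_{x_0,u}(t)=\xi^i_{\Pi_i x_0,u}(t)$ is precisely that deduction. Your remark that the converse fails and motivates the structural conditions also matches the paper's subsequent discussion.
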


Later we will show and utilize the fact that under certain conditions this implication is bidirectional.

\begin{prop}[Wrong Approximation]\label{Prop:viab_over-approx_GenericB}
    For dynamics \eqref{E:sys_decoupled_genericB} the cross-product of subsystem viability kernels of projections of $\K$ is a superset of the viability kernel of $\K$: 
    \begin{equation}
        \Viab_{[0,\tau]}(\K,\U) \subseteq \Viab_{[0,\tau]}(\Pi_1 \K, \U) \times \Viab_{[0,\tau]}(\Pi_2 \K,\U).
    \end{equation}
\end{prop}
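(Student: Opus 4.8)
The plan is to show that any initial state that is viable for the full system~\eqref{E:sys_decoupled_genericB}, \emph{together with its safety-preserving input}, gives rise to viable initial conditions for each of the two subsystems under that \emph{same} input; the asserted containment then follows directly from the definition of the Cartesian product.

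First I would fix an arbitrary $x_0 \in \Viab_{[0,\tau]}(\K,\U)$ and pick a witnessing input $u(\cdot)\in\Ul_{[0,\tau]}$, so that $\xi_{x_0,u}(t)\in\K$ for every $t\in[0,\tau]$. In particular $x_0 = \tr{[(\Pi_1 x_0)\;(\Pi_2 x_0)]}\in\K$, so by the definition of the projected constraint set we immediately get $\Pi_i x_0 \in \Pi_i \K$ for $i\in\{1,2\}$.

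Next I would invoke Corollary~\ref{Cor:proj_traj_implies_traj_proj}: from $\xi_{x_0,u}(t)\in\K$ for all $t\in[0,\tau]$ it yields $\xi^i_{\Pi_i x_0,u}(t)\in\Pi_i\K$ for all $t\in[0,\tau]$ and each $i$. Thus the single input $u(\cdot)\in\Ul_{[0,\tau]}$ simultaneously keeps the $i$-th subsystem trajectory starting from $\Pi_i x_0$ inside $\Pi_i\K$ over the whole horizon, so by Definition~\ref{D:viab} we conclude $\Pi_i x_0 \in \Viab_{[0,\tau]}(\Pi_i\K,\U)$ for $i=1,2$. Reassembling, $x_0 = \tr{[(\Pi_1 x_0)\;(\Pi_2 x_0)]}$ lies in $\Viab_{[0,\tau]}(\Pi_1\K,\U)\times\Viab_{[0,\tau]}(\Pi_2\K,\U)$, which is the claimed inclusion.

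There is essentially no hard step here; the only point requiring care is the observation that the \emph{same} input $u$ that witnesses viability of $x_0$ in the full system also witnesses viability of each projected initial condition in the corresponding subsystem — which is exactly what the preceding Lemma and Corollary~\ref{Cor:proj_traj_implies_traj_proj} provide, and which relies on the block-diagonal structure of $A$ in~\eqref{E:sys_decoupled_genericB} so that the $u$-driven subsystem dynamics coincide with the projected dynamics. I would also note that the reverse inclusion fails in general, because inputs viable for the two subsystems separately need not agree; this is precisely why the construct on the right is only an over-approximation (hence the title ``Wrong Approximation''), motivating the additional structural conditions developed in the subsequent subsections.
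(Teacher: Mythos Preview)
Your proof is correct and follows essentially the same approach as the paper: fix a witnessing input $u(\cdot)$ for $x_0$, apply Corollary~\ref{Cor:proj_traj_implies_traj_proj} to get that each projected trajectory stays in the projected constraint under that same $u$, and conclude membership in each subsystem viability kernel. The paper presents this as a terse chain of implications, while you spell it out in prose and add the remark on why the reverse inclusion fails, but the underlying argument is identical.
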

\begin{proof}
    \begin{align*}
        x_0 \in \Viab_{[0,\tau]}(\K,\U) &\Leftrightarrow \exists u(\cdot),\, \forall t,\, \xi_{x_0,u}(t) \in \K\\
        &\Rightarrow \exists u(\cdot),\, \forall t,\, \left(\xi^1_{\Pi_1 x_0,u}(t) \in \Pi_1 \K \; \wedge \; \xi^2_{\Pi_2 x_0,u}(t) \in \Pi_2 \K\right)\\
        &\Rightarrow \exists u(\cdot),\, \forall t,\, \xi^1_{\Pi_1 x_0,u}(t) \in \Pi_1 \K \; \wedge \; \exists u(\cdot),\, \forall t,\, \xi^2_{\Pi_2 x_0,u}(t) \in \Pi_2 \K\\
        &\Rightarrow \Pi_1 x_0 \in \Viab_{[0,\tau]}(\Pi_1 \K,\U) \; \wedge \; \Pi_2 x_0 \in \Viab_{[0,\tau]}(\Pi_2 \K,\U)\\
        &\Rightarrow x_0 \in  \Viab_{[0,\tau]}(\Pi_1 \K,\U) \times \Viab_{[0,\tau]}(\Pi_2 \K,\U).
    \end{align*}
\end{proof}

The following counter example demonstrates that an inclusion in the opposite direction does not hold for system \eqref{E:sys_decoupled_genericB}; That is, $\Viab_{[0,\tau]}(\K,\U) \not\supseteq \Viab_{[0,\tau]}(\Pi_1 \K, \U) \times \Viab_{[0,\tau]}(\Pi_2 \K,\U)$. Consider the point $x'=\left[\begin{smallmatrix}1\\ 1\end{smallmatrix}\right]$ and constraint set $\K=[-1,1]\times[-1,1]$. We seek to compute the viability kernel of this set under the dynamics $\dot{x}_1 = x_1 + u$ and $\dot{x}_2 = x_2 - u$ and input constraint $u\in [-1,1]$. The point $x'$ belongs to the cross-product of subsystem viability kernels (since subsystem 1 can use $u=-1$ while subsystem 2 can use $u=+1$ at the same point to keep $\Pi_i x'$ in $\Pi_i \K$), but does not belong to the actual full-order kernel (since no input exists that can keep the system in $\K$).  As such, when the system is in the form of \eqref{E:sys_decoupled_genericB} performing the analysis on subsystems would yield an over-approximation of the viability kernel. This stems from the fact that the input is non-disjoint across the subsystems. On the other hand, we do have the following correct inclusion even with a non-disjoint input.

\begin{lem}\label{Lem:ill-under-approximation}
    The following holds for system \eqref{E:sys_decoupled_genericB}:
    \begin{equation}\label{E:incl_generic_1}
        \Viab_{[0,\tau]}(\K,\U) \supseteq \left( \Viab_{[0,\tau]}((\Pi_1 \K^c)^c, \U) \times \Sl_2 \right) \cup \left( \Sl_1 \times \Viab_{[0,\tau]}((\Pi_2 \K^c)^c,\U)\right).
    \end{equation}
\end{lem}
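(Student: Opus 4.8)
The plan is to show that each of the two sets appearing in the union on the right-hand side is contained in $\Viab_{[0,\tau]}(\K,\U)$; since the viability kernel is a single set (not a cross-product), containing both pieces gives containment of their union. By symmetry it suffices to treat the first piece, $\Viab_{[0,\tau]}((\Pi_1\K^c)^c,\U) \times \Sl_2$. The key observation is that $(\Pi_1\K^c)^c$ is the set of points $x_1 \in \Sl_1$ whose \emph{entire} fiber $\{x_1\}\times\Sl_2$ lies inside $\K$: indeed $\Pi_1\K^c = \{x_1 : \exists x_2,\ (x_1,x_2)\notin\K\}$, so its complement is $\{x_1 : \forall x_2,\ (x_1,x_2)\in\K\}$. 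In other words, if $x_1 \in (\Pi_1\K^c)^c$ then knowing only the first component already guarantees membership in $\K$ regardless of the second.

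First I would take an arbitrary $x_0 = \tr{[x_0^1\ x_0^2]}$ with $x_0^1 \in \Viab_{[0,\tau]}((\Pi_1\K^c)^c,\U)$ and $x_0^2 \in \Sl_2$. Unwinding Definition~\ref{D:viab} applied to subsystem~1 (which, by the block-diagonal structure of \eqref{E:sys_decoupled_genericB}, evolves as $\dot x_1 = A_1 x_1 + B_1 u$ independently of $x_2$), there is an input $u(\cdot)\in\Ul_{[0,\tau]}$ such that the subsystem-1 trajectory satisfies $\xi^1_{x_0^1,u}(t) \in (\Pi_1\K^c)^c$ for all $t\in[0,\tau]$. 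I would then feed this same $u(\cdot)$ into the full system \eqref{E:sys_decoupled_genericB}. By the Lemma preceding Corollary~\ref{Cor:proj_traj_implies_traj_proj}, $\Pi_1\xi_{x_0,u}(t) = \xi^1_{\Pi_1 x_0,u}(t) = \xi^1_{x_0^1,u}(t) \in (\Pi_1\K^c)^c$ for every $t$. By the fiber characterization above, $\Pi_1\xi_{x_0,u}(t)\in(\Pi_1\K^c)^c$ forces $\xi_{x_0,u}(t)\in\K$ no matter what $\Pi_2\xi_{x_0,u}(t)$ is. Hence $x_0\in\Viab_{[0,\tau]}(\K,\U)$, and since $x_0$ was arbitrary in the first piece, that piece is contained in the kernel. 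The second piece is handled identically with the roles of the indices swapped, using that subsystem~2 is $\dot x_2 = A_2 x_2 + B_2 u$ (the $\Delta_{21}$ term is absent here because $A$ is block diagonal in \eqref{E:sys_decoupled_genericB}), and the union inclusion \eqref{E:incl_generic_1} follows.

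The only subtle point — and the place I would be most careful — is the set-complement bookkeeping: verifying precisely that $(\Pi_i\K^c)^c = \{x_i : \{x_i\}\times\Sl_{3-i}\subseteq\K\}$, i.e.\ that the "double complement through projection" operation extracts exactly the fibers of $\K$ that are full. This is a routine but easy-to-misstate fact about projections of complements, and it is what makes the one-sided inclusion work even though the input is non-disjoint: we never need the second component to cooperate, because for these initial conditions the constraint is insensitive to it. No claim is made (nor true) that this under-approximation is tight; it is generally a strict, and often poor, inner bound, which is why the paper calls it "ill."
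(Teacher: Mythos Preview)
Your proof is correct and takes a somewhat different route from the paper's. The paper argues by a chain of logical equivalences and De~Morgan manipulations: it rewrites membership in the union as a disjunction, negates to a conjunction, then uses the implication $\xi_{x_0,u}(t)\in\K^c \Rightarrow \bigl(\xi^1\in\Pi_1\K^c \wedge \xi^2\in\Pi_2\K^c\bigr)$ (which is just Corollary~\ref{Cor:proj_traj_implies_traj_proj} applied to $\K^c$) inside the complement to push through. Your argument instead isolates the geometric content up front by characterizing $(\Pi_i\K^c)^c$ as exactly those $x_i$ whose fiber $\{x_i\}\times\Sl_{3-i}$ lies entirely in $\K$, and then feeds the witnessing input for one subsystem into the full system directly. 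Both hinge on the same underlying fact---that $(\Pi_i\K^c)^c$ is the ``full-fiber'' set---but the paper leaves this implicit in the symbolic chain, while you make it explicit. Your version is arguably more transparent about \emph{why} the non-disjoint input causes no trouble here (the second component is simply irrelevant once the first is confined to the full-fiber region); the paper's version is terser and stays purely in set-builder logic without naming the fiber interpretation.
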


\begin{proof}
    \begin{align*}
        x_0\in &\left( \Viab_{[0,\tau]}((\Pi_1 \K^c)^c, \U) \times \Sl_2 \right) \cup \left( \Sl_1 \times \Viab_{[0,\tau]}((\Pi_2 \K^c)^c,\U)\right)\\
        &\Leftrightarrow \exists u(\cdot),\, \forall t, \, \xi^1_{\Pi_1 x_0,u}(t) \in (\Pi_1 \K^c)^c \; \vee \; \exists u(\cdot),\, \forall t, \, \xi^2_{\Pi_2 x_0,u}(t) \in (\Pi_2 \K^c)^c\\
        &\Leftrightarrow \left(\forall u(\cdot),\, \exists t,\, \xi^1_{\Pi_1 x_0,u}(t) \in \Pi_1 \K^c \; \wedge \; \forall u(\cdot),\, \exists t,\, \xi^2_{\Pi_2 x_0,u}(t) \in \Pi_2 \K^c \right)^c\\
        &\Rightarrow \left(\forall u(\cdot),\, \exists t,\, \left( \xi^1_{\Pi_1 x_0,u}(t) \in \Pi_1 \K^c \; \wedge \; \xi^2_{\Pi_2 x_0,u}(t) \in \Pi_2 \K^c \right) \right)^c\\
        &\Rightarrow \left(\forall u(\cdot),\, \exists t,\, \xi_{x_0,u}(t) \in \K^c \right)^c\\
        &\Rightarrow \exists u(\cdot),\, \forall t,\, \xi_{x_0,u}(t) \in \K\\
        &\Rightarrow x_0 \in \Viab_{[0,\tau]}(\K,\U).
    \end{align*}
\end{proof}

\begin{defn}[Ill-Posedness]
    We say that a viability problem is ill-posed if the state constraint is empty.
\end{defn}

\begin{prop}[Ill-Posed Approximation]\label{Prop:meaningless_over-approx_GenericB}
    When $\K$ is a bounded subset of $\X$ (which is the case in most envelope protection problems) the approximation in Lemma~\ref{Lem:ill-under-approximation} is ill-posed.
\end{prop}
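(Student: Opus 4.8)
The plan is to show that both viability problems appearing on the right-hand side of \eqref{E:incl_generic_1}, namely those with state constraints $(\Pi_1 \K^c)^c$ and $(\Pi_2 \K^c)^c$, have \emph{empty} constraint sets, so that by the definition of ill-posedness just stated the approximation is ill-posed. The crux is the elementary observation that projecting the complement of a bounded set onto either subspace recovers the whole subspace.

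First I would fix $i\in\{1,2\}$ and recall that, since we partition \eqref{E:general_LTI_system} into two genuine subsystems, $1\le k\le n-1$, so both $\Sl_1$ and $\Sl_2$ are nontrivial (dimension at least one). Because $\K$ is bounded there is some $M>0$ with $\K\subseteq\{x\in\X:\norm{x}\le M\}$. Given an arbitrary $y\in\Sl_i$, I would exhibit a point $x\in\X$ with $\Pi_i x = y$ and $\norm{x}>M$: keep the $\Sl_i$-component equal to $y$ and inflate the $\Sl_j$-component ($j\neq i$), which is possible precisely because $\Sl_j$ is nontrivial. Any such $x$ lies in $\K^c$, hence $y\in\Pi_i\K^c$. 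As $y\in\Sl_i$ was arbitrary, $\Pi_i\K^c=\Sl_i$, and therefore $(\Pi_i\K^c)^c = \Sl_i\setminus\Sl_i = \emptyset$. Applying this for $i=1$ and $i=2$ shows both state constraints in Lemma~\ref{Lem:ill-under-approximation} are empty, which is exactly the assertion that the associated viability problems are ill-posed.

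I would close by spelling out why this renders the bound \eqref{E:incl_generic_1} vacuous: the viability kernel of the empty set is empty (no initial state even lies in $\emptyset$), so $\Viab_{[0,\tau]}((\Pi_1\K^c)^c,\U)=\Viab_{[0,\tau]}((\Pi_2\K^c)^c,\U)=\emptyset$, and the right-hand side collapses to $(\emptyset\times\Sl_2)\cup(\Sl_1\times\emptyset)=\emptyset$. There is essentially no obstacle here; the only point requiring care is ruling out the degenerate case $k=0$ or $k=n$, which is excluded by the standing assumption that the partition yields two subsystems. Everything else is a one-line set-theoretic computation.
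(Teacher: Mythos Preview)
Your proof is correct and follows exactly the same approach as the paper, which simply notes that for any bounded set $\K$ one has $(\Pi_i \K^c)^c = \emptyset$. You have fleshed out the details (why $\Pi_i \K^c = \Sl_i$ when the complementary subspace is nontrivial, and why this makes the right-hand side of \eqref{E:incl_generic_1} collapse to $\emptyset$), but the core idea is identical.
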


The proof should be clear from the fact that for any bounded set $\K$ we have $(\Pi_i \K^c)^c = \emptyset$. 

\subsection{Suitable Structures for Decomposition}\label{S:suitable_structures}

Consider a system with block-diagonal $A$-matrix and a $B$-matrix that ensures a disjoint input across the subsystems, for instance
\begin{equation}\label{E:sys_decoupled_DisjointB}
    \begin{bmatrix}
        \dot{x}_1\\
        \dot{x}_2
    \end{bmatrix}
    =\begin{bmatrix}
        A_1 & \mathbf{0}\\
        \mathbf{0} & A_2
    \end{bmatrix}
    \begin{bmatrix}
        x_1\\
        x_2
    \end{bmatrix}
    +
    \begin{bmatrix}
        B_1 & \mathbf{0}\\
        \mathbf{0} & B_2
    \end{bmatrix}u, \quad
    u= \begin{bmatrix}
        u_1\\
        u_2
    \end{bmatrix}\in \U,
\end{equation}
when $\U = \U_1 \times \U_2$.

\begin{ass}\label{Ass:K_axis-aligned}
    The set $\K$ is a cross-product of two (arbitrarily-shaped) sets in $\Sl_i$.
\end{ass}

\begin{cor}
    Under Assumption~\ref{Ass:K_axis-aligned} the projection of a trajectory is contained in a set if and only if the subsystem trajectories are contained in the projection of the set:
    \begin{equation}
        \xi_{x_0,u}(t) \in \K \Leftrightarrow \xi^i_{\Pi_i x_0,u_i}(t) \in \Pi_i \K.
    \end{equation}
\end{cor}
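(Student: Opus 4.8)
The plan is to prove both implications in the equivalence for the specific system~\eqref{E:sys_decoupled_DisjointB}, leaning on the earlier Lemma (projection of a trajectory is a subsystem trajectory) and Corollary~\ref{Cor:proj_traj_implies_traj_proj} for the forward direction, and on Assumption~\ref{Ass:K_axis-aligned} together with the disjointness of the input to recover the converse.

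First I would establish the refinement of the earlier Lemma needed here: because the $B$-matrix in~\eqref{E:sys_decoupled_DisjointB} is block-diagonal and $\U = \U_1 \times \U_2$, the $i$-th subsystem is driven only by the component $u_i$, so the projected trajectory satisfies $\Pi_i \xi_{x_0,u}(t) = \xi^i_{\Pi_i x_0, u_i}(t)$ — note the $u_i$ rather than the full $u$. This is immediate from the block structure: $\Pi_i(Ax+Bu) = A_i \Pi_i x + B_i u_i$. The forward direction ($\Rightarrow$) is then exactly Corollary~\ref{Cor:proj_traj_implies_traj_proj} specialized to this system: if $\xi_{x_0,u}(t)\in\K$ then $\Pi_i\xi_{x_0,u}(t)\in\Pi_i\K$, i.e.\ $\xi^i_{\Pi_i x_0,u_i}(t)\in\Pi_i\K$. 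This holds for any $\K$ and requires neither the assumption nor disjointness.

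For the converse ($\Leftarrow$), suppose $\xi^i_{\Pi_i x_0, u_i}(t) \in \Pi_i \K$ for both $i=1,2$. By the refined Lemma this says $\Pi_1 \xi_{x_0,u}(t) \in \Pi_1 \K$ and $\Pi_2 \xi_{x_0,u}(t) \in \Pi_2 \K$. Since $\xi_{x_0,u}(t) = \bigl[\,\Pi_1\xi_{x_0,u}(t)\;\;\Pi_2\xi_{x_0,u}(t)\,\bigr]^{\textsl{T}}$ and, by Assumption~\ref{Ass:K_axis-aligned}, $\K = (\Pi_1\K)\times(\Pi_2\K)$, membership in each projection of $\K$ is equivalent to membership of the full vector in $\K$. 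Hence $\xi_{x_0,u}(t)\in\K$, which closes the equivalence. I would also remark that the hypothesis implicitly uses the \emph{same} admissible $u=[u_1\;u_2]^{\textsl{T}}\in\U$ on both sides, which is legitimate precisely because $\U=\U_1\times\U_2$ — the two subsystem inputs can be chosen independently and then stacked.

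The only subtle point — the ``main obstacle,'' though it is more a bookkeeping matter than a genuine difficulty — is making sure the cross-product structures line up: one needs $\U = \U_1\times\U_2$ so that a pair of subsystem controls assembles into a global control, and one needs $\K = (\Pi_1\K)\times(\Pi_2\K)$ (Assumption~\ref{Ass:K_axis-aligned}) so that the converse implication does not fail the way it did for~\eqref{E:sys_decoupled_genericB}. With a generic (non-product) $\K$ the vector $\xi_{x_0,u}(t)$ could have each coordinate block in the corresponding projection of $\K$ without the vector itself lying in $\K$; the assumption is exactly what rules this out. I would keep the write-up to a short chain of iff's mirroring the proof of Proposition~\ref{Prop:viab_over-approx_GenericB}, with the arrows now reversible.
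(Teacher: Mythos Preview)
Your proposal is correct. The paper actually states this Corollary without proof, treating it as immediate from the preceding Lemma, Corollary~\ref{Cor:proj_traj_implies_traj_proj}, and Assumption~\ref{Ass:K_axis-aligned}; your write-up supplies precisely those details (block-diagonal $B$ gives $\Pi_i\xi_{x_0,u}=\xi^i_{\Pi_i x_0,u_i}$, forward direction from Corollary~\ref{Cor:proj_traj_implies_traj_proj}, converse from $\K=\Pi_1\K\times\Pi_2\K$), so there is nothing to compare.
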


\begin{thm}\label{Thm:Viab_equals_projectionViabs}
    The viability kernel of $\K$ under \eqref{E:sys_decoupled_DisjointB} can be computed exactly using subsystem kernels:
    \begin{align}
        \Viab_{[0,\tau]}(\K,\U) &= \Viab_{[0,\tau]}(\Pi_1 \K, \U_1) \times \Viab_{[0,\tau]}(\Pi_2 \K,\U_2).\label{E:incl_decoupled_1}
    \end{align}
\end{thm}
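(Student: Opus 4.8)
The plan is to prove the two inclusions separately, using the machinery already assembled. For the forward inclusion $\Viab_{[0,\tau]}(\K,\U) \subseteq \Viab_{[0,\tau]}(\Pi_1 \K, \U_1) \times \Viab_{[0,\tau]}(\Pi_2 \K,\U_2)$, I would start exactly as in Proposition~\ref{Prop:viab_over-approx_GenericB}: take $x_0$ in the full-order kernel, so there is an admissible $u(\cdot) \in \Ul_{[0,\tau]}$ keeping $\xi_{x_0,u}(t) \in \K$ for all $t$. Under Assumption~\ref{Ass:K_axis-aligned} and the disjoint-input structure of \eqref{E:sys_decoupled_DisjointB}, the preceding Corollary gives $\xi^i_{\Pi_i x_0, u_i}(t) \in \Pi_i \K$, where $u_i(\cdot)$ is the (measurable, admissible in $\U_i$) restriction of $u(\cdot)$ to the coordinates acting on subsystem $i$. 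Hence $\Pi_i x_0 \in \Viab_{[0,\tau]}(\Pi_i \K, \U_i)$ for $i=1,2$, and therefore $x_0$ lies in the cross-product.

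The reverse inclusion is the part that genuinely uses disjointness and is where I would be most careful. Given $\Pi_1 x_0 \in \Viab_{[0,\tau]}(\Pi_1 \K,\U_1)$ and $\Pi_2 x_0 \in \Viab_{[0,\tau]}(\Pi_2 \K,\U_2)$, pick witnesses $u_1(\cdot) \in \Ul^{1}_{[0,\tau]}$ and $u_2(\cdot) \in \Ul^{2}_{[0,\tau]}$ driving each subsystem trajectory into $\Pi_i \K$. The key structural point is that, because the input is disjoint and $\U = \U_1 \times \U_2$, the concatenation $u(\cdot) := (u_1(\cdot), u_2(\cdot))$ is itself an admissible control in $\Ul_{[0,\tau]}$, and — crucially — feeding $u$ into the full system produces exactly the same subsystem trajectories, since $u_1$ does not act on subsystem~$2$ and vice versa (and $A$ is block-diagonal so there is no state coupling either). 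Thus $\xi^i_{\Pi_i x_0, u_i}(t) = \Pi_i \xi_{x_0, u}(t) \in \Pi_i \K$ for each $i$, and by the ``if'' direction of the Corollary (the bidirectional implication promised after Corollary~\ref{Cor:proj_traj_implies_traj_proj}) we get $\xi_{x_0,u}(t) \in \K$ for all $t \in [0,\tau]$. Therefore $x_0 \in \Viab_{[0,\tau]}(\K,\U)$.

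I expect the main obstacle to be bookkeeping rather than a deep difficulty: one must be precise that the ``restriction'' and ``concatenation'' operations on controls are well-defined measurable maps between $\Ul_{[0,\tau]}$ and $\Ul^{1}_{[0,\tau]} \times \Ul^{2}_{[0,\tau]}$, which is where Definition~\ref{Def:disjoint_input} (in particular the condition $\U = \U_1 \times \U_2$, and the fact that no input component acts simultaneously on both subsystems) is doing the real work — this is the exact feature whose absence produced the counterexample after Proposition~\ref{Prop:viab_over-approx_GenericB}. Once that is granted, both inclusions are short chains of equivalences and implications structurally identical to the proofs of Proposition~\ref{Prop:viab_over-approx_GenericB} and Lemma~\ref{Lem:ill-under-approximation}, and the whole argument can be written as a single display tracking $x_0 \in \Viab_{[0,\tau]}(\K,\U) \Leftrightarrow \bigl(\Pi_1 x_0 \in \Viab_{[0,\tau]}(\Pi_1\K,\U_1) \wedge \Pi_2 x_0 \in \Viab_{[0,\tau]}(\Pi_2\K,\U_2)\bigr)$ with each step justified by the Corollary and by admissibility of restriction/concatenation of controls.
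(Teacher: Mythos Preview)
Your proposal is correct and follows essentially the same route as the paper: the paper presents exactly the single chain of biconditionals you anticipate at the end, invoking Assumption~\ref{Ass:K_axis-aligned} for the trajectory/projection equivalence and the disjoint-input condition to split and recombine the existential over $u$ into independent existentials over $u_1$ and $u_2$. Your discussion of restriction/concatenation of controls just makes explicit the step the paper labels ``via disjoint input''.
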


\begin{proof}
    \begin{align*}
        x_0 \in \Viab_{[0,\tau]}(\K,\U) &\Leftrightarrow  \exists u(\cdot),\, \forall t, \, \xi_{x_0,u}(t)\in \K\\
        &\Leftrightarrow \exists u(\cdot),\, \forall t,\, \left( \xi^1_{\Pi_1 x_0,u}(t) \in \Pi_1 \K \; \wedge \; \xi^2_{\Pi_2 x_0,u}(t) \in \Pi_2 \K \right) &&\text{(by Assumption~\ref{Ass:K_axis-aligned})}\\
        &\Leftrightarrow \exists u_1(\cdot),\, \forall t,\, \xi^1_{\Pi_1 x_0,u_1}(t) \in \Pi_1 \K \; \wedge \; \exists u_2(\cdot),\, \forall t,\, \xi^2_{\Pi_2 x_0,u_2}(t) \in \Pi_2 \K &&\text{(via disjoint input)}\\
        &\Leftrightarrow \Pi_1 x_0 \in \Viab_{[0,\tau]}(\Pi_1 \K, \U_1) \; \wedge \; \Pi_2 x_0 \in \Viab_{[0,\tau]}(\Pi_2 \K,\U_2)\\
        &\Leftrightarrow x_0 \in \Viab_{[0,\tau]}(\Pi_1 \K, \U_1) \times \Viab_{[0,\tau]}(\Pi_2 \K,\U_2).
    \end{align*}


\end{proof}






\begin{rem}
    The use of any decomposition technique for correct (conservative) approximation of the viability kernel is contingent on satisfaction of Assumption~\ref{Ass:K_axis-aligned} as shown previously. When $\K$ does not satisfy this assumption, it can be under-approximated by the union of direct-product sets. The viability kernel can be computed for each set separately in lower dimensions (which increases the computational complexity only linearly in the number of sets). The union of the resulting kernels in full dimensions under-approximates the true viability kernel. Parallelization of viability calculations in each subspace could further reduce the computational time.
\end{rem}




In general, we may not be able to simultaneously obtain a decoupled $A$-matrix and a disjoint input. Instead, suppose that the system is of the form
%
\begin{equation}\label{E:sys_unicoupled_DisjointB}
    \begin{bmatrix}
        \dot{x}_1\\
        \dot{x}_2
    \end{bmatrix}
    =\begin{bmatrix}
        A_1 & \mathbf{0}\\
        \Delta & A_2
    \end{bmatrix}
    \begin{bmatrix}
        x_1\\
        x_2
    \end{bmatrix}
    +
    \begin{bmatrix}
        B_1\\
        \mathbf{0}
    \end{bmatrix}u, \quad u\in \U
\end{equation}
which automatically ensures that the input $u$ is disjoint across the subsystems regardless of the shape of $\U$ since one of the two (unidirectionally coupled) subsystems is ETUC (Remark~\ref{rem:shape_of_U}). This system can be rewritten as
\begin{equation}\label{E:sys_unicoupled_DisjointB_disturbance}
    \begin{bmatrix}
        \dot{x}_1\\
        \dot{x}_2
    \end{bmatrix}
    =\begin{bmatrix}
        A_1 & \mathbf{0}\\
        \mathbf{0} & A_2
    \end{bmatrix}
    \begin{bmatrix}
        x_1\\
        x_2
    \end{bmatrix}
    +
    \begin{bmatrix}
        B_1\\
        \mathbf{0}
    \end{bmatrix}u
    +
    \begin{bmatrix}
        \mathbf{0}\\
        \Delta
    \end{bmatrix} x_1, \quad u\in \U.
\end{equation}
The evolution of $x_1$ is completely independent of the evolution of $x_2$. Its effect on the lower subsystem, mapped through $\Delta$, can be viewed as an exogenous input to the lower subsystem, that takes values on the (possibly time-varying) subset $\V(\cdot)$ of the upper subspace $\Sl_1$. Treating this additional input in the worst-case fashion results in conservatism. Hence, define the following construct:

\begin{defn}[Discriminating Kernel]\label{Defn:disc}
    Consider a system with adversarial inputs: control $u(t) \in \U$ and disturbance $v(t) \in \V(t)$, where $\V \colon [0,\tau] \to 2^{\Real^{p_v}}$ is a point-wise convex and compact set-valued map from $[0,\tau]$ to $\Real^{p_v}$. Let
    \begin{equation*}
        \Vl_{[0,t]}:=\{v \colon [0,t] \to \Real^{p_v} \;\text{measurable}, \;\, v(s)\in \V(s) \;\text{a.e.} \; s\in[0,t] \}.
    \end{equation*}
    To be conservative, we assume non-anticipative strategies $\rho$ for one of the inputs.\footnote{A map $\rho \colon \Vl_{[0,t]} \to \Ul_{[0,t]}$ is non-anticipative for $u$ if for every $v(\cdot), v'(\cdot) \in \Vl_{[0,t]}$, $v(s) = v'(s)$ implies $\rho[v](s) = \rho[v'](s)$ a.e.\ $s \in [0,t]$ \cite{Evans_Souganidis_1984}. Note that for linear systems the Isaac's condition holds \cite{MBT05}, and therefore it does not matter which input is selected to play with non-anticipative policies.} The finite-horizon discriminating kernel of $\K$ is the set of initial states for which there exists a control such that the trajectories emanating from those states remain in $\K$ for every disturbance for all time $t\in[0,\tau]$:
    \begin{equation*}\label{E:disc_setbuilder}
        \Disc_{[0,\tau]}(\K,\U,\V(\cdot)):=\bigl\{ x_0 \in \K \mid  \exists  \rho \colon \Vl_{[0,\tau]} \to \Ul_{[0,\tau]}, \, \forall v(\cdot)\in \Vl_{[0,\tau]}, \, \forall  t\in [0,\tau], \, \xi_{x_0,\rho[v],v}(t) \in \K \bigr\}.
    \end{equation*}
\end{defn}


We will use a ``$*$'' subscript to distinguish a construct formed under \eqref{E:sys_unicoupled_DisjointB_disturbance} when $x_1$ for the lower subsystem is treated as an adversarial disturbance. 

\begin{lem}\label{Lem:disc_subset_viab}
    The viability kernel of a set $\K$ under \eqref{E:sys_unicoupled_DisjointB_disturbance} is a superset of the discriminating kernel of $\K$ when $x_1$ is treated as a worst-case disturbance (assumed to draw values from some time-varying set $\V(\cdot)$ point-wise convex and compact in $\Sl_1$) to the lower subsystem:
    \begin{equation}
        \Viab_{[0,\tau]}(\K,\U) \supseteq \Disc_{[0,\tau]}(\K,\U,\V(\cdot))_*.
    \end{equation}
\end{lem}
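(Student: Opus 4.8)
The plan is to show that any state in the discriminating kernel (computed under the worst-case treatment of $x_1$) is also in the viability kernel of the same system, by directly unpacking the two set-builder definitions. The key observation is that in system \eqref{E:sys_unicoupled_DisjointB_disturbance} the quantity $x_1(t) = \xi^1_{\Pi_1 x_0, u}(t)$ is not a free adversarial signal at all: it is a \emph{deterministic function} of the initial condition $\Pi_1 x_0$ and the chosen control $u(\cdot)$, since the upper subsystem $\dot x_1 = A_1 x_1 + B_1 u$ evolves autonomously. Hence, whatever the control does, the induced disturbance signal $v(t) := \Delta\,\xi^1_{\Pi_1 x_0,u}(t)$ is one \emph{particular} admissible signal in $\Vl_{[0,\tau]}$ (here one uses that $\V(\cdot)$ is chosen to contain the reachable values of $\Delta x_1$, i.e.\ the worst-case set is by construction large enough — this is exactly what the subscript ``$*$'' and the parenthetical in the lemma statement assert). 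So the discriminating kernel's guarantee against \emph{all} $v(\cdot)$ a fortiori covers the realized one.

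Concretely, I would argue as follows. Take $x_0 \in \Disc_{[0,\tau]}(\K,\U,\V(\cdot))_*$. By Definition~\ref{Defn:disc} there is a non-anticipative strategy $\rho \colon \Vl_{[0,\tau]} \to \Ul_{[0,\tau]}$ such that for every $v(\cdot) \in \Vl_{[0,\tau]}$ and every $t$, $\xi_{x_0,\rho[v],v}(t) \in \K$. Now define the control to be used for the viability claim by a fixed-point / self-consistency construction: because $\rho$ is non-anticipative and the $x_1$-dynamics depend only on past control values, the coupled pair (control $=\rho[v]$, disturbance $v = \Delta x_1$ with $x_1$ generated by that same control) admits a solution $u^\star(\cdot) \in \Ul_{[0,\tau]}$ with induced signal $v^\star(t) = \Delta\,\xi^1_{\Pi_1 x_0, u^\star}(t)$ and $u^\star = \rho[v^\star]$. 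Along this trajectory the disturbance fed into the discriminating-kernel guarantee is precisely the true $x_1$-coupling term, so $\xi_{x_0, u^\star, v^\star}(t) = \xi_{x_0,u^\star}(t) \in \K$ for all $t\in[0,\tau]$; therefore $x_0 \in \Viab_{[0,\tau]}(\K,\U)$.

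I expect the self-consistency step to be the main obstacle to make fully rigorous: one must argue that the pair $(u^\star, v^\star)$ actually exists, i.e.\ that the map $v \mapsto \Delta\,\xi^1_{\Pi_1 x_0, \rho[v]}(\cdot)$ has a fixed point in $\Vl_{[0,\tau]}$. Two clean ways to do this: (i) invoke the Isaac's condition / existence of a saddle point for linear systems, cited in the paper's own footnote (the footnote to Definition~\ref{Defn:disc} remarks that for linear systems it does not matter which input plays non-anticipatively), which lets one replace strategies by open-loop controls without loss; or (ii) build $u^\star$ incrementally on a partition of $[0,\tau]$, using non-anticipativity of $\rho$ to extend the already-determined control piece by piece, taking a limit as the mesh goes to zero and using compactness of $\Ul_{[0,\tau]}$ in the weak-$*$ topology together with continuous dependence of trajectories on $L^\infty$-weak inputs. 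Either route closes the gap; the rest of the proof is the routine logical unpacking above. A short alternative, if one is willing to lean on an earlier-stated principle, is simply: the discriminating kernel is monotone under shrinking the disturbance's freedom, and ``$x_1$ realized by the controller's own choice'' is the extreme case of least freedom, hence its viability kernel dominates — but I would still spell out the signal-substitution argument to keep the proof self-contained.
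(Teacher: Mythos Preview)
Your approach is the same as the paper's: unpack the definition of the discriminating kernel, then substitute the ``realized'' disturbance $v(t)=x_1(t)$ to recover a trajectory of the original system. The paper's proof is a four-line logical chain that goes directly from $\exists\rho,\forall v$ to $\exists u$ for the particular $v(t)=x_1(t)$, labeling that step ``a specific disturbance,'' and stops there.

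You are being more careful than the paper, not less. The self-consistency issue you flag --- that $u=\rho[v]$ while $v=x_1$ depends on $u$ through the upper subsystem --- is real, and the paper's one-line step does not address it explicitly. Your route~(i), invoking the Isaacs condition cited in the footnote to Definition~\ref{Defn:disc} so that $u$ may be taken open-loop, is the cleanest fix and is presumably what the authors have in mind; route~(ii) also works but is heavier than needed here. So: your argument is correct, matches the paper's strategy, and your discussion of the fixed-point step fills a gap the paper leaves implicit.
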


\begin{proof}
    Let $\hat{\xi}$ denote the trajectory of the system when $x_1$ is treated as a disturbance to the lower subsystem.
    \begin{align*}
        x_0 \in \Disc_{[0,\tau]}(\K,\U,\V(\cdot))_* &\Leftrightarrow \exists \rho[v](\cdot),\, \forall v(\cdot), \, \forall t, \, \hat{\xi}_{x_0,\rho[v],v}(t) \in \K\\
        & \Rightarrow \exists u(\cdot), \, \forall t,\, \hat{\xi}_{x_0,u,v(t)=x_1(t)}(t) \in \K &&\text{(a specific disturbance)}\\
        & \Rightarrow \exists u(\cdot), \, \forall t,\, \xi_{x_0,u}(t) \in \K\\
        & \Rightarrow x_0 \in \Viab_{[0,\tau]}(\K,\U).
    \end{align*}
\end{proof}

\begin{defn}[Invariance Kernel]
    Consider a system with a disturbance input $v(t) \in \V(t)$ as its only input, where $\V(\cdot)$ is defined as in Definition~\ref{Defn:disc}. The finite-horizon invariance kernel of a set $\K$ is the set of initial states that remain in $\K$ for every disturbance for all time $t\in[0,\tau]$:
    \begin{align*}
        \Inv_{[0,\tau]}(\K,\V(\cdot)) := \{ x_0 \in \K \mid \forall v(\cdot)\in \Vl_{[0,\tau]}, \, \forall t \in [0,\tau], \, \xi_{x_0,v}(t) \in \K \}.
    \end{align*}
\end{defn}

\begin{thm}[Main Decentralization Result]\label{Thm:Viab_supset_projectionViabs}
    The viability kernel of a set $\K$ under \eqref{E:sys_unicoupled_DisjointB} can be conservatively approximated using the subsystem viability/invariance kernels as
    \begin{align}
    \Viab_{[0,\tau]}(\K,\U) \supseteq \Viab_{[0,\tau]}(\Pi_1 \K, \U) &\times \Inv_{[0,\tau]}(\Pi_2 \K,\V(\cdot))_*\notag, \\
        \text{where} \quad \V \colon [0,\tau] &\to 2^{\Sl_1}; \;\,
                    t \mapsto \Viab_{[0,\tau-t]}(\Pi_1 \K, \U). \label{E:incl_unicouple_1}
    \end{align}
\end{thm}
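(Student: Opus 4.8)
The plan is to follow the chain-of-implications style of Theorem~\ref{Thm:Viab_equals_projectionViabs} and Lemma~\ref{Lem:disc_subset_viab}: take a point in the right-hand side, exhibit an explicit control for the full system \eqref{E:sys_unicoupled_DisjointB}, and track its two subsystem components separately, using the fact that the $x_1$-component evolves autonomously and feeds into the lower subsystem exactly as the disturbance in \eqref{E:sys_unicoupled_DisjointB_disturbance}. Throughout, Assumption~\ref{Ass:K_axis-aligned} is in force, so $\xi_{x_0,u}(t)\in\K \Leftrightarrow (\Pi_1\xi_{x_0,u}(t)\in\Pi_1\K \wedge \Pi_2\xi_{x_0,u}(t)\in\Pi_2\K)$.

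Concretely, I would fix $x_0=\tr{[x_0^1\;x_0^2]}$ with $x_0^1\in\Viab_{[0,\tau]}(\Pi_1\K,\U)$ and $x_0^2\in\Inv_{[0,\tau]}(\Pi_2\K,\V(\cdot))_*$, where $\V(t):=\Viab_{[0,\tau-t]}(\Pi_1\K,\U)$. By definition of the first kernel there is $u(\cdot)\in\Ul_{[0,\tau]}$ with $\xi^1_{x_0^1,u}(s)\in\Pi_1\K$ for all $s\in[0,\tau]$; this is the control I will use for the full system. The crux of the proof is to upgrade this to $\xi^1_{x_0^1,u}(t)\in\V(t)$ for every $t$: by time-invariance of \eqref{E:sys_unicoupled_DisjointB} and the semigroup/concatenation property of solutions, the shifted control $\tilde u(\cdot)=u(t+\cdot)$ keeps $\xi^1_{\xi^1_{x_0^1,u}(t),\tilde u}(s)=\xi^1_{x_0^1,u}(t+s)\in\Pi_1\K$ for all $s\in[0,\tau-t]$, hence $\xi^1_{x_0^1,u}(t)\in\Viab_{[0,\tau-t]}(\Pi_1\K,\U)=\V(t)$. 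Since $t\mapsto\xi^1_{x_0^1,u}(t)$ is absolutely continuous, the signal $v(\cdot):=\xi^1_{x_0^1,u}(\cdot)$ is measurable with $v(t)\in\V(t)$ for all $t$, i.e.\ $v(\cdot)\in\Vl_{[0,\tau]}$. Feeding this particular disturbance into the membership $x_0^2\in\Inv_{[0,\tau]}(\Pi_2\K,\V(\cdot))_*$ gives that the lower-subsystem trajectory driven by $v$ stays in $\Pi_2\K$ for all $t$; but under \eqref{E:sys_unicoupled_DisjointB_disturbance} that trajectory is precisely $\Pi_2\xi_{x_0,u}(\cdot)$, since the $x_2$-equation of the full system is $\dot x_2=A_2x_2+\Delta x_1$ with $x_1=\xi^1_{x_0^1,u}$. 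Together with $\Pi_1\xi_{x_0,u}(t)\in\Pi_1\K$ and Assumption~\ref{Ass:K_axis-aligned}, this yields $\xi_{x_0,u}(t)\in\K$ for all $t\in[0,\tau]$, so $x_0\in\Viab_{[0,\tau]}(\K,\U)$, which is the asserted inclusion.

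The bulk of this is routine set-membership bookkeeping; the two points that need genuine care are the semigroup step above and the \emph{well-definedness} of $\V(\cdot)$ as an admissible disturbance-set map in the sense of Definition~\ref{Defn:disc} (and the Invariance Kernel definition), i.e.\ point-wise convexity and compactness of $\Viab_{[0,\tau-t]}(\Pi_1\K,\U)$. Compactness follows from compactness of $\Pi_1\K$, convexity/compactness of $\U$, and the standard closure properties of finite-horizon viability kernels under LTI dynamics. Convexity is the real obstacle, since $\K$ is assumed only simply-connected and compact; I would either invoke the additional hypothesis that $\Pi_1\K$ is convex (so the viability kernel is convex), or replace $\V(t)$ by $\conv\V(t)$ — an over-approximation of the disturbance set that only shrinks $\Inv_{[0,\tau]}(\Pi_2\K,\cdot)_*$, so the inclusion is preserved a fortiori. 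Measurability of the chosen disturbance signal is immediate and needs no comment beyond continuity of $\xi^1_{x_0^1,u}$.
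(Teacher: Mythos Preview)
Your argument is correct and complete for the inclusion asserted in the theorem. It differs from the paper's proof in structure: the paper does not build the control and disturbance explicitly, but instead factors the right-hand side through the full-order discriminating kernel,
\[
\Viab_{[0,\tau]}(\Pi_1 \K,\U) \times \Inv_{[0,\tau]}(\Pi_2 \K,\V(\cdot))_* \;=\; \Disc_{[0,\tau]}(\K,\U,\V(\cdot))_*,
\]
by invoking Theorem~\ref{Thm:Viab_equals_projectionViabs} (applied to the system with both $u$ and $v$ disjoint), and then applies Lemma~\ref{Lem:disc_subset_viab} to land in $\Viab_{[0,\tau]}(\K,\U)$. Your route unwinds these two lemmas simultaneously: you pick $u(\cdot)$ from the upper viability kernel, use the semigroup argument to certify $\xi^1_{x_0^1,u}(t)\in\V(t)$ for all $t$, and then specialize the universal quantifier in the invariance kernel to that one disturbance. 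The paper's approach is more modular (it reuses earlier results verbatim), whereas yours is self-contained and makes transparent \emph{why} the time-varying choice $\V(t)=\Viab_{[0,\tau-t]}(\Pi_1\K,\U)$ is exactly right---the semigroup step is the conceptual heart, and the paper leaves it implicit inside Lemma~\ref{Lem:disc_subset_viab}.

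On the regularity of $\V(\cdot)$: the paper simply asserts that $\V(t)$ is convex and compact ``since $\Pi_1\K$ and $\U$ are convex and compact and the dynamics linear,'' so it is silently assuming convexity of $\Pi_1\K$ rather than just simple-connectedness. Your observation that this is an extra hypothesis is accurate, and your proposed remedy (replace $\V(t)$ by $\conv\V(t)$, which only shrinks the invariance kernel) is a clean way to make the statement hold without that assumption.
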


\begin{proof}
    We first show that the inclusion holds for any set $\D \subset \Sl_1$ in which $x_1$ takes value. Since both inputs (control $u$ and ``disturbance'' $v:=x_1 \in \D$) are disjoint across the two subsystems we have 
    \begin{align}
         \Viab_{[0,\tau]}(\Pi_1 \K,\U) \times \Inv_{[0,\tau]}(\Pi_2 \K,\D)_* &= \Disc_{[0,\tau]}(\Pi_1 \K,\U,\{0\})_* \times \Disc_{[0,\tau]}(\Pi_2 \K,\{0\},\D)_*\notag \\ &\overset{\text{Thm\ref{Thm:Viab_equals_projectionViabs}}}{=} \Disc_{[0,\tau]}(\K,\U,\D)_*. \label{E:inclusion_crossprod1}
    \end{align}
    With $\D=\V(\cdot)$, inclusion \eqref{E:incl_unicouple_1} follows from Lemma~\ref{Lem:disc_subset_viab}:
    \begin{equation*}
        \Viab_{[0,\tau]}(\Pi_1 \K,\U) \times \Inv_{[0,\tau]}(\Pi_2 \K,\V(\cdot))_* = \Disc_{[0,\tau]}(\K,\U,\V(\cdot))_* \subseteq \Viab_{[0,\tau]}(\K,\U).
    \end{equation*}

    Note that the set-valued map $\V(\cdot)$ at time $t$ is the finite-horizon viability kernel of the upper subsystem over the interval $[0,\tau-t]$. This map is continuous (it is both lower and upper semicontinuous (cf.\ \cite{aubin1991viability}) at every point in its domain) and non-decreasing \cite{cardaliaguet1999set} (i.e.\ $\V(t) \supseteq \V(s)$ $\forall t \in [s,\tau]$, $ s\in [0,\tau]$), with $\Pi_1\K$ being its upper-limit in the sense of Kuratowski (Definition~\ref{Def:kuratowski_up_lim}) as $t\to \tau^-$ and $\Viab_{[0,\tau]}(\Pi_1 \K, \U)$ its lower-limit as $t\to 0^+$. Furthermore, since $\Pi_1\K$ and $\U$ are convex and compact and the dynamics linear, the sets $\V(t)$ are also convex and compact at every $t$. From this we have that $\Inv_{[0,\tau-s]}(\Pi_2 \K, \V(s))$ is continuous, convex and compact for every $s$, and non-decreasing over $s\in [0,\tau]$ \cite{aubin1991viability}.

    We use these statements to argue that a digression from the formulation in \eqref{E:incl_unicouple_1} loses its sufficiency to guarantee an under-approximation in the sense that if the uncertainty set is assumed to be a subset of $\V(t)$ for any $t$ then the cross-product may not generate an under-approximation of the viability kernel: Consider a set-valued map $\widetilde{\V}(\cdot)$ s.t.\ $\exists \hat{t}\in [0,\tau]$, $\widetilde{\V}(\hat{t}) \subseteq \V(\hat{t})$ (e.g.\ a constant set $\Viab_{[0,\tau]}(\Pi_1 \K, \U)$ $\forall t$). It is clear from \eqref{E:inclusion_crossprod1} that
    \begin{align}\label{E:superset_if_otherV2}
         \Viab_{[0,\tau]}(\Pi_1 \K,\U) \times \Inv_{[0,\tau]}(\Pi_2 \K,\widetilde{\V}(\cdot))_* \supseteq \Disc_{[0,\tau]}(\K,\U,\V(\cdot))_*
    \end{align}
    since for any set $\C$, $\Inv_{[0,\tau-\hat{t}]}(\C,\V(\hat{t}))_* \subseteq \Inv_{[0,\tau-\hat{t}]}(\C,\widetilde{\V}(\hat{t}))_*$ and therefore $\Inv_{[0,\tau-s]}(\C,\V(\cdot))_* \subseteq \Inv_{[0,\tau-s]}(\C,\widetilde{\V}(\cdot))_*$ $\forall s \in [0,\hat{t}]$. There is no guarantee that this superset in \eqref{E:superset_if_otherV2} is a subset of $\Viab_{[0,\tau]}(\K,\U)$; Lemma~\ref{Lem:disc_subset_viab} is no longer applicable. On the flip side, if $\widetilde{\V}(\cdot)$ is such that $\widetilde{\V}(t) \supseteq \V(t)$ for any $t \in [0,\tau]$ (e.g.\ a constant set $\Pi_1 \K$ $\forall t$), then an excessively conservative under-approximation is obtained.
\end{proof}

\subsection{Sub-Interval Formulation and Decentralized Algorithm}

In practice, we can perform the analysis over sub-intervals (similarly to \cite{G05}) while still maintaining conservatism. During each sub-interval the set $\V(\cdot)$ is sampled and kept constant in \emph{backward time}. Such sub-interval analysis is possible via the semi-group property in both subspaces as well as the following results in $\Sl_2$.


\begin{prop}\label{Prop:sub_interval_analysis}
    For $N:=\tau/q$, $N\in \mathbb{N}$ time steps each of length $q\in \Real^+$ we have that
    \begin{equation}
        \bigcap_{i=0}^{N-1} \C_i \subseteq \Inv_{[0,\tau]}(\Pi_2 \K,\V(\cdot))_*
    \end{equation}
    %
    where $\C_i = \Inv_{[0,q]}(\C_{i+1},\V((i+1)q))_*$ with $\C_N=\Pi_2 \K$.
\end{prop}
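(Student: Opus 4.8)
The plan is to establish the inclusion by a backward-in-time induction on the sub-intervals, showing that the finite intersection $\bigcap_{i=0}^{N-1}\C_i$ is contained in the invariance kernel over the full horizon $[0,\tau]$. The key structural fact we exploit is the semi-group (dynamic-programming) property of the invariance kernel in $\Sl_2$: a state is invariant in $\Pi_2\K$ over $[0,\tau]$ against the disturbance map $\V(\cdot)$ if and only if, running backward, it stays in $\Pi_2\K$ over the last sub-interval $[(N-1)q,\tau]$ against the (sampled, frozen) disturbance set on that sub-interval \emph{and} lands at time $(N-1)q$ in the invariance kernel of the remaining horizon. Since we sample $\V(\cdot)$ and hold it constant over each sub-interval in backward time, and since $\V(\cdot)$ is non-decreasing, freezing $\V$ at the value $\V((i+1)q)$ over the sub-interval $[iq,(i+1)q]$ over-approximates the true (pointwise smaller) disturbance set on that sub-interval, so invariance against the frozen set implies invariance against the true map — this is the conservatism direction we want.

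First I would set up notation: write $\C_N=\Pi_2\K$ and define $\C_i=\Inv_{[0,q]}(\C_{i+1},\V((i+1)q))_*$ for $i=N-1,\dots,0$, exactly as in the statement. Second, I would prove the one-step claim: if $x\in\C_i$ then the trajectory of the lower subsystem starting at $x$, under some control and against every disturbance drawn from $\V(\cdot)$ restricted to $[0,q]$ (which is pointwise contained in $\V((i+1)q)$ by monotonicity of $\V$ and the fact that, in backward time on sub-interval $i$, the relevant horizons-to-go lie in $[iq,(i+1)q]$), remains in $\C_{i+1}$ throughout $[0,q]$ and in particular arrives in $\C_{i+1}$ at time $q$. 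Third, I would chain these one-step statements: if $x_0\in\bigcap_{i=0}^{N-1}\C_i$, then in particular $x_0\in\C_0$, so over $[0,q]$ the trajectory stays in $\C_1\subseteq\Pi_2\K$ and reaches $\C_1$ at time $q$; since $\C_1\subseteq\C_1$ trivially and we have also assumed membership in $\C_1$ is not what is needed — rather we re-apply the one-step claim with $\C_1$ in the role of the initial set, using that $\C_1=\Inv_{[0,q]}(\C_2,\V(2q))_*$, to continue over $[q,2q]$, and so on. The intersection hypothesis is in fact stronger than needed for the chaining if the $\C_i$ are nested, but in general it is exactly what guarantees that at each concatenation point the trajectory lies in the set required to restart the recursion; I would verify that the concatenation of the sub-interval controls (a non-anticipative strategy on each piece) yields a valid non-anticipative strategy on $[0,\tau]$ by the semi-group property, so the concatenated trajectory witnesses $x_0\in\Inv_{[0,\tau]}(\Pi_2\K,\V(\cdot))_*$.

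The main obstacle I anticipate is making the disturbance-set comparison on each sub-interval fully rigorous: one must argue that, when the lower-subsystem invariance problem over $[0,\tau]$ is decomposed in backward time, the disturbance feeding sub-interval $i$ (i.e.\ the values of $x_1$ over real time $[iq,(i+1)q]$, equivalently horizon-to-go in $[\tau-(i+1)q,\tau-iq]$) is constrained to lie in $\V(s)$ for $s$ in that range, and since $\V$ is non-decreasing the largest such set is $\V((i+1)q)$ — wait, I must be careful about the direction of the horizon parametrization in the definition $\V(t)=\Viab_{[0,\tau-t]}(\Pi_1\K,\U)$; as $t$ increases, $\tau-t$ decreases, and $\V$ is stated to be non-decreasing in $t$, so $\V((i+1)q)\supseteq\V(s)$ for $s\le(i+1)q$. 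Thus freezing at $\V((i+1)q)$ enlarges the adversary's options, and invariance against the larger set implies invariance against the true pointwise-smaller map; this gives $\C_i\subseteq\Inv_{[iq,(i+1)q]}(\C_{i+1},\V(\cdot))_*$ after the obvious time-shift. Once this comparison is nailed down, the rest is a routine concatenation argument using the semi-group property already invoked in the preceding discussion and the non-anticipativity bookkeeping from Definition~\ref{Defn:disc}.
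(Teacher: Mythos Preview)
Your approach is essentially the same as the paper's: both rely on the monotonicity of $\V(\cdot)$ to justify that freezing the disturbance at $\V((i+1)q)$ on the $i$th sub-interval over-approximates the true time-varying map, then use the semi-group property to concatenate the sub-interval invariance statements (the paper also explicitly records the nesting $\C_i\subseteq\C_{i+1}\subseteq\C_N$, which makes the intersection equal to $\C_0$ and streamlines the chaining). One small slip: the invariance kernel here is for the ETUC lower subsystem, so there is no control input and no non-anticipative strategy to splice---only disturbances---so drop the references to ``some control'' and strategy concatenation.
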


\begin{proof}
    Notice that since $\{\V(t)\}_{t=0}^{\tau}$ is a non-decreasing sequence of compact and convex sets with $\V(t)\subset \Sl_1 =: \Real^{p_v}$ we have that for a fixed $q$, for every $i$, $\V((i+1)q)\supseteq \V(t)$ $\forall t\in [0,(i+1)q]$. Using this, the fact that $\C_i \subseteq \C_{i+1} \subseteq \C_N$ $\forall i$, and the semi-group property we have
    \begin{align*}
        x_0 \in \bigcap\nolimits_{i=0}^{N-1} \C_i
        &\Leftrightarrow \forall i \in [0,N-1], \, \forall v_i(\cdot)\in \{v_i\colon [0,q] \to \Real^{p_v} \, \text{measurable},\\ &\qquad \qquad \, v_i(s) \in \V((i+1)q) \; \text{a.e.\ $s\in [0,q]$}  \},\, \forall t \in [0, q], \,  \hat{\xi}^2_{x_0,v_i}(t) \in \C_{i+1}\\
        &\Rightarrow \forall i \in [0,N-1], \, \forall v_i(\cdot)\in \{v_i\colon [iq,(i+1)q] \to \Real^{p_v} \, \text{measurable},\\ &\qquad \qquad \, v_i(s) \in \V(s) \; \text{a.e.\ $s\in [iq,(i+1)q]$}  \},\, \forall t \in [iq, (i+1)q], \,  \hat{\xi}^2_{x_0,v_i}(t) \in \C_{i+1}\\
        %
        %
        %
        &\Rightarrow \forall v(\cdot)\in \{v\colon [0,\tau] \to \Real^{p_v} \, \text{measurable}, \, v(t) \in \V(t) \, \text{a.e.} \}, \, \forall t \in [0,\tau], \, \hat{\xi}^2_{x_0,v}(t) \in \C_N\\
        &\Rightarrow x_0 \in \Inv_{[0,\tau]}(\Pi_2\K,\V(\cdot))_*,
    \end{align*}
    where $v$ is the concatenation of functions $v_i$ over $[0,\tau]$.

\end{proof}

In the limit this set converges to the invariance kernel with unsampled input set.

\begin{defn}[Kuratowski upper and lower limits \cite{cardaliaguet1999set}]\label{Def:kuratowski_up_lim}
    Let $\{\A(s)\}_{s\in S}$ be a sequence of subsets in a metric space $(E,d)$. The upper-limit of $\A(s)$ as $s \to \hat{s}$ is
    \begin{equation*}
        \Limsup_{s\to \hat{s}} \A(s) := \left\{ x \in E \mid \liminf_{s\to \hat{s}} d(x,\A(s)) = 0 \right\},
    \end{equation*}
    where $d(x,\A):= \inf_{a \in \A} d(x,a)$. Its lower-limit is
    \begin{equation*}
        \Liminf_{s\to \hat{s}} \A(s) := \left\{ x \in E \mid \lim_{s\to \hat{s}} d(x,\A(s)) = 0 \right\}.
    \end{equation*}
\end{defn}

\begin{prop}\label{Prop:lim_of_intersectionSi}
    Denote by $\C_\cap(q):= \bigcap_{i=0}^{N-1} \C_i$ the intersection of $N=\tau/q$ sub-interval invariance kernels from Proposition~\ref{Prop:sub_interval_analysis}. For the sequence of subsets $\{\C_\cap(q)\}_{q\geq 0}$ we have
    \begin{equation}\label{E:lim_of_intersectionSi}
        \Limsup_{q\to 0^+} \C_\cap(q) = \Inv_{[0,\tau]}(\Pi_2\K,\V(\cdot))_*.
    \end{equation}
\end{prop}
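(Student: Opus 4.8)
The plan is to establish the two set inclusions underlying the equality in \eqref{E:lim_of_intersectionSi} separately. For the inclusion $\Limsup_{q\to 0^+} \C_\cap(q) \subseteq \Inv_{[0,\tau]}(\Pi_2\K,\V(\cdot))_*$, I would combine Proposition~\ref{Prop:sub_interval_analysis} with a closedness argument: each $\C_\cap(q)$ is already contained in $\Inv_{[0,\tau]}(\Pi_2\K,\V(\cdot))_*$, and since $\Pi_2 \K$, $\U$ are convex and compact and the dynamics linear, the invariance kernel $\Inv_{[0,\tau]}(\Pi_2\K,\V(\cdot))_*$ is closed (indeed compact). The Kuratowski upper limit of a sequence of subsets of a fixed closed set is contained in that closed set, because if $\liminf_{q\to 0^+} d(x,\C_\cap(q)) = 0$ then there is a sequence of points $x_q \in \C_\cap(q) \subseteq \Inv_{[0,\tau]}(\Pi_2\K,\V(\cdot))_*$ converging to $x$, and closedness of the invariance kernel forces $x$ into it. This direction is essentially bookkeeping once the compactness/closedness facts (already asserted in the proof of Theorem~\ref{Thm:Viab_supset_projectionViabs}) are invoked.

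For the reverse inclusion $\Inv_{[0,\tau]}(\Pi_2\K,\V(\cdot))_* \subseteq \Limsup_{q\to 0^+} \C_\cap(q)$, I would fix $x_0 \in \Inv_{[0,\tau]}(\Pi_2\K,\V(\cdot))_*$ and show $\liminf_{q\to 0^+} d(x_0,\C_\cap(q)) = 0$, ideally by exhibiting for each small $q$ a point $x_0^q \in \C_\cap(q)$ with $x_0^q \to x_0$ as $q\to 0^+$. The natural candidate is to note that $x_0$ satisfies the ``unsampled'' invariance condition against the time-varying set-valued map $\V(\cdot)$, whereas membership in $\C_\cap(q)$ requires invariance against the right-continuous piecewise-constant sampled map $t \mapsto \V(\lceil t/q\rceil q)$, which \emph{over-approximates} $\V(\cdot)$ pointwise (it ``rounds up'' in backward time, as used in Proposition~\ref{Prop:sub_interval_analysis}). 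So $\Inv$ against the sampled map is \emph{smaller} than against $\V(\cdot)$, and $x_0$ need not lie in $\C_\cap(q)$; one must therefore use a shrinking of the trajectory or a state perturbation. The key estimate is a continuity/Lipschitz bound: the sampled disturbance map converges to $\V(\cdot)$ in the Hausdorff metric uniformly as $q\to 0^+$ (because $\V(\cdot)$ is continuous on the compact interval $[0,\tau]$, hence uniformly continuous), and the invariance kernel depends continuously on the (uniformly bounded, uniformly converging) disturbance map — this is the standard continuity of reachability/viability constructs under perturbation of the input set, citable from \cite{aubin1991viability,cardaliaguet1999set}. Thus $\Inv_{[0,q]}$-type iterations converge, and one extracts a point of $\C_\cap(q)$ within distance $o(1)$ of $x_0$.

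I expect the main obstacle to be making the reverse inclusion rigorous without grinding through the full continuity-of-the-invariance-operator machinery: one needs that the finite intersection $\C_\cap(q)$, built from $N=\tau/q$ nested one-step invariance operators with the coarsened disturbance set, Kuratowski-converges \emph{up to} $\Inv_{[0,\tau]}(\Pi_2\K,\V(\cdot))_*$ rather than to something strictly smaller. The subtlety is that each sampling step introduces a one-sided error (the disturbance set is enlarged to $\V((i+1)q)$), and these errors accumulate over $N \to \infty$ steps; one must show the accumulated error still vanishes. I would control this by the monotonicity of $\V(\cdot)$ together with its uniform continuity: $\|\V((i+1)q)\ominus\V(t)\|$ (Hausdorff) is bounded by the modulus of continuity $\omega(q)\to 0$ for all $t\in[iq,(i+1)q]$ and all $i$, so the per-step perturbation is uniformly $O(\omega(q))$, and the Lipschitz dependence of the flow on the disturbance (from Gronwall, $A_2$ fixed) yields a total perturbation of the invariance kernel of order $e^{\|A_2\|\tau}\,\omega(q)\to 0$. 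Combined with the closedness argument this gives equality. An alternative, cleaner route, if available from the cited literature, is to invoke directly the known result that sampled (piecewise-constant-in-backward-time) approximations of the discriminating/invariance kernel converge in the Kuratowski sense to the exact kernel as the step size vanishes (cf.\ \cite{cardaliaguet1999set,G05}), in which case the proof reduces to identifying $\C_\cap(q)$ with exactly such an approximation and citing convergence — which is likely the route the authors take.
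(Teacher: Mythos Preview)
Your proposal is correct, and your ``alternative, cleaner route'' is exactly what the paper does. The paper's proof defines a backward sample-and-hold map $\V_{\text{sh}}(t;q):=\V(iq)$ for $t\in((i-1)q,iq]$, observes via the semi-group property that $\C_\cap(q)=\Inv_{[0,\tau]}(\Pi_2\K,\V_{\text{sh}}(\cdot;q))_*$, argues that $\Liminf_{q\to 0^+}\V_{\text{sh}}(\cdot;q)=\V(\cdot)$ using monotonicity and continuity of $\V$, and then commutes the Kuratowski $\Limsup$ through the invariance operator in one line. Your primary route---splitting into two inclusions, handling $\Limsup\subseteq\Inv$ via Proposition~\ref{Prop:sub_interval_analysis} plus closedness, and handling the reverse via a Gronwall-type modulus-of-continuity bound $e^{\norm{A_2}\tau}\omega(q)\to 0$---is more explicit and arguably more rigorous than the paper's compact swap $\Limsup_q\Inv(\cdot,\V_{\text{sh}})=\Inv(\cdot,\Liminf_q\V_{\text{sh}})$, which is asserted without further justification. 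What your route buys is a self-contained quantitative estimate on the convergence rate; what the paper's route buys is brevity, at the cost of implicitly relying on continuity of the invariance kernel in its disturbance argument (the very fact your Gronwall argument would establish).
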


\begin{proof}
    Given $q$, define a piecewise constant set-valued map $\V_{\text{sh}}(t;q) := \V(iq)$ $\forall t$ for which $i$ is the unique integer in $\{1,\dots,N\}$ satisfying $t \in ((i-1)q,iq]$ when $t$ varies backwards from $\tau$ to $0$ (i.e.\ a backward sample and hold of $\V(\cdot)$). Recall that $\V(\cdot)$ is non-decreasing and continuous, and $\V(t)$ compact for every $t$. Clearly, $\V_{\text{sh}}(\cdot;q) \supseteq \V(\cdot)$ $\forall q$. The sequence $\{\V_{\text{sh}}(\cdot;q)\}_{q\geq 0}$ converges to $\V(\cdot)$ from outside: We say that $\tilde{v}(\cdot;q) \in \V_{\text{sh}}(\cdot;q)$ iff $\tilde{v}(t;q) \in \V_{\text{sh}}(t;q)$ $\forall t$. As $q \to 0^+$, $\forall \tilde{v}(\cdot;q)\in \V_{\text{sh}}(\cdot;q)$ $\forall \epsilon \geq 0$ $\forall t$ $\B(\tilde{v}(t;q), \epsilon) \cap \V(t) \neq \emptyset$, where $\B(x,\epsilon)$ denotes the ball (associated with a metric $d$) of radius $\epsilon$ centered at $x$. In other words, $\forall \tilde{v}(\cdot;q)\in \V_{\text{sh}}(\cdot;q)$, $\exists v(\cdot) \in \V(\cdot)$ s.t.\ $\limsup_{q\to 0^+} d(v(\cdot),\tilde{v}(\cdot;q))= \liminf_{q\to 0^+} d(v(\cdot),\tilde{v}(\cdot;q))=0$. So $\lim_{q\to 0^+} d(v(\cdot),\V_{\text{sh}}(\cdot;q))=0$, and therefore $\Liminf_{q\to 0^+} \V_{\text{sh}}(\cdot;q)=\V(\cdot)$. On the other hand, we know from the semi-group property that $\C_\cap(q) = \Inv_{[0,\tau]}(\Pi_2\K,\V_{\text{sh}}(\cdot;q))_*$. Hence,
    \begin{equation*}
        \Limsup_{q\to 0^+} \C_\cap(q) = \Limsup_{q\to 0^+} \Inv_{[0,\tau]}(\Pi_2\K,\V_{\text{sh}}(\cdot;q))_* = \Inv_{[0,\tau]}(\Pi_2\K,\Liminf_{q\to 0^+} \V_{\text{sh}}(\cdot;q))_* = \Inv_{[0,\tau]}(\Pi_2\K, \V(\cdot))_*.
    \end{equation*}
\end{proof}

Using this formulation we can perform the decentralized analysis in Theorem~\ref{Thm:Viab_supset_projectionViabs} via Algorithm~\ref{Alg:subIntervalAnalysis} over sub-intervals.
\begin{alg}
    \begin{algorithmic}[1]
    \State $N \gets \tau/q$ \Comment{Assumed integer.}
    \State $\C_N \gets \Pi_2 \K$
    \State $\V_N \gets \Pi_1 \K$
    \For{$i = N-1$ to $0$}
        \State $\C_i \gets \Inv_{[0,q]}(\C_{i+1},\V_{i+1})_*$ \label{Alg_Step:inv}
        \State $\V_i \gets \Viab_{[0,q]}(\V_{i+1},\U)$
    \EndFor
    \State \textbf{return} $\V_0 \times \C_0$ \Comment{$\subseteq \Viab_{[0,\tau]}(\K,\U)$}
    \end{algorithmic}
    \caption{Sub-Interval Decentralized Computations}
    \label{Alg:subIntervalAnalysis}
\end{alg}



\subsection{Bounding the Approximation in $\Sl_2$}

Notice from Theorem~\ref{Thm:Viab_supset_projectionViabs} that the computed construct in the upper subspace is exact in that
\begin{equation}
    \Pi_1 \Viab_{[0,\tau]}(\K,\U) = \Viab_{[0,\tau]}(\Pi_1 \K, \U).
\end{equation}
On the other hand additional conservatism is introduced in the lower subspace $\Sl_2$ due to treating the effect of the upper subsystem as a worst-case disturbance. Quantifying this error remains an open problem. However, we can formulate a qualitative lower bound on the shrinkage of the invariance kernel in $\Sl_2$ in backward time. This bound will be expressed in terms of system-specific (and ultimately, design-specific) parameters that form the desired structure \eqref{E:sys_unicoupled_DisjointB}:


Following \cite{kurzhanskii96ellipsoidal}, the invariance kernel in $\Sl_2$ can be expressed as
\begin{equation}
    \Inv_{[0,\tau]}(\Pi_2 \K,\V(\cdot))_* = \bigcap_{t\in[0,\tau]} \left( e^{-t A_2} \Pi_2 \K  \ominus  \int_0^t e^{-rA_2} \Delta \V(t-r) dr  \right)
\end{equation}
with $\ominus$ denoting the Pontryagin difference. Let $\B(\delta)$ be the norm-ball of radius $\delta \in \Real^+$ about the origin, and define $\eta \colon \Real^+ \to \Real^+$,
\begin{equation}
    \eta(s) := \frac{e^{s\norm{A_2}}-1}{\norm{A_2}}.
\end{equation}
Bounding the contribution of the uncertainty (disturbance) in computation of the invariance kernel over the interval $[0,\theta]$ we have \cite{G05} that
\begin{align}
    \int_0^\theta e^{-rA_2} \Delta \V(\theta-r) dr
        &\subseteq \B\left(\norm{\int_0^\theta e^{-rA_2} \Delta \V(\theta-r) dr}\right)\\
        &\subseteq  \B\left(\int_0^\theta e^{r\norm{A_2}} \norm{\Delta} \sup_{x\in \V(\theta-r)} \norm{x} dr\right)\\
        &\subseteq  \B\left(\norm{\Delta} \sup_{x\in \V(\theta)} \norm{x} \int_0^\theta e^{r\norm{A_2}} dr\right)\\
        &\subseteq  \B\left( \norm{\Delta} \sup_{x\in \V(\theta)} \norm{x}  \eta(\theta) \right).
\end{align}
Clearly, this contribution is weakened as $\norm{\Delta} \to 0$. Further, we have
\begin{equation}\label{E:inclusion_normed}
    \bigcap_{i=0}^{N-1} \left( \bigcap_{t\in[0,q]}  e^{-t A_2} \C_{i+1}   \ominus  \B\left( \norm{\Delta} \sup_{x\in \V((i+1)q)} \norm{x} \eta(q) \right)\right) \subseteq  \bigcap_{i=0}^{N-1} \Inv_{[0,q]}(\C_{i+1},\V((i+1)q))_*
\end{equation}
with $\C_N := \Pi_2 \K$. From the dual of the results in \cite{G05}, we know that the Hausdorff distance of the two sets in the inclusion above decreases as $q \to 0^+$, and tends to zero if $\V(iq) = \B(\sup_{x\in \V(iq)} \norm{x})$. The Kuratowski upper-limit of the left-hand-side of \eqref{E:inclusion_normed} is therefore $\Inv_{[0,\tau]}(\Pi_2 \K,\V(\cdot))_*$ as $q\to 0^+$ (via Proposition~\ref{Prop:lim_of_intersectionSi}). Now, notice that for sufficiently small $q\ll 1$,
\begin{equation}
    \eta(q) = \lim_{M\to \infty} \sum_{j=1}^{M} \frac{q^j  (\norm{A_2})^{j-1} }{j!} \leq \lim_{M\to \infty} \sum_{j=1}^{M} \frac{q^j  (\overline{\sigma}(A_2) \sqrt{\tilde{n}})^{j-1} }{j!} = q + \frac{q^2}{2} \overline{\sigma}(A_2) \sqrt{\tilde{n}}+O(q^3),
\end{equation}
where $\overline{\sigma}(A_2)$ and $\tilde{n}=\operatorname{dim}(\Sl_2)$ respectively denote the largest singular value and the dimension of the lower subsystem. Therefore \eqref{E:inclusion_normed} provides a qualitative lower-bound on how much $\Inv_{[0,\tau]}(\Pi_2 \K,\V(\cdot))_*$  can shrink in backward time in terms of $\tilde{n}$, the magnitude of the unidirectional coupling $\norm{\Delta}$, the supremum of $\V(t)$ (the viability kernel in $\Sl_1$), and the largest singular value $\overline{\sigma}(A_2)$ of the lower subsystem. If we can choose $\tilde{n}$ appropriately, assign the slow eigenvalues to the lower subsystem, and weaken the effect of the disturbance (uncertainty) as much as possible by minimizing $\norm{\Delta}$, we can expect the conservatism to be reduced considerably. The proposed modified Riccati transformation in Section~\ref{S:Riccati} provides this flexibility while imposing the desired structure \eqref{E:sys_unicoupled_DisjointB} on the system.

\subsection{Decentralized Viability in Transformed Coordinates}

Suppose that for a general system \eqref{E:general_ss_eqn} under which a centralized viability computation is known to be burdensome, there exists an invertible transformation $z = T^{-1}x$ such that in the new coordinates the system $\widetilde{\s} = T^{-1}(\s)$ has the form of either \eqref{E:sys_decoupled_DisjointB} or \eqref{E:sys_unicoupled_DisjointB}. Suppose that Assumption~\ref{Ass:K_axis-aligned} is satisfied for $T^{-1}\K$. When the transformation yields decoupled $A$-matrix as well as disjoint input, Theorem~\ref{Thm:Viab_equals_projectionViabs} under the transformed dynamics $\widetilde{\s}$ becomes:
%
%
\begin{cor}\label{Cor:Viab_equals_projectionViabs_Riccati}
    $\Viab_{[0,\tau]}(\K,\U) = T\Viab_{[0,\tau]}^{\widetilde{\s}}(T^{-1}\K,\U) =  T\left(\Viab_{[0,\tau]}^{\widetilde{\s}}(\Pi_1 T^{-1} \K, \U_1) \times \Viab_{[0,\tau]}^{\widetilde{\s}}(\Pi_2 T^{-1} \K,\U_2)\right)$, where the superscript $\widetilde{S}$ is used to specify when a construct is formed under the transformed dynamics.
\end{cor}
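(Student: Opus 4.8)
The plan is to reduce the statement to Theorem~\ref{Thm:Viab_equals_projectionViabs} applied in the transformed coordinates, the only genuinely new ingredient being that the viability kernel commutes, up to the map $T$, with an invertible linear change of state variables. First I would establish the trajectory correspondence between $\s$ and $\widetilde{\s} = T^{-1}(\s)$: if $\xi_{x_0,u}$ solves $\dot{x} = Ax + Bu$ with $\xi_{x_0,u}(0) = x_0$, then $z(t) := T^{-1}\xi_{x_0,u}(t)$ satisfies $\dot{z} = T^{-1}(Ax + Bu) = (T^{-1}AT)z + (T^{-1}B)u$ with $z(0) = T^{-1}x_0$, so by uniqueness of solutions to \eqref{E:nonlinear_ss_eqn}, $\xi^{\widetilde{\s}}_{T^{-1}x_0,u}(t) = T^{-1}\xi_{x_0,u}(t)$ for every $u(\cdot) \in \Ul_{[0,t]}$ and every $t \in [0,\tau]$. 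The key bookkeeping point, which I would state explicitly, is that $T$ acts only on the state, not on the input, so the admissible input set $\Ul_{[0,t]}$, the constraint input set $\U$, and its factors $\U_1, \U_2$ are all unchanged by the transformation; only the projections $\Pi_i$ are now understood as maps onto the coordinate subspaces $\Sl_1, \Sl_2$ of the $z$-space.

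Second, I would use this correspondence together with the fact that $T^{-1}$ is a bijection to obtain the identity $\Viab_{[0,\tau]}(\K,\U) = T\,\Viab^{\widetilde{\s}}_{[0,\tau]}(T^{-1}\K,\U)$. Indeed, $x_0 \in \Viab_{[0,\tau]}(\K,\U)$ iff there is $u(\cdot) \in \Ul_{[0,\tau]}$ with $\xi_{x_0,u}(t) \in \K$ for all $t$, iff (applying the bijection $T^{-1}$) $\xi^{\widetilde{\s}}_{T^{-1}x_0,u}(t) = T^{-1}\xi_{x_0,u}(t) \in T^{-1}\K$ for all $t$, iff $T^{-1}x_0 \in \Viab^{\widetilde{\s}}_{[0,\tau]}(T^{-1}\K,\U)$; rearranging gives the claim. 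This is the same equivalence chain used in the proof of Theorem~\ref{Thm:Viab_equals_projectionViabs}, with $T^{-1}$ inserted.

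Third, by hypothesis $\widetilde{\s}$ has the structure \eqref{E:sys_decoupled_DisjointB} (block-diagonal $A$, disjoint input, $\U = \U_1 \times \U_2$) and $T^{-1}\K$ satisfies Assumption~\ref{Ass:K_axis-aligned}, so Theorem~\ref{Thm:Viab_equals_projectionViabs} applies verbatim in the $z$-coordinates and yields $\Viab^{\widetilde{\s}}_{[0,\tau]}(T^{-1}\K,\U) = \Viab^{\widetilde{\s}}_{[0,\tau]}(\Pi_1 T^{-1}\K,\U_1) \times \Viab^{\widetilde{\s}}_{[0,\tau]}(\Pi_2 T^{-1}\K,\U_2)$. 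Substituting this into the identity from the second step finishes the proof. I do not anticipate a real obstacle here; the only thing requiring care is the observation that the transformation leaves the input channel untouched, so that the decoupling/disjointness hypotheses of Theorem~\ref{Thm:Viab_equals_projectionViabs} are exactly what is assumed of $\widetilde{\s}$, and the constructs on the right-hand side are the genuinely lower-dimensional subsystem viability kernels in the $z$-space.
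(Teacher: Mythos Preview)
Your proposal is correct and matches the paper's approach: the corollary is stated without proof in the paper, merely as the transposition of Theorem~\ref{Thm:Viab_equals_projectionViabs} to the transformed coordinates, and your argument spells out precisely the two ingredients this entails (the bijective trajectory correspondence $\xi^{\widetilde{\s}}_{T^{-1}x_0,u} = T^{-1}\xi_{x_0,u}$ giving the first equality, and a direct invocation of Theorem~\ref{Thm:Viab_equals_projectionViabs} in the $z$-coordinates for the second).
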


For the more general case Theorem~\ref{Thm:Viab_supset_projectionViabs} implies:
%
%
\begin{cor}\label{Cor:Viab_supset_projectionViabs_Riccati}
    $\Viab_{[0,\tau]}(\K,\U) \supseteq T \left(\Viab_{[0,\tau]}^{\widetilde{\s}}(\Pi_1 T^{-1} \K, \U) \times \Inv_{[0,\tau]}^{\widetilde{\s}}(\Pi_2 T^{-1} \K,\V(\cdot))_* \right)$ with $\V(t) := \Viab_{[0,\tau-t]}^{\widetilde{\s}}(\Pi_1 T^{-1} \K, \U)$ $\forall t\in [0,\tau]$.
\end{cor}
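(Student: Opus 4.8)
The plan is to reduce the statement to Theorem~\ref{Thm:Viab_supset_projectionViabs} by transporting every ingredient through the isomorphism $z = T^{-1}x$. First I would establish the commutation identity
\[
    \Viab_{[0,\tau]}(\K,\U) = T\,\Viab_{[0,\tau]}^{\widetilde{\s}}(T^{-1}\K,\U), \qquad \widetilde{\s} = T^{-1}(\s).
\]
This follows because $x(\cdot)$ solves $\dot x = Ax+Bu$ with $x(0)=x_0$ if and only if $z(\cdot):=T^{-1}x(\cdot)$ solves $\dot z = (T^{-1}AT)z + (T^{-1}B)u$ with $z(0)=T^{-1}x_0$, using the \emph{same} control $u(\cdot)\in\Ul_{[0,\tau]}$ --- the transformation acts on the state only, so $\U$ and hence the admissible-control set are untouched --- and because $x(t)\in\K \iff T^{-1}x(t)\in T^{-1}\K$ for all $t$. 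Chaining these equivalences through Definition~\ref{D:viab} gives $x_0\in\Viab_{[0,\tau]}(\K,\U) \iff T^{-1}x_0\in\Viab_{[0,\tau]}^{\widetilde{\s}}(T^{-1}\K,\U)$, which is the displayed identity.

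Next, by the hypotheses of the corollary, $\widetilde{\s}$ has the unidirectionally-coupled, disjoint-input form \eqref{E:sys_unicoupled_DisjointB} and $T^{-1}\K$ satisfies Assumption~\ref{Ass:K_axis-aligned}. Theorem~\ref{Thm:Viab_supset_projectionViabs} then applies verbatim to $\widetilde{\s}$ with constraint $T^{-1}\K$, yielding
\[
    \Viab_{[0,\tau]}^{\widetilde{\s}}(T^{-1}\K,\U) \supseteq \Viab_{[0,\tau]}^{\widetilde{\s}}(\Pi_1 T^{-1}\K,\U) \times \Inv_{[0,\tau]}^{\widetilde{\s}}(\Pi_2 T^{-1}\K,\V(\cdot))_*,
\]
with $\V(t) = \Viab_{[0,\tau-t]}^{\widetilde{\s}}(\Pi_1 T^{-1}\K,\U)$, exactly the $\V$ appearing in the statement. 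Finally I would apply $T$ to both sides: since $T$ is a bijection of $\X$ it preserves inclusions, so combining with the commutation identity of the first step gives
\[
    \Viab_{[0,\tau]}(\K,\U) = T\,\Viab_{[0,\tau]}^{\widetilde{\s}}(T^{-1}\K,\U) \supseteq T\!\left(\Viab_{[0,\tau]}^{\widetilde{\s}}(\Pi_1 T^{-1}\K,\U) \times \Inv_{[0,\tau]}^{\widetilde{\s}}(\Pi_2 T^{-1}\K,\V(\cdot))_*\right),
\]
which is the assertion. Note that $T$ is not pushed inside the cross-product: the product structure lives in the $z$-coordinates, and the right-hand side is simply the image under $T$ of a subset of $\X$.

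The main obstacle --- really the only nontrivial point --- is the commutation identity of the first paragraph, and in particular the bookkeeping that the control authority is genuinely unchanged by the transformation, i.e.\ $u(\cdot)$ ranges over the same $\Ul_{[0,\tau]}$ before and after because $T$ does not touch the input. Everything else is monotonicity of images under a bijection together with a direct invocation of Theorem~\ref{Thm:Viab_supset_projectionViabs}. Since the same commutation fact already underlies Corollary~\ref{Cor:Viab_equals_projectionViabs_Riccati}, it may be stated once and reused, so the proof of the present corollary can be compressed to the two applications (Theorem~\ref{Thm:Viab_supset_projectionViabs} in $z$-coordinates, then the inclusion-preserving map $T$).
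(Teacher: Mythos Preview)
Your proposal is correct and matches the paper's approach exactly: the paper states the corollary without explicit proof, simply noting that ``Theorem~\ref{Thm:Viab_supset_projectionViabs} implies'' the result in the transformed coordinates, which is precisely the reduction you carry out. Your spelled-out commutation identity and the application of $T$ as an inclusion-preserving bijection are the natural details the paper leaves implicit.
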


Decentralized analysis over sub-intervals are performed similarly to Algorithm~\ref{Alg:subIntervalAnalysis}, and a lower-bound for the shrinkage of the invariance kernel in $\Sl_2$ can be formulated according to \eqref{E:inclusion_normed} with $\C_N = \Pi_2 T^{-1}\K$ and the respective transformed system matrices. Note that in $\Sl_1$, $\Pi_1 T^{-1}\Viab_{[0,\tau]}(\K,\U) = \Viab_{[0,\tau]}^{\widetilde{\s}}(\Pi_1 T^{-1} \K, \U)$, and that the computed construct in $\Sl_2$ is a guaranteed under-approximation of the projection of the actual viability kernel in that subspace, i.e.\ $\Inv_{[0,\tau]}^{\widetilde{\s}}(\Pi_2 T^{-1} \K,\V(\cdot))_* \subseteq \Pi_2 T^{-1}\Viab_{[0,\tau]}(\K,\U)$. We present one such transformation next.

\section{The Riccati-Based Transformation}\label{S:Riccati}

We draw upon the so-called Riccati transformation---a two-stage coordinate transformation based on the solutions of a nonsymmetric algebraic Riccati equation (NARE) and a Sylvester equation. This transformation, originally introduced in \cite{Ch72} for decoupling of singularly perturbed systems, was later generalized in \cite{Kok75} to larger classes of \emph{autonomous} LTI systems. An in-depth overview of the application of this transformation in optimal control theory, singular perturbation theory, and asymptotic approximation theory can be found in \cite{Smith1987}, while more recent advances are given in \cite{Gajic2000, Shim2006}.

Let \eqref{E:general_LTI_system} be partitioned as
\begin{equation}\label{E:general_partitioned}
    \s = \begin{bmatrix}
        \begin{array}{cc|c}
           A_{11} & A_{12} &  B_1\\
           A_{21} & A_{22} &  B_2\\
        \end{array}\\
        \end{bmatrix}
\end{equation}
with $A_{11} \in \Real^{k \times k}$, $A_{12} \in \Real^{k \times (n-k)}$, $A_{21} \in \Real^{(n-k) \times k}$, $A_{22} \in \Real^{(n-k) \times (n-k)}$, $B_{1} \in \Real^{k \times p}$, and $B_{2} \in \Real^{(n-k) \times p}$, for some $k<n$. Now consider the nonsingular transformation matrices
\begin{align}
  T_1 &= \begin{bmatrix}
        I_{k} & \mathbf{0}\\
        -L &        I_{n-k}
      \end{bmatrix} \in \Real^{n \times n}, \label{E:T1_riccati} \\
  T_2 &= \begin{bmatrix}
    I_{k} & M \\
    \mathbf{0} &  I_{n-k}
  \end{bmatrix}\in \Real^{n \times n}, \label{E:T2_riccati}
\end{align}
where $I_n$ denotes the $n\times n$ identity matrix. With $L \in \Real^{(n-k) \times k}$ and $M \in \Real^{k \times (n-k)}$ that satisfy
\begin{align}
  &\text{(NARE:)}   &\mathscr{R}(L) &:= L A_{11} - A_{22}L - L A_{12}L + A_{21} = \mathbf{0}, \label{E:ARE0} \\
  &\text{(Sylvester:)} &\mathscr{S}(M) &:= \bigl( A_{11}-A_{12}L \bigr) M - M \bigl(A_{22}+L A_{12} \bigr) + A_{12} = \mathbf{0}, \label{E:Sylv0}
\end{align}
the transformed system is
{\allowdisplaybreaks
\begin{align}
\s'  = T_1^{-1}(\s)
            &=\begin{bmatrix}
            \begin{array}{cc|c}
               A_{11} - A_{12}L  & A_{12} &  B_1\\
               \cancelto{\mathbf{0}}{\mathscr{R}(L)}   & A_{22}+L A_{12} &  L B_1 + B_2\\
            \end{array}
            \end{bmatrix}\label{E:T1_in_standard},\\
\s'' = T_2^{-1}(\s')
            &=\begin{bmatrix}
            \begin{array}{cc|c}
               A_{11} - A_{12}L  & \cancelto{\mathbf{0}}{\mathscr{S}(M)} &  (I-ML) B_1 - M B_2\\
               \mathbf{0}   & A_{22}+L A_{12} &  L B_1 + B_2\\
            \end{array}
            \end{bmatrix}.
\end{align}
}%
Solutions to \eqref{E:ARE0} and \eqref{E:Sylv0} may not always exist. The above procedure is referred to as the (standard) Riccati transformation. If the control input is disjoint across the subsystems of $\s''$ (and thus the transformation imposes a structure similar to \eqref{E:sys_decoupled_DisjointB}), Corollary~\ref{Cor:Viab_equals_projectionViabs_Riccati} can be employed to approximate the viability kernel in a decentralized fashion based on subsystem analysis.

\subsection{The Modified Riccati Transformation}

For the more general case, on the other hand, we propose the following transformation that imposes a structure given in \eqref{E:sys_unicoupled_DisjointB} which also relaxes the condition on the shape of the set $\U$. Corollary~\ref{Cor:Viab_supset_projectionViabs_Riccati} can thus be employed to compute a conservative approximation of the true viability kernel.

\subsubsection{Transformation 1 (ETUC Subsystem)}

Consider a transformation through which the lower subsystem can be made ETUC. That is, in \eqref{E:T1_in_standard} for the transformation matrix $T_1$ we seek an $L$ in $\mathscr{R}(L)$ that is also a solution of $L B_1 + B_2 = \mathbf{0}$.

\begin{ass}\label{A:riccati_LB1=B2_solvability}
   $\mathscr{C}(\tr{B_2}) \subseteq \mathscr{C}(\tr{B_1})$, where $\mathscr{C}(X)$ is the column-space of matrix $X$.
\end{ass}



\begin{lem}[\cite{Rao1972,GroB1999}]\label{L:AX=B}
  Under Assumption~\ref{A:riccati_LB1=B2_solvability} the class of solutions of $L B_1 = -B_2$ w.r.t.\ $L\in \Real^{(n-k)\times k}$ can be characterized by
  \begin{equation}\label{E:AX=B}
        \mathcal{L} := \left\{ -B_2 B_1^\dagger + Z - Z B_1 B_1^\dagger, \;\; Z \in \Real^{(n-k) \times k} \right\}
  \end{equation}
  with $\dagger$ denoting the Moore-Penrose pseudoinverse.
\end{lem}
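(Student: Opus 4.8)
The plan is to reduce the matrix equation $L B_1 = -B_2$ to the canonical form $AX = C$ by transposition and then invoke the classical Penrose characterization of the solution set of a consistent linear matrix equation (this is exactly the content borrowed from \cite{Rao1972,GroB1999}). First, transposing gives $\tr{B_1}\,\tr{L} = -\tr{B_2}$, which is an equation $AX = C$ in the unknown $X = \tr{L} \in \Real^{k \times (n-k)}$ with $A = \tr{B_1}$ and $C = -\tr{B_2}$. The standard fact is that $AX=C$ admits a solution if and only if $A A^\dagger C = C$, and this is equivalent to the column-space inclusion $\mathscr{C}(C) \subseteq \mathscr{C}(A)$; since $\mathscr{C}(C) = \mathscr{C}(\tr{B_2})$ and $\mathscr{C}(A) = \mathscr{C}(\tr{B_1})$, Assumption~\ref{A:riccati_LB1=B2_solvability} is precisely this consistency condition, so a solution exists.

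Next, assuming consistency, I would show that the full solution set is $\{ A^\dagger C + (I - A^\dagger A) W : W \in \Real^{k \times (n-k)} \}$. The inclusion ``$\supseteq$'' is a one-line check using the Penrose identities $A A^\dagger A = A$ and (by consistency) $A A^\dagger C = C$: indeed $A\bigl(A^\dagger C + (I - A^\dagger A)W\bigr) = A A^\dagger C + (A - A A^\dagger A)W = C$. Conversely, any solution $X$ satisfies $X = A^\dagger A X + (I - A^\dagger A) X = A^\dagger C + (I - A^\dagger A) X$, so it lies in the family with $W = X$; hence ``$\subseteq$'' holds as well.

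Finally, I would substitute back. Using the identity $(\tr{B_1})^\dagger = \tr{(B_1^\dagger)}$, the solution set for $X = \tr{L}$ becomes $\{ -\tr{(B_1^\dagger)}\tr{B_2} + (I - \tr{(B_1^\dagger)}\tr{B_1}) W \}$; transposing each member and writing $Z := \tr{W}$ — which again ranges over all of $\Real^{(n-k)\times k}$ — yields $L = -B_2 B_1^\dagger + Z - Z B_1 B_1^\dagger$, i.e.\ exactly the family $\mathcal{L}$ in \eqref{E:AX=B}.

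This is essentially a routine argument; the only points that need care — and the closest thing to an obstacle — are keeping the side of each projector straight through the two transpositions, so that $I - B_1 B_1^\dagger$ (and not $I - B_1^\dagger B_1$) appears in the final formula, and verifying that the reparametrization $W \mapsto \tr{W}$ does not shrink the family, which is immediate since transposition is a bijection between matrices of the two relevant shapes.
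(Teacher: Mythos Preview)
Your argument is correct and is precisely the classical Penrose characterization; the paper itself does not prove this lemma but simply imports it from \cite{Rao1972,GroB1999}, so there is nothing to compare against beyond noting that your derivation is the standard one those references contain. The one subtlety you flagged---that $B_1 B_1^\dagger$ (being an orthogonal projector) is symmetric, so the transpositions land you on $I - B_1 B_1^\dagger$ rather than $I - B_1^\dagger B_1$---is handled correctly.
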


Assumption~\ref{A:riccati_LB1=B2_solvability} is the necessary and sufficient condition for solvability of $L B_1 = -B_2$. Substituting \eqref{E:AX=B} for $L$ in $\mathscr{R}(L)$ we obtain
\begin{equation}\label{E:R_(37)}
    \begin{split}
      \widehat{\mathscr{R}}(Z):= Z\Xi + \Gamma & + Z  \Bigl(A_{12}-  B_1 B_1^\dagger A_{12} \Bigr) Z (B_1 B_1^\dagger - I)\\
      &\qquad + \Bigl(A_{22} - B_2 B_1^\dagger A_{12} \Bigr) Z (B_1 B_1^\dagger - I),
    \end{split}
\end{equation}
where
\begin{align}
    \Xi & = -(B_1 B_1^\dagger - I) \Bigl( A_{11} + A_{12} B_2 B_1^\dagger \Bigr), \\
    \Gamma &= \Bigl( A_{22} B_2 B_1^\dagger +A_{21} \Bigr) - B_2 B_1^\dagger \Bigl(A_{12} B_2 B_1^\dagger + A_{11} \Bigr). \label{E:riccati_Gamma_defn}
\end{align}

To eliminate the non-invertible term $(B_1 B_1^\dagger - I)$ from the right-hand side of \eqref{E:R_(37)} we equate $\widehat{\mathscr{R}}(Z)$ to some rank correcting term $\delta \mathscr{F}(Z)$ with
\begin{equation}\label{E:riccati_F(Z)}
    \mathscr{F}(Z) := Z  \Bigl( A_{12}-B_1 B_1^\dagger A_{12} \Bigr) Z + \Bigl(A_{22} - B_2 B_1^\dagger A_{12} \Bigr) Z
\end{equation}
and $\delta \in \Real \backslash \{-1, 0\}$ a finite (but possibly large) parameter such that $\big(B_1 B_1^\dagger - (\delta+1)I \big)$ is nonsingular:
\begin{align}
      \widehat{\mathscr{R}}(Z) &= Z\Xi + \Gamma + Z  \Bigl(A_{12}-  B_1 B_1^\dagger A_{12} \Bigr) Z (B_1 B_1^\dagger - I) \notag \\
      &\qquad\qquad\qquad\qquad + \Bigl(A_{22} - B_2 B_1^\dagger A_{12} \Bigr) Z (B_1 B_1^\dagger - I)\\
      & = Z\Xi + \Gamma + \mathscr{F}(Z) (B_1 B_1^\dagger - I) \doteq \delta \mathscr{F}(Z). \label{E:ARE0_subs(37)}
\end{align}
Simple algebraic manipulation and post-multiplication of $\widehat{\mathscr{R}}(Z) - \delta \mathscr{F}(Z) = \mathbf{0}$ by $\big(B_1 B_1^\dagger - (\delta+1)I \big)^{-1}$ results in a NARE in the variable $Z$:
\begin{equation}\label{E:ARE1}
  \mathscr{R}_1(Z) := Z \tilde{A}_{11} - \tilde{A}_{22} Z - Z \tilde{A}_{12} Z + \tilde{A}_{21} = \mathbf{0}
\end{equation}
with $\tilde{A}_{11} = \Xi \, \big( B_1 B_1^\dagger - (\delta+1)I \big)^{-1}$, $\tilde{A}_{21}= \Gamma \, \big(B_1 B_1^\dagger - (\delta+1)I \big)^{-1}$, $\tilde{A}_{12} = \big( B_1 B_1^\dagger A_{12}-A_{12} \big)$, and $\tilde{A}_{22}= \big( B_2 B_1^\dagger A_{12} - A_{22} \big)$.
\begin{prop}\label{P:both_eqns_simultaneous}
    If a root $Z \in \Real^{(n-k) \times k}$ of the NARE \eqref{E:ARE1} exists, it constitutes an $L \in \mathcal{L}$ that simultaneously satisfies
    %
    \begin{subequations}\label{E:both_equations_simlutaneous}
      \begin{align}
        & L B_1 + B_2 = \mathbf{0}, \\
        & \mathscr{R}(L) = L A_{11} - A_{22}L - L A_{12}L + A_{21} = \delta \mathscr{F}(Z).
      \end{align}
    \end{subequations}
\end{prop}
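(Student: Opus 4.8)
The plan is to read off the first identity from the parametrization of $\mathcal{L}$ and to obtain the second by back-substituting the chosen $L$ into $\mathscr{R}(\cdot)$ and then invoking the NARE \eqref{E:ARE1}. First I would set $L := -B_2 B_1^\dagger + Z - Z B_1 B_1^\dagger$, i.e.\ the element of the solution family \eqref{E:AX=B} associated with the free parameter $Z = Z$. By Lemma~\ref{L:AX=B} (and Assumption~\ref{A:riccati_LB1=B2_solvability}, which guarantees that family is nonempty) every member of $\mathcal{L}$ satisfies $L B_1 = -B_2$, so $L B_1 + B_2 = \mathbf{0}$ holds with no further work; this disposes of the first equation in \eqref{E:both_equations_simlutaneous}. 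So it remains only to verify $\mathscr{R}(L) = \delta \mathscr{F}(Z)$.

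For the second identity I would substitute this $L$ into $\mathscr{R}(L) = L A_{11} - A_{22} L - L A_{12} L + A_{21}$, exactly the computation that produced \eqref{E:R_(37)}: using the idempotency relations $B_1 B_1^\dagger B_1 = B_1$, $(B_1 B_1^\dagger)^2 = B_1 B_1^\dagger$ (and hence $L B_1 B_1^\dagger = -B_2 B_1^\dagger + Z B_1 B_1^\dagger - Z B_1 B_1^\dagger = L + Z(B_1B_1^\dagger - I) \cdot 0$... more carefully: $L B_1 B_1^\dagger = -B_2 B_1^\dagger + Z B_1 B_1^\dagger - Z B_1 B_1^\dagger$), the quartic, quadratic, and affine parts collect into $\widehat{\mathscr{R}}(Z) = Z\Xi + \Gamma + \mathscr{F}(Z)\,(B_1 B_1^\dagger - I)$ with $\Xi$, $\Gamma$ as in \eqref{E:riccati_Gamma_defn} and $\mathscr{F}$ as in \eqref{E:riccati_F(Z)} — this is precisely the chain \eqref{E:R_(37)}--\eqref{E:ARE0_subs(37)}, so I would simply cite it rather than redo it. Hence $\mathscr{R}(L) = Z\Xi + \Gamma + \mathscr{F}(Z)\,(B_1 B_1^\dagger - I)$.

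Next I would use the hypothesis that $Z$ is a root of \eqref{E:ARE1}. Substituting the definitions $\tilde{A}_{11} = \Xi\,\big(B_1 B_1^\dagger - (\delta+1)I\big)^{-1}$, $\tilde{A}_{21} = \Gamma\,\big(B_1 B_1^\dagger - (\delta+1)I\big)^{-1}$, $\tilde{A}_{12} = B_1 B_1^\dagger A_{12} - A_{12}$, $\tilde{A}_{22} = B_2 B_1^\dagger A_{12} - A_{22}$ into $\mathscr{R}_1(Z) = \mathbf{0}$ gives $-\tilde{A}_{22} Z - Z \tilde{A}_{12} Z = (A_{22} - B_2 B_1^\dagger A_{12})Z + Z(A_{12} - B_1 B_1^\dagger A_{12})Z = \mathscr{F}(Z)$, while $Z\tilde{A}_{11} + \tilde{A}_{21} = (Z\Xi + \Gamma)\big(B_1 B_1^\dagger - (\delta+1)I\big)^{-1}$; thus \eqref{E:ARE1} reads $(Z\Xi + \Gamma)\big(B_1 B_1^\dagger - (\delta+1)I\big)^{-1} + \mathscr{F}(Z) = \mathbf{0}$. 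Post-multiplying by the nonsingular matrix $\big(B_1 B_1^\dagger - (\delta+1)I\big)$ — nonsingular by the choice of $\delta \in \Real\setminus\{-1,0\}$ — yields $Z\Xi + \Gamma = -\mathscr{F}(Z)\big(B_1 B_1^\dagger - (\delta+1)I\big) = -\mathscr{F}(Z)(B_1 B_1^\dagger - I) + \delta\,\mathscr{F}(Z)$. Plugging this into the expression for $\mathscr{R}(L)$ from the previous paragraph, the $\mathscr{F}(Z)(B_1 B_1^\dagger - I)$ terms cancel and we are left with $\mathscr{R}(L) = \delta\,\mathscr{F}(Z)$, which is the second equation in \eqref{E:both_equations_simlutaneous}.

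I expect the only delicate point to be bookkeeping in the back-substitution $\mathscr{R}(L) \mapsto \widehat{\mathscr{R}}(Z)$ — keeping track of which pseudoinverse identities ($B_1 B_1^\dagger B_1 = B_1$, symmetry/idempotency of $B_1 B_1^\dagger$) are used to cancel the terms that should not survive, and confirming the $\Xi$, $\Gamma$, $\mathscr{F}$ groupings match \eqref{E:R_(37)}--\eqref{E:riccati_F(Z)} verbatim; since that derivation is already carried out in the text preceding the statement, in the write-up I would lean on it and spend the detail budget instead on the clean NARE-to-identity step of the last paragraph, which is short and self-contained.
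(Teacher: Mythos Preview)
Your proposal is correct and follows essentially the same route as the paper: define $L$ via \eqref{E:AX=B}, invoke the derivation \eqref{E:R_(37)}--\eqref{E:ARE0_subs(37)} to get $\mathscr{R}(L)=\widehat{\mathscr{R}}(Z)=Z\Xi+\Gamma+\mathscr{F}(Z)(B_1B_1^\dagger-I)$, and then reverse the post-multiplication step that produced \eqref{E:ARE1} to conclude $\widehat{\mathscr{R}}(Z)=\delta\mathscr{F}(Z)$. The paper's own proof is a one-line citation of \eqref{E:ARE0_subs(37)} and \eqref{E:AX=B}; your version simply makes the reversal explicit (and your algebra there is clean and correct), so the only difference is level of detail, not method.
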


\begin{proof}
    By virtue of \eqref{E:ARE0_subs(37)}, a matrix $Z$ that satisfies \eqref{E:ARE1} also satisfies \eqref{E:both_equations_simlutaneous} via \eqref{E:AX=B}.
\end{proof}

\begin{rem}
    If $p\ge k$ the set $\mathcal{L}$ reduces to the singleton $\{-B_2 B_1^\dagger\}$ and the method still applies.
\end{rem}

\begin{thm}\label{T:Transform1_riccati}
  The transformation \eqref{E:T1_riccati} with $L \in \Real^{(n-k) \times k}$ obtained through Proposition~\ref{P:both_eqns_simultaneous} makes the lower subsystem in \eqref{E:general_partitioned} ETUC. Moreover, the coupling terms are altered such that the effect of the upper subsystem on the evolution of the lower subsystem is parameterized by $\delta$.
\end{thm}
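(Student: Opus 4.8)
The plan is to establish both assertions by explicitly carrying out the similarity transformation $\s' = T_1^{-1}(\s)$ on the partitioned system \eqref{E:general_partitioned} and then invoking the two identities furnished by Proposition~\ref{P:both_eqns_simultaneous}. Since the $T_1$ of \eqref{E:T1_riccati} is block unit lower-triangular it is automatically nonsingular, with $T_1^{-1} = \left[\begin{smallmatrix} I_k & \mathbf{0} \\ L & I_{n-k}\end{smallmatrix}\right]$, so the change of coordinates is a genuine isomorphism of the state space. First I would compute $T_1^{-1} A T_1$ and $T_1^{-1} B$ block by block; this is the same elementary bookkeeping behind \eqref{E:T1_in_standard}, and it shows that the lower subsystem of $\s'$ obeys $\dot{x}_2' = \mathscr{R}(L)\,x_1' + (A_{22}+L A_{12})\,x_2' + (L B_1 + B_2)\,u$ — with the understanding that here $L$ satisfies the modified relations of Proposition~\ref{P:both_eqns_simultaneous} rather than the NARE \eqref{E:ARE0}, so neither the $(2,1)$ block nor the lower input matrix is automatically annihilated.

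For the ETUC claim I would use the first identity of Proposition~\ref{P:both_eqns_simultaneous}: the $L$ named in the statement is built from a root $Z$ of the NARE \eqref{E:ARE1} via \eqref{E:AX=B}, and therefore satisfies $L B_1 + B_2 = \mathbf{0}$. Hence the input matrix of the lower subsystem of $\s'$ vanishes identically, so by Definition~\ref{D:trivial-unctrl} that subsystem is ETUC. (This is also where the standing hypothesis of Proposition~\ref{P:both_eqns_simultaneous} — existence of a root of \eqref{E:ARE1} — enters, since it is precisely what makes the $L$ of the theorem well defined.)

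For the coupling claim, observe from the displayed $\dot{x}_2'$ equation that $L B_1 + B_2 = \mathbf{0}$ removes the input term, leaving $\dot{x}_2' = \mathscr{R}(L)\,x_1' + (A_{22}+L A_{12})\,x_2'$, so the sole channel through which the upper state enters the lower subsystem is the block $\mathscr{R}(L)$. By the second identity of Proposition~\ref{P:both_eqns_simultaneous}, $\mathscr{R}(L) = \delta\,\mathscr{F}(Z)$ with $\mathscr{F}$ as in \eqref{E:riccati_F(Z)}. Thus the residual upper-to-lower coupling is $\delta\,\mathscr{F}(Z)$, which carries the free parameter $\delta$ explicitly (and, through the $\delta$-dependent coefficients $\tilde{A}_{11},\tilde{A}_{21}$ of \eqref{E:ARE1}, implicitly through the root $Z$); in the terminology of Definition~\ref{D:UnidirectionalCouple} this is the coupling $\Delta = \delta\,\mathscr{F}(Z)$, and $\norm{\Delta}$ is exactly the quantity that drives the conservatism estimate \eqref{E:inclusion_normed}.

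I do not expect the algebra to pose any real difficulty — it is just the block multiplication behind \eqref{E:T1_in_standard} plus two substitutions. The point that needs care is interpretive rather than computational: after applying $T_1$ the system is still not in the unidirectionally coupled form \eqref{E:sys_unicoupled_DisjointB}, because the lower state $x_2'$ still enters the upper equation through $A_{12}$ (and the upper state through $A_{11}-A_{12}L$). The theorem asserts only that $T_1$ has made the lower subsystem ETUC and has turned the coupling from $x_1'$ into $x_2'$ into a $\delta$-parameterized term; decoupling the upper equation from $x_2'$, so as to reach \eqref{E:sys_unicoupled_DisjointB} and apply Corollary~\ref{Cor:Viab_supset_projectionViabs_Riccati}, is what the subsequent Sylvester/Riccati stage is for. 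Accordingly, I would write the $\dot{x}_2'$ equation out explicitly in the proof so that the phrase ``the effect of the upper subsystem on the evolution of the lower subsystem'' is unambiguous.
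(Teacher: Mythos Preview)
Your proposal is correct and follows essentially the same approach as the paper: compute the block entries of $T_1^{-1}AT_1$ and $T_1^{-1}B$ (exactly as in \eqref{E:T1_in_standard}), then substitute the two identities $LB_1+B_2=\mathbf{0}$ and $\mathscr{R}(L)=\delta\mathscr{F}(Z)$ from Proposition~\ref{P:both_eqns_simultaneous} to obtain the displayed form \eqref{E:G-prime_modified}. The paper's own proof is just that two-line block computation; your added interpretive remarks (on $T_1^{-1}$, on the remaining $A_{12}$ coupling in the upper row, and on the role of $\delta$ via \eqref{E:ARE1}) are correct elaborations but not required.
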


\begin{proof}
    %
    \begin{align}
    \s^{\prime}  = T_1^{-1}(\s)
                &=\begin{bmatrix}
                \begin{array}{cc|c}
                   A_{11} - A_{12}L  & A_{12} &  B_1\\
                   L A_{11} - A_{22}L - L A_{12}L + A_{21}    & A_{22}+L A_{12} &  L B_1 + B_2\\
                \end{array}\\
                \end{bmatrix}\\
                &=\begin{bmatrix}
                \begin{array}{cc|c}
                   A_{11} - A_{12}L  & A_{12}  &  B_1\\
                   \delta \mathscr{F}(Z)   & A_{22}+L A_{12} &  \mathbf{0}\\
                \end{array}
                \end{bmatrix}. \label{E:G-prime_modified}
    \end{align}
\end{proof}

\begin{rem}
    Note that the imposed $\delta$-parameterization of the off-diagonal term $\delta\mathscr{F}(Z)$ in \eqref{E:G-prime_modified} provides an additional degree of freedom in adjusting (minimizing) the coupling of the two subsystems in the new coordinates. This will be discussed further in Section~\ref{S:choosing_delta}.
\end{rem}

Nonsymmetric Riccati equations have long been an active area of research \cite{Freiling2002}. To solve \eqref{E:ARE1} we draw on the fixed-point algorithm described in \cite{Kok75} and derive the necessary conditions for the existence and uniqueness of a real root $Z$. Suppose $\bigl( B_2 B_1^\dagger A_{12} - A_{22} \bigr)$ is invertible. Define initial values as
\begin{align}
    Z_0 & := \bigl( B_2 B_1^\dagger A_{12} - A_{22} \bigr)^{-1} \Gamma \bigl(B_1 B_1^\dagger - (\delta+1)I \bigr)^{-1}, \label{E:riccati_Z0}\\
    A_0 & := \Xi \bigl(B_1 B_1^\dagger - (\delta+1)I \bigr)^{-1} - \bigl( B_1 B_1^\dagger A_{12} - A_{12} \bigr) Z_0 .
\end{align}
To find $Z$ we look for
\begin{equation}\label{E:D=Z-Z0}
    D := Z - Z_0
\end{equation}
by solving
\begin{equation}\label{E:ARE_in_D_recast}
    \begin{split}
      \widetilde{\mathscr{R}_1}(D) := D A_0 - \Bigl( B_2 B_1^\dagger A_{12} - &A_{22} + Z_0 \bigl( B_1 B_1^\dagger A_{12} - A_{12} \bigr) \Bigr) D \\
      & - D \bigl( B_1 B_1^\dagger A_{12} - A_{12} \bigr) D + Z_0 A_0 = \mathbf{0}.
    \end{split}
\end{equation}

\begin{lem}[{\cite[Lem.\ 1]{Kok75}}] \label{L:fixed_point_algorithm1}
    Suppose $\bigl( B_2 B_1^\dagger A_{12} - A_{22} \bigr)$ is nonsingular. If
    \begin{equation}\label{E:riccati_alg_cond1}
        \bigl\lVert \bigl( B_2 B_1^\dagger A_{12} - A_{22} \bigr)^{-1}\bigr\rVert \leq \frac{1}{3 \Bigl( \norm{A_0} + \norm{B_1 B_1^\dagger A_{12} - A_{12}} \norm{Z_0} \Bigr)}
    \end{equation}
    then \eqref{E:ARE_in_D_recast} has a unique real root $D$ that satisfies
    \begin{equation}\label{E:riccati_D_upperbound}
        0 \leq \norm{D} \leq \frac{2 \norm{A_0} \norm{Z_0}}{\norm{A_0} + \norm{B_1 B_1^\dagger A_{12} - A_{12}} \norm{Z_0}}
    \end{equation}
    and is the fixed-point solution of the contraction $D_{k+1} = \mathscr{P}_1(D_k)$ given by
    \begin{equation}\label{E:riccati_alg_contraction1}
        \begin{split}
            \mathscr{P}_1(D_k) := \bigl( B_2 B_1^\dagger & A_{12} - A_{22} \bigr)^{-1} \Bigl( Z_0 A_0 + D_k A_0 \\
            & - Z_0 \bigl( B_1 B_1^\dagger A_{12} - A_{12} \bigr) D_k - D_k \bigl(B_1 B_1^\dagger A_{12} - A_{12} \bigr) D_k \Bigr).
        \end{split}
    \end{equation}
\end{lem}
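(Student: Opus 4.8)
The plan is to recast the quadratic matrix equation \eqref{E:ARE_in_D_recast} as a fixed-point equation $D = \mathscr{P}_1(D)$ and apply the Banach fixed-point theorem on a closed norm-ball, following \cite{Kok75}. Write $G := B_2 B_1^\dagger A_{12} - A_{22}$ and $H := B_1 B_1^\dagger A_{12} - A_{12}$ for brevity. In $\widetilde{\mathscr{R}_1}(D) = \mathbf{0}$, isolate the term $G D$, move the constant term $Z_0 A_0$, the linear terms $D A_0 - Z_0 H D$, and the quadratic term $-D H D$ to the other side, and left-multiply by $G^{-1}$, which exists by hypothesis. This yields precisely $D = \mathscr{P}_1(D)$ as in \eqref{E:riccati_alg_contraction1}; hence the roots of \eqref{E:ARE_in_D_recast} are exactly the fixed points of $\mathscr{P}_1$, and since $\mathscr{P}_1$ maps real matrices to real matrices, any such fixed point is automatically real.

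Next, set $a := \norm{A_0}$, $h := \norm{H}$, $z := \norm{Z_0}$, $\kappa := \norm{G^{-1}}$, let $r := 2az/(a+hz)$ be the bound in \eqref{E:riccati_D_upperbound}, and let $\B(r) := \{ D \in \Real^{(n-k)\times k} : \norm{D} \leq r \}$, so that \eqref{E:riccati_alg_cond1} reads $\kappa \leq 1/(3(a+hz))$. I would first verify $\mathscr{P}_1(\B(r)) \subseteq \B(r)$: for $\norm{D}\leq r$, submultiplicativity and the triangle inequality give $\norm{\mathscr{P}_1(D)} \leq \kappa\bigl( az + r(a + hz + hr) \bigr)$, and substituting the value of $r$ together with the elementary inequality $(a+hz)^2 \geq 4ahz$ reduces the desired estimate $\norm{\mathscr{P}_1(D)}\leq r$ to $4ahz \leq 3(a+hz)^2$, which holds. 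Then I would show $\mathscr{P}_1$ is a strict contraction on $\B(r)$: for $D, D' \in \B(r)$, writing the quadratic difference as $DHD - D'HD' = DH(D-D') + (D-D')HD'$ gives $\norm{\mathscr{P}_1(D) - \mathscr{P}_1(D')} \leq \kappa\,( a + hz + 2hr )\,\norm{D - D'}$, and again using $r = 2az/(a+hz)$ and $(a+hz)^2 \geq 4ahz$ one obtains $\kappa(a+hz+2hr) < 1$.

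Since $\B(r)$ is a complete metric space, the Banach fixed-point theorem then yields a unique fixed point $D \in \B(r)$, which establishes existence, the bound $0 \leq \norm{D} \leq r$ of \eqref{E:riccati_D_upperbound}, uniqueness among roots in $\B(r)$, and convergence of the iteration $D_{k+1} = \mathscr{P}_1(D_k)$ to $D$ from any starting point in $\B(r)$. I expect the main obstacle to be the two norm estimates: checking that the specific constant $3$ in \eqref{E:riccati_alg_cond1} and the specific radius in \eqref{E:riccati_D_upperbound} are exactly what make $\B(r)$ invariant under $\mathscr{P}_1$ and simultaneously make $\mathscr{P}_1$ contractive — both reductions pivot on $(a+hz)^2 \geq 4ahz$, and lining up the constants is the only delicate bookkeeping, with everything else routine. (The degenerate cases $a = 0$ or $z = 0$, where $r = 0$, $\B(r) = \{\mathbf{0}\}$, and $\mathscr{P}_1(\mathbf{0}) = \mathbf{0}$, are trivial.)
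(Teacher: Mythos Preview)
The paper does not supply its own proof of this lemma; it is quoted verbatim from \cite[Lem.~1]{Kok75} and used as a black box. Your Banach fixed-point argument is exactly the standard route (and presumably the one in \cite{Kok75}): the rewriting of $\widetilde{\mathscr{R}_1}(D)=\mathbf{0}$ as $D=\mathscr{P}_1(D)$ is correct, and your two norm estimates check out --- the self-map bound reduces to $4ahz \le 3(a+hz)^2$ and the contraction constant comes out to at most $\tfrac{2}{3}$, both via $(a-hz)^2\ge 0$. One small point worth flagging explicitly in a final write-up: the Banach argument yields uniqueness only within the ball $\B(r)$, which matches the lemma's phrasing ``a unique real root $D$ that satisfies \eqref{E:riccati_D_upperbound}'' but does not rule out other roots of \eqref{E:ARE_in_D_recast} outside that ball.
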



\begin{rem}
  As in \cite{Kok75} it can be shown that the relative error $e_k := \norm{D_k - D}/\norm{D}$ after $k$ iterations is bounded above by
  \begin{equation}\label{E:e_k_forARE(D)}
    e_k \leq \biggl( 3 \bigl\lVert\bigl( B_2 B_1^\dagger A_{12} - A_{22} \bigr)^{-1}\bigr\rVert \Bigl( \norm{A_0} + \norm{B_1 B_1^\dagger A_{12} - A_{12}} \norm{Z_0} \Bigr) \biggr)^{\!\!k}
  \end{equation}
  and decreases as $\abs{\delta}$ increases since $\norm{A_0}$ and $\norm{Z_0}$ are inversely related to $\abs{\delta}$. 
\end{rem}

For a given $\delta$, using $D_0 = \mathbf{0}$ as initial condition we compute $D$ iteratively. The fixed-point solution $D^* = \mathscr{P}_1(D^*)$ is then used to obtain $Z = D^* + Z_0$ which in turn solves $\mathscr{R}_1(Z)=\mathbf{0}$ in \eqref{E:ARE1} and results in a matrix $L$, through \eqref{E:AX=B}, that satisfies both equations in \eqref{E:both_equations_simlutaneous}.

\subsubsection{Transformation 2 (Unidirectionally Coupled Subsystems)}

Consider the NARE
\begin{equation}\label{E:ARE2}
    \mathscr{R}_2(M) = \bigl( A_{11} - A_{12} L \bigr) M - M \bigl( A_{22} + L A_{12} \bigr) - M \bigl( \delta \mathscr{F}(Z) \bigr) M + A_{12} = \mathbf{0}.
\end{equation}
For a given $L$, $\delta$, and $Z$, if there exists a solution $M$ that satisfies \eqref{E:ARE2}, we obtain the following:

\begin{thm}\label{T:Transform2_riccati}
    The transformation \eqref{E:T2_riccati} with $M \in \Real^{k \times (n-k)}$ satisfying NARE \eqref{E:ARE2} makes the subsystems in \eqref{E:G-prime_modified} unidirectionally coupled.
\end{thm}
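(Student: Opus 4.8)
The plan is to prove the statement by a direct similarity computation: apply the transformation $\s'' = T_2^{-1}(\s')$ to the system \eqref{E:G-prime_modified} produced by Transformation~1, and then read off the resulting block structure. Write $\s'$ in partitioned form with $A$-blocks $\bar{A}_{11} := A_{11}-A_{12}L$, $\bar{A}_{12} := A_{12}$, $\bar{A}_{21} := \delta\mathscr{F}(Z)$, $\bar{A}_{22} := A_{22}+LA_{12}$, and input block $\bigl[\begin{smallmatrix}B_1\\ \mathbf{0}\end{smallmatrix}\bigr]$. Since $T_2$ in \eqref{E:T2_riccati} is unit upper-triangular, it is nonsingular with $T_2^{-1} = \bigl[\begin{smallmatrix}I_k & -M\\ \mathbf{0} & I_{n-k}\end{smallmatrix}\bigr]$, so the transformation is well defined and the argument does not require any extra hypothesis beyond existence of $M$.

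First I would compute $T_2^{-1}\bar{A}\,T_2$ block by block. The bottom row of $T_2^{-1}$ is $[\mathbf{0}\ \ I_{n-k}]$, so the $(2,1)$ and $(2,2)$ blocks of $T_2^{-1}\bar A$ are just $\bar A_{21}$ and $\bar A_{22}$; post-multiplying by $T_2$ leaves the $(2,1)$ block as $\bar A_{21} = \delta\mathscr{F}(Z)$. The $(1,1)$ block becomes $\bar A_{11} - M\bar A_{21} = A_{11}-A_{12}L - M\delta\mathscr{F}(Z)$ and the $(2,2)$ block becomes $\bar A_{21}M + \bar A_{22} = \delta\mathscr{F}(Z)M + A_{22}+LA_{12}$. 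The key point is the $(1,2)$ block, which works out to $\bar A_{11}M - M\bar A_{21}M + \bar A_{12} - M\bar A_{22} = (A_{11}-A_{12}L)M - M(A_{22}+LA_{12}) - M(\delta\mathscr{F}(Z))M + A_{12}$; this is exactly $\mathscr{R}_2(M)$ as defined in \eqref{E:ARE2}, hence it vanishes by hypothesis. For the input matrix, $T_2^{-1}\bigl[\begin{smallmatrix}B_1\\ \mathbf{0}\end{smallmatrix}\bigr] = \bigl[\begin{smallmatrix}B_1\\ \mathbf{0}\end{smallmatrix}\bigr]$ since the correction term $-M\cdot\mathbf{0}$ is zero, so the lower subsystem retains its null input matrix.

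Putting these together, $\s''$ is of the form \eqref{E:sys_unicoupled_DisjointB}: the upper subsystem has state matrix $A_{11}-A_{12}L-M\delta\mathscr{F}(Z)$ and input matrix $B_1$ and evolves independently of $x_2$, while the lower subsystem has state matrix $A_{22}+LA_{12}+\delta\mathscr{F}(Z)M$, coupling matrix $\Delta = \delta\mathscr{F}(Z)$, and null input matrix, hence is ETUC. Because one subsystem is ETUC, the input is disjoint across the two subsystems regardless of the shape of $\U$ (Remark~\ref{rem:shape_of_U}), so by Definition~\ref{D:UnidirectionalCouple} the two subsystems of \eqref{E:G-prime_modified} are unidirectionally coupled after applying $T_2$, with the worst-case coupling characterized by $\norm{\delta\mathscr{F}(Z)}$.

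Since the whole argument is a finite block-matrix manipulation, there is no genuine obstacle; the only delicate point is matching the $(1,2)$ block of $T_2^{-1}\bar A\,T_2$ term-for-term against $\mathscr{R}_2(M)$, which is precisely where the specific form of the NARE \eqref{E:ARE2} — in particular its quadratic term $M(\delta\mathscr{F}(Z))M$ coming from the modified off-diagonal block $\delta\mathscr{F}(Z)$ rather than a plain Sylvester term — is used. As a consistency check I would also note that setting $\delta\mathscr{F}(Z)=\mathbf{0}$ collapses \eqref{E:ARE2} to the Sylvester equation \eqref{E:Sylv0} and recovers the standard Riccati transformation of Section~\ref{S:Riccati}.
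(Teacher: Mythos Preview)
Your proposal is correct and follows essentially the same approach as the paper: a direct block-by-block similarity computation of $T_2^{-1}(\s')$ showing that the $(1,2)$ block collapses to $\mathscr{R}_2(M)=\mathbf{0}$ while the $(2,1)$ block and the null lower input matrix are preserved. The paper simply displays the resulting matrix \eqref{E:riccati_final_transformed} without writing out the intermediate steps, whereas you spell out each block explicitly; the content is the same.
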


\begin{proof}
    \begin{align}
        \s'' = T_2^{-1}(\s') &= \begin{bmatrix}
            \begin{array}{cc|c}
               A_{11} - A_{12}L - M \delta \mathscr{F}(Z)  & \cancelto{\mathbf{0}}{\mathscr{R}_2(M)} &  B_1\\
               \delta \mathscr{F}(Z)   & A_{22}+L A_{12} + \delta \mathscr{F}(Z) M &  \mathbf{0}\\
            \end{array}
            \end{bmatrix}.
            \label{E:riccati_final_transformed}
    \end{align}
\end{proof}

\begin{rem}
    In the transformed coordinates the lower subsystem remains ETUC. Furthermore, the $\delta$-parameterization of the unidirectional coupling between subsystems is also preserved.
\end{rem}

Before further analyzing the unidirectional coupling term $\delta \mathscr{F}(Z)$, let us derive the necessary conditions for the existence and uniqueness of a solution $M$ to \eqref{E:ARE2} to be used with the same convergent iterative procedure described previously. For a given $\delta$, $Z$, and $L$, let $\bigl( A_{11} - A_{12} L \bigr)$ be invertible and the initial values be defined as
\begin{align}
    M_0 & := - \bigl( A_{11} - A_{12} L \bigr)^{-1} A_{12},\\
    N_0 & := A_{22} + L A_{12} + \delta \mathscr{F}(Z) M_0.
\end{align}
We seek $M$ by forming
\begin{equation}\label{E:J=M-M0}
    J:= M - M_0
\end{equation}
and solving
\begin{equation}\label{E:ARE_in_S_recast}
    \widetilde{\mathscr{R}_2}(J) := J N_0 - \bigl( A_{11} - A_{12} L - \delta M_0 \mathscr{F}(Z) \bigr) J + \delta J \mathscr{F}(Z) J + M_0 N_0 = \mathbf{0}.
\end{equation}

\begin{lem}[{\cite[Lem.\ 1]{Kok75}}] \label{L:fixed_point_algorithm2}
    Suppose $\bigl( A_{11} - A_{12} L \bigr)$ is nonsingular. If
    \begin{equation}\label{E:riccati_alg_cond2}
        \bigl\lVert \bigl( A_{11} - A_{12} L \bigr)^{-1}\bigr\rVert \leq \frac{1}{3 \Bigl( \norm{N_0} + \norm{\delta \mathscr{F}(Z)} \norm{M_0} \Bigr)}
    \end{equation}
    then \eqref{E:ARE_in_S_recast} has a unique real root $J$ that satisfies
    \begin{equation}\label{E:riccati_J_upperbound}
        0 \leq \norm{J} \leq \frac{2 \norm{N_0} \norm{M_0}}{\norm{N_0} +  \norm{\delta \mathscr{F}(Z)} \norm{M_0}}
    \end{equation}
    and is the fixed-point solution of the contraction $J_{k+1} = \mathscr{P}_2(J_k)$ given by
    \begin{equation}\label{E:riccati_alg_contraction2}
            \mathscr{P}_2(J_k) := \bigl( A_{11} - A_{12} L \bigr)^{-1} \Bigl( M_0 N_0 + J_k N_0 + \delta M_0 \mathscr{F}(Z) J_k + \delta J_k \mathscr{F}(Z) J_k \Bigr).
    \end{equation}
\end{lem}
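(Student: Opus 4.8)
The plan is to prove the statement by the Banach fixed-point theorem applied to the shifted equation \eqref{E:ARE_in_S_recast}, in complete parallel with Lemma~\ref{L:fixed_point_algorithm1} (i.e.\ [Kok75, Lem.~1]). Three things have to be checked: (i) that \eqref{E:ARE2} and \eqref{E:ARE_in_S_recast} are equivalent under the shift \eqref{E:J=M-M0}, and that \eqref{E:ARE_in_S_recast} is in turn equivalent to the fixed-point equation $J = \mathscr{P}_2(J)$; (ii) that $\mathscr{P}_2$ maps the closed ball $\mathcal{B}_r := \{ J \in \Real^{k\times(n-k)} : \norm{J} \leq r\}$ into itself, where $r$ is the right-hand side of \eqref{E:riccati_J_upperbound}; and (iii) that $\mathscr{P}_2$ is a strict contraction on $\mathcal{B}_r$.

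For (i), I would substitute $M = M_0 + J$ into $\mathscr{R}_2(M)=\mathbf{0}$ and use the defining identities $(A_{11}-A_{12}L)M_0 = -A_{12}$ and $N_0 = A_{22}+L A_{12}+\delta\mathscr{F}(Z)M_0$; the affine term cancels and, up to an irrelevant overall sign, one is left with \eqref{E:ARE_in_S_recast}. Since $(A_{11}-A_{12}L)$ is nonsingular by hypothesis, isolating the $(A_{11}-A_{12}L)J$ contribution in \eqref{E:ARE_in_S_recast} and left-multiplying by $(A_{11}-A_{12}L)^{-1}$ yields exactly $J = \mathscr{P}_2(J)$ with $\mathscr{P}_2$ as in \eqref{E:riccati_alg_contraction2}. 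Hence roots of \eqref{E:ARE2} correspond bijectively, via $M = M_0 + J$, to fixed points of $\mathscr{P}_2$.

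For (ii) and (iii), write $\kappa := \norm{(A_{11}-A_{12}L)^{-1}}$, $a := \norm{N_0}$, $b := \norm{M_0}$, $g := \norm{\delta\mathscr{F}(Z)}$, so that $r = 2ab/(a+gb)$ and the hypothesis \eqref{E:riccati_alg_cond2} reads $\kappa \leq 1/(3(a+gb))$. Submultiplicativity and subadditivity of the induced norm applied to \eqref{E:riccati_alg_contraction2} give, for $J,J' \in \mathcal{B}_r$,
\[
 \norm{\mathscr{P}_2(J)} \leq \kappa\bigl( ab + (a+gb)\,r + g\,r^2 \bigr), \qquad
 \norm{\mathscr{P}_2(J)-\mathscr{P}_2(J')} \leq \kappa\bigl( a+gb+2gr \bigr)\norm{J-J'}.
\]
The one nontrivial elementary estimate is that, with $r = 2ab/(a+gb)$, both $g r^2 \leq ab$ and $2gr \leq a+gb$ hold (each reduces to $4t(1-t)\leq 1$ upon setting $t := gb/(a+gb)\in[0,1)$, using $gb = t(a+gb)$, $a = (1-t)(a+gb)$). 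Since also $(a+gb)r = 2ab$ and $\kappa \leq 1/(3(a+gb))$, the first bound gives $\norm{\mathscr{P}_2(J)} \leq 4\kappa ab \leq \tfrac{4ab}{3(a+gb)} \leq \tfrac{2ab}{a+gb} = r$, so $\mathcal{B}_r$ is invariant; the second gives $\norm{\mathscr{P}_2(J)-\mathscr{P}_2(J')} \leq 2\kappa(a+gb)\norm{J-J'} \leq \tfrac{2}{3}\norm{J-J'}$, so $\mathscr{P}_2|_{\mathcal{B}_r}$ is a contraction with constant $\tfrac23$.

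Since $\mathcal{B}_r$ with the metric induced by $\norm{\cdot}$ is complete, the Banach fixed-point theorem gives a unique fixed point $J$ in $\mathcal{B}_r$ — equivalently, by (i), a unique real root of \eqref{E:ARE_in_S_recast} obeying the bound \eqref{E:riccati_J_upperbound} — obtained as the limit of $J_{k+1} = \mathscr{P}_2(J_k)$ from any $J_0\in\mathcal{B}_r$ (in particular $J_0=\mathbf{0}$, as used subsequently); the bound \eqref{E:riccati_J_upperbound} then holds because the fixed point lies in $\mathcal{B}_r$. Setting $M := M_0 + J$ yields the claimed solution of \eqref{E:ARE2}. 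The main obstacle is pure bookkeeping rather than conceptual: carrying the non-commuting matrix products in the right order through the shift in part (i), and pinning the constants in (ii)–(iii) so that the factor $3$ in \eqref{E:riccati_alg_cond2} and the radius in \eqref{E:riccati_J_upperbound} are precisely those that close the self-map and contraction estimates — both steps mirror the proof behind Lemma~\ref{L:fixed_point_algorithm1}.
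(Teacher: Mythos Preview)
Your proof is correct. The paper itself does not give a proof of this lemma; it is quoted verbatim from \cite[Lem.~1]{Kok75} (as is Lemma~\ref{L:fixed_point_algorithm1}), so there is nothing in the paper to compare against beyond the citation. Your argument---reduce the shifted NARE \eqref{E:ARE_in_S_recast} to the fixed-point form $J=\mathscr{P}_2(J)$, then verify self-map and contraction on the ball of radius $r=2\norm{N_0}\norm{M_0}/(\norm{N_0}+\norm{\delta\mathscr{F}(Z)}\norm{M_0})$ via the elementary inequality $4t(1-t)\le 1$---is exactly the Banach fixed-point approach underlying the cited result, and your bookkeeping (including the sign in part~(i) and the contraction constant $2/3$, which also recovers the error bound \eqref{E:e_k_forARE(J)}) checks out. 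One small clarification worth making explicit: the uniqueness you obtain is within $\mathcal{B}_r$, i.e.\ uniqueness among roots satisfying \eqref{E:riccati_J_upperbound}, which is precisely what the lemma asserts.
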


\begin{rem}
    The relative error $e_k := \norm{J_k - J}/\norm{J}$ after $k$ iterations is bounded above by
    \begin{equation}\label{E:e_k_forARE(J)}
    e_k \leq \biggl( 3 \bigl\lVert\bigl( A_{11} - A_{12} L \bigr)^{-1}\bigr\rVert \Bigl( \norm{N_0} + \norm{\delta \mathscr{F}(Z)} \norm{M_0} \Bigr) \biggr)^{\!\!k}
    \end{equation}
    and decreases as $\norm{\delta \mathscr{F}(Z)}$, $\norm{A_{22}}$, and $\bigl\lVert \bigl( A_{11} - A_{12} L \bigr)^{-1}\bigr\rVert$ decrease. This occurs when the ill-conditioning of the $A$\nobreakdash-matrix increases (e.g.\ in the case of two-time-scale systems; see \cite{kokotovic1999singular} and the references therein) and $\delta$ is chosen such that $\norm{\delta \mathscr{F}(Z)}$ is minimized. 
\end{rem}

Using $J_0 = \mathbf{0}$ as initial condition we compute $J$ iteratively. The fixed-point solution $J^* = \mathscr{P}_2(J^*)$ is then used to obtain $M = J^* + M_0$ which in turn solves $\mathscr{R}_2(M)=\mathbf{0}$ in \eqref{E:ARE2}.


Note that both conditions \eqref{E:riccati_alg_cond1} and \eqref{E:riccati_alg_cond2} are conservative and their satisfaction ensures rapid convergence (usually within 2 or 3 iterations). In practice, the right-hand-side of these inequalities can be relaxed up to 10 times in most cases without causing divergence. 

\subsubsection{The Unidirectional Coupling Term (Choosing $\delta$)} \label{S:choosing_delta}

Finally, we analyze the unidirectional coupling term $\delta \mathscr{F}(Z)$ and its behavior with respect to the free parameter $\delta$. Since $Z$ is an implicit function of $\delta$, we adopt the extended notation $\delta \mathscr{F}(Z(\delta))$ to reflect this dependency. First, we formalize a conservative upper-bound on $\norm{\delta \mathscr{F}(Z(\delta))}$ as an explicit function of $\delta$. This assures that the unidirectional coupling remains bounded for almost all admissible values of the free parameter $\delta$.

\begin{prop}\label{T:riccati_conserv_bound}
    The worst-case unidirectional coupling between the two subsystems in the transformed coordinates, i.e. $\norm{\delta \mathscr{F}(Z(\delta))}$ in \eqref{E:riccati_final_transformed}, is (conservatively) bounded above such that
    \begin{equation}\label{E:riccati_dF(Z)_upperbound}
        \norm{\delta \mathscr{F}(Z(\delta))} \leq  \frac{1}{\abs{\delta}} \biggl( \frac{\abs{\delta}+1}{\abs{\delta + 1}} \biggr)^{\!\! 2} a + \biggl( \frac{\abs{\delta}+1}{\abs{\delta + 1}} \biggr) b,  \quad \forall \delta \in \Real \backslash \{-1, 0\},
    \end{equation}
    where the constants $a$ and $b$ are independent of $\delta$ and are determined by $a :=  \alpha (b/\beta)^2$, $b := 3 \normshort{B_1 B_1^\dagger} \gamma \beta$, $\gamma := \norm{\Gamma} \normshort{\bigl( A_{22} - B_2 B_1^\dagger A_{12} \bigr)^{-1}}$, $\alpha := \normshort{A_{12} - B_1 B_1^\dagger A_{12}}$, and $\beta := \normshort{A_{22} - B_2 B_1^\dagger A_{12}}$.
\end{prop}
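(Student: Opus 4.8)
The plan is to chain together the a~priori estimates already in hand. Since, by the triangle inequality applied to the definition \eqref{E:riccati_F(Z)}, $\norm{\mathscr{F}(Z)} \le \alpha\norm{Z}^2 + \beta\norm{Z}$, it suffices to produce an explicit bound on $\norm{Z}$ as a function of $\delta$ and then multiply by $\abs{\delta}$. Throughout I would fix $\delta \in \Real\setminus\{-1,0\}$ in the admissible range, i.e.\ such that $\bigl(B_2 B_1^\dagger A_{12} - A_{22}\bigr)$ is nonsingular and \eqref{E:riccati_alg_cond1} holds, so that the root $Z$ of the NARE \eqref{E:ARE1} exists and equals $D^\ast + Z_0$ with $Z_0$ given by \eqref{E:riccati_Z0}.

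The only place $\delta$ enters $Z$ is through the factor $\bigl(B_1 B_1^\dagger - (\delta+1)I\bigr)^{-1}$ in \eqref{E:riccati_Z0}, so the first real step is to bound this. Writing $P := B_1 B_1^\dagger$, which is a symmetric idempotent, one checks directly that $\bigl(P - (\delta+1)I\bigr)^{-1} = -\tfrac{1}{\delta+1}\bigl(\tfrac{1}{\delta}P + I\bigr)$ for all $\delta \neq -1, 0$. Because the induced norm is sub-multiplicative, $\norm{P} = \norm{P^2} \le \norm{P}^2$, hence $\norm{P} \ge 1$ whenever $P \neq \mathbf{0}$; the case $P = \mathbf{0}$ is excluded, since then Assumption~\ref{A:riccati_LB1=B2_solvability} forces $B_2 = \mathbf{0}$ and the dynamics are autonomous. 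Using $\norm{P} \ge 1$ to fold the identity term into $\norm{P}$ yields
\begin{equation*}
  \bigl\lVert\bigl(B_1 B_1^\dagger - (\delta+1)I\bigr)^{-1}\bigr\rVert \;\le\; \frac{\norm{B_1 B_1^\dagger}}{\abs{\delta}}\,\kappa, \qquad \kappa := \frac{\abs{\delta}+1}{\abs{\delta+1}}.
\end{equation*}

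Next I would bound $\norm{Z}$. From \eqref{E:riccati_D_upperbound}, $\norm{D^\ast} \le 2\norm{Z_0}$ (its denominator dominates $\norm{A_0}$, so the quotient is at most $2\norm{Z_0}$), whence $\norm{Z} \le 3\norm{Z_0}$; and bounding \eqref{E:riccati_Z0} with the previous display, together with $\gamma = \norm{\Gamma}\bigl\lVert(A_{22} - B_2 B_1^\dagger A_{12})^{-1}\bigr\rVert$ and $b = 3\norm{B_1 B_1^\dagger}\gamma\beta$, gives $\norm{Z} \le 3\gamma\norm{B_1 B_1^\dagger}\kappa/\abs{\delta} = (b/\beta)\,\kappa/\abs{\delta}$. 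Substituting into $\norm{\delta\mathscr{F}(Z)} \le \abs{\delta}\bigl(\alpha\norm{Z}^2 + \beta\norm{Z}\bigr)$ and simplifying,
\begin{equation*}
  \norm{\delta\mathscr{F}(Z)} \;\le\; \alpha\Bigl(\tfrac{b}{\beta}\Bigr)^{\!2}\frac{\kappa^2}{\abs{\delta}} + b\,\kappa \;=\; \frac{1}{\abs{\delta}}\Bigl(\tfrac{\abs{\delta}+1}{\abs{\delta+1}}\Bigr)^{\!2} a + \Bigl(\tfrac{\abs{\delta}+1}{\abs{\delta+1}}\Bigr) b,
\end{equation*}
after inserting $a = \alpha(b/\beta)^2$, which is exactly \eqref{E:riccati_dF(Z)_upperbound}.

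The main obstacle I anticipate is the middle step: obtaining a clean, $\delta$-explicit bound on $\bigl\lVert\bigl(B_1 B_1^\dagger - (\delta+1)I\bigr)^{-1}\bigr\rVert$ in the infinity norm, where the convenient spectral picture (eigenvalues $-1/\delta$ and $-1/(\delta+1)$) is not directly available. The resolution is the closed form for the inverse of ``projector minus scalar'' combined with the elementary fact $\norm{P} \ge 1$ for a nonzero idempotent in a sub-multiplicative norm; this is what licenses absorbing the identity term and produces the factor $\kappa = (\abs{\delta}+1)/\abs{\delta+1}$. Everything else reduces to the triangle inequality and the already-proven fixed-point estimate of Lemma~\ref{L:fixed_point_algorithm1}.
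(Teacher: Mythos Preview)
Your proposal is correct and follows essentially the same route as the paper's proof: both derive the closed form $(B_1B_1^\dagger-(\delta+1)I)^{-1}=-\tfrac{1}{\delta+1}\bigl(I+\tfrac{1}{\delta}B_1B_1^\dagger\bigr)$ (the paper via the matrix inversion lemma, you by direct verification using idempotency), then use $\norm{B_1B_1^\dagger}\ge 1$, the fixed-point estimate $\norm{D}\le 2\norm{Z_0}$ from Lemma~\ref{L:fixed_point_algorithm1}, and finally substitute $\norm{Z}\le 3\norm{Z_0}$ into $\abs{\delta}(\alpha\norm{Z}^2+\beta\norm{Z})$. Your explicit handling of the degenerate case $B_1B_1^\dagger=\mathbf{0}$ is a nice touch the paper omits.
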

\begin{proof}
    The proof is provided in the Appendix.
\end{proof}

%

Now consider inequalities \eqref{E:riccati_alg_cond1} and \eqref{E:riccati_alg_cond2}, which are dependant on $\delta$. Adequately chosen and sufficiently large values of $\delta$ help ensure that these conditions are met. On the other hand, choosing $\delta$ exceedingly large defeats the purpose of $\delta$-parameterization of the unidirectional coupling term, since it can be shown that as $\delta$ grows, $\norm{\delta \mathscr{F}(Z(\delta))}$ approaches a problem-dependant constant that may not necessarily be an extremum point.

\begin{prop}\label{P:lim_f(delta)}
    $\displaystyle \lim_{\delta \to \pm \infty} \norm{\delta \mathscr{F}(Z(\delta))} = \norm{\Gamma}$ with $\Gamma$ given by \eqref{E:riccati_Gamma_defn}.
\end{prop}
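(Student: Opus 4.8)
The plan is to turn the coupling term into an explicit rational matrix function of $\delta$, show that the implicitly defined root $Z(\delta)$ collapses to $\mathbf{0}$ as $\abs{\delta}\to\infty$, and then read off the limit.

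First I would isolate $\mathscr{F}(Z)$ in the identity \eqref{E:ARE0_subs(37)}, namely in $Z\Xi+\Gamma+\mathscr{F}(Z)(B_1 B_1^\dagger-I)=\delta\mathscr{F}(Z)$. Collecting the $\mathscr{F}(Z)$ terms gives $\mathscr{F}(Z)\bigl((\delta+1)I-B_1 B_1^\dagger\bigr)=Z\Xi+\Gamma$, and post-multiplying by $\bigl(B_1 B_1^\dagger-(\delta+1)I\bigr)^{-1}$ (which exists for every $\delta\in\Real\backslash\{-1,0\}$, since the orthogonal projection $B_1 B_1^\dagger$ has eigenvalues $0$ and $1$, so $B_1 B_1^\dagger-(\delta+1)I$ has eigenvalues $-\delta$ and $-(\delta+1)$) yields
\begin{equation*}
    \delta\,\mathscr{F}(Z(\delta)) = -\bigl(Z(\delta)\Xi+\Gamma\bigr)\,\delta\bigl(B_1 B_1^\dagger-(\delta+1)I\bigr)^{-1}.
\end{equation*}
Writing $B_1 B_1^\dagger-(\delta+1)I = -(\delta+1)\bigl(I-\tfrac{1}{\delta+1}B_1 B_1^\dagger\bigr)$ gives $\delta\bigl(B_1 B_1^\dagger-(\delta+1)I\bigr)^{-1}=-\tfrac{\delta}{\delta+1}\bigl(I-\tfrac{1}{\delta+1}B_1 B_1^\dagger\bigr)^{-1}$, which tends to $-I$ as $\delta\to\pm\infty$ by continuity of matrix inversion.

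Next I would show $Z(\delta)\to\mathbf{0}$. Writing $Z=D+Z_0$ with $Z_0$ as in \eqref{E:riccati_Z0}, both $Z_0$ and the auxiliary matrix $A_0$ carry the factor $\bigl(B_1 B_1^\dagger-(\delta+1)I\bigr)^{-1}=-\tfrac{1}{\delta+1}\bigl(I-\tfrac{1}{\delta+1}B_1 B_1^\dagger\bigr)^{-1}$, whose norm is $O(1/\abs{\delta+1})$; hence $\norm{Z_0}\to0$ and $\norm{A_0}\to0$. Consequently the right-hand side of the existence condition \eqref{E:riccati_alg_cond1} grows without bound while its left-hand side is a fixed constant, so Lemma~\ref{L:fixed_point_algorithm1} applies for all $\abs{\delta}$ large, and its bound \eqref{E:riccati_D_upperbound} gives $\norm{D}\le\frac{2\norm{A_0}\norm{Z_0}}{\norm{A_0}+\normshort{B_1 B_1^\dagger A_{12}-A_{12}}\,\norm{Z_0}}\le 2\norm{Z_0}\to0$, whence $\norm{Z(\delta)}\le\norm{D}+\norm{Z_0}\to0$ and $Z(\delta)\Xi+\Gamma\to\Gamma$. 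Feeding this together with the limit of the preceding paragraph into the displayed identity, $\delta\,\mathscr{F}(Z(\delta))\to-\Gamma\cdot(-I)=\Gamma$, and continuity of $\norm{\cdot}$ gives $\lim_{\delta\to\pm\infty}\norm{\delta\mathscr{F}(Z(\delta))}=\norm{\Gamma}$.

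The main obstacle is the middle step: one must ensure that the $Z(\delta)$ entering \eqref{E:riccati_final_transformed} is precisely the fixed-point root produced by the $\mathscr{P}_1$-iteration, so that the a priori bound \eqref{E:riccati_D_upperbound} is legitimately available, and one must check that this bound still forces $\norm{D}\to0$ in the degenerate case $B_1 B_1^\dagger A_{12}=A_{12}$ (where \eqref{E:ARE1} collapses to a Sylvester equation)—which it does, since $\norm{A_0}/\bigl(\norm{A_0}+\normshort{B_1 B_1^\dagger A_{12}-A_{12}}\norm{Z_0}\bigr)\le1$ regardless. Everything else is bookkeeping with the spectrum of $B_1 B_1^\dagger$ and standard continuity arguments.
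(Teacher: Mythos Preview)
Your argument is correct and in fact slightly cleaner than the paper's. The paper proceeds by observing, via \eqref{E:riccati_alg_cond1} and \eqref{E:e_k_forARE(D)}, that $Z$ is well approximated by $Z_0$ for large $\abs{\delta}$, then substitutes $Z_0$ directly into the \emph{definition} \eqref{E:riccati_F(Z)} of $\mathscr{F}$, expands using the Sherman--Morrison identity \eqref{E:riccati_Sherman-Morrison}, and reads off the limit $\norm{P_2 Q_1}=\norm{\Gamma}$. Your route instead exploits the defining relation \eqref{E:ARE0_subs(37)} itself to solve for $\delta\mathscr{F}(Z)$ in closed form as $-(Z\Xi+\Gamma)\,\delta\bigl(B_1B_1^\dagger-(\delta+1)I\bigr)^{-1}$, and then only needs $Z(\delta)\to\mathbf{0}$ (which you establish from the same a~priori bound \eqref{E:riccati_D_upperbound} plus $\norm{Z_0}\to0$). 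This avoids expanding the quadratic $\mathscr{F}(Z_0)$ term by term and makes the passage to the limit fully rigorous without having to argue separately that the approximation $Z\approx Z_0$ survives multiplication by $\delta$. Both approaches ultimately rest on the same two ingredients---the spectral/Sherman--Morrison description of $\bigl(B_1B_1^\dagger-(\delta+1)I\bigr)^{-1}$ and the vanishing of $D$---but your algebraic shortcut is more direct.
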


\begin{proof}
    This proof is also provided in the Appendix.
\end{proof}

It follows from Proposition~\ref{P:lim_f(delta)} that $0 \leq \inf_{\delta} \norm{\delta \mathscr{F}(Z(\delta))} \leq \norm{\Gamma}$. Therefore naively letting $\abs{\delta} \to \infty$ essentially removes the added flexibility associated with the $\delta$\nobreakdash-parameterization in the modified Riccati approach and instead enforces a trivial solution $L = -B_2 B_1^\dagger$. While for some systems this solution may yield the smallest possible unidirectional coupling between the resulting subsystems (i.e.\ a unidirectional coupling with the least infinity norm), in most cases a carefully chosen $\delta$ not only facilitates the satisfaction of the convergence conditions \eqref{E:riccati_alg_cond1} and \eqref{E:riccati_alg_cond2}, but also further minimizes the worst-case unidirectional coupling. Thus, formulated as an optimization problem, we seek a $\delta$ that solves the following:
\begin{align}
    &\minimize_{\delta \in \Real \backslash \set{-1,0}} \quad f(\delta) := \norm{\delta \mathscr{F}(Z(\delta))} \notag \\
    &\; \st  \qquad \text{\eqref{E:riccati_alg_cond1} and \eqref{E:riccati_alg_cond2}}. \notag
\end{align}
Note that this is a nonconvex problem, and in general, $f$ may be a non-smooth function of $\delta$. However, a global optimum need not be computed. Any suboptimal solution can be used as long as that solution yields a satisfactory degree of unidirectional coupling between the subsystems in the transformed coordinates. An approximation to the optimum point can be obtained numerically, for example by fine-griding the real line or using the bisection algorithm.


In practice, while the exact shape of the function $f$ is problem-dependant, we have found (but not proven) that in most cases it exhibits a behavior similar to that of an absolute value proper rational function (over a discontinuous domain) of the form
\begin{equation}
    \hat{f}(\delta) = \Bigl\lvert \frac{c_0}{\delta^k} + c_1\Bigr\rvert + c_2, \quad \forall \delta \in  \mathcal{D},
\end{equation}
where $\mathcal{D} \subset \Real \backslash \set{-1,0}$ is the union of the two segments of the real line for which the magnitude of $\delta$ is large enough such that \eqref{E:riccati_alg_cond1} and \eqref{E:riccati_alg_cond2} are both satisfied, $k \in \mathbb{N}$, $k: \text{odd}$, $c_0 = -c_1 (\delta^*)^k$, $\delta^* = \arg \min_{\delta \in \mathcal{Y}} f(\delta)$, $c_2 = \min_{\delta \in \mathcal{Y}} f(\delta)$, and $c_1 = \bigl( \lim_{\delta \to \pm \infty} f(\delta) \bigr) - c_2 = \norm{\Gamma} - c_2$. 

\begin{ex}\label{Ex:Example_arbitrary4D}
    Consider the system
    {\footnotesize
        \begin{displaymath}
            A = \begin{bmatrix}
                \p1.5072  & \p3.3984  &  \p0.1300  & -0.0884\\
                \p5.0644  & -2.6683  &  \p0.0227  &  \p0.1689\\
                \p0.1156  & -0.1863  &  \p0.5686  &  \p0.2648\\
                -0.0808  & \p0.0229  &  \p0.4915  & \p0.5949\\
                 \end{bmatrix}, \quad
            B = \begin{bmatrix}
               -0.7433\\
               -2.2528\\
               -0.9075\\
               \p0.6036\\
            \end{bmatrix}.
        \end{displaymath}
    }
    Fig.~\ref{F:riccati_coupling} shows $f(\delta)$ and its approximation $\hat{f}(\delta) = \abs{-\frac{27.65}{\delta} + 0.55} + 1.82$ evaluated where \eqref{E:riccati_alg_cond1} and \eqref{E:riccati_alg_cond2} hold.
\end{ex}

\begin{figure}[t]
\centering
\includegraphics[scale=0.47]{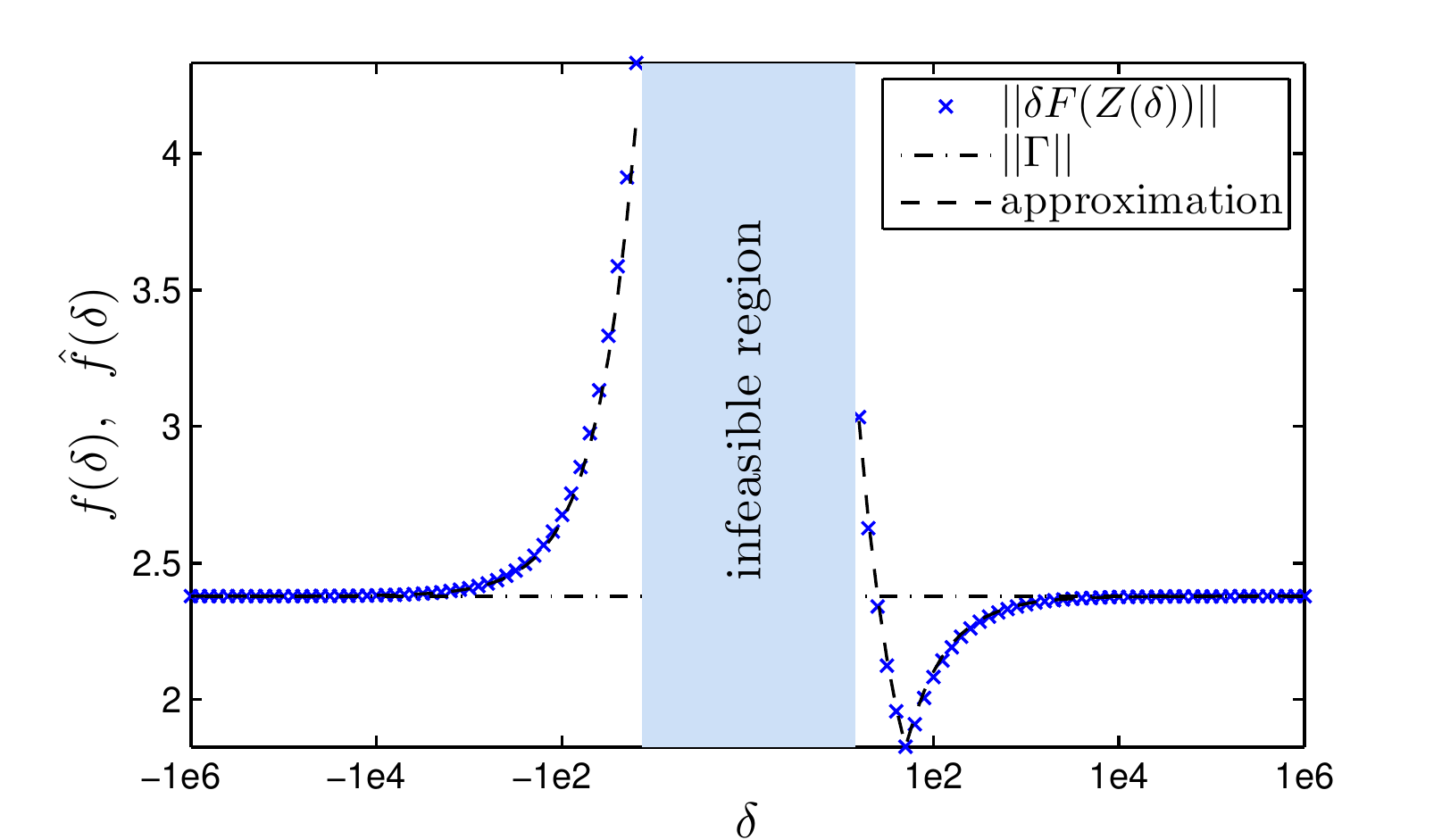}
\caption{The worst-case unidirectional coupling $f(\delta) = \norm{\delta \mathscr{F}(Z(\delta))}$ ($\times$'s) and its approximation $\hat{f}(\delta) = \abs{\frac{-27.65}{\delta} + 0.55} + 1.82$ (dashed) computed for Example \ref{Ex:Example_arbitrary4D}. The interval $(-15, +15)$ over which \eqref{E:riccati_alg_cond1} and \eqref{E:riccati_alg_cond2} are violated is labeled as ``infeasible region''. The asymptote $\lim_{\delta \to \pm \infty} f(\delta) = \norm{\Gamma}$ (dash-dotted) is also shown. The minimum of $f(\delta)$ occurs when $\delta \approx +50$.}
\label{F:riccati_coupling}
\end{figure}

A randomized, empirical test in \cite[Section~4.4.2]{kaynama2012Thesis} examines the potential affect of the system dimension $n$ on the magnitude of the unidirectional coupling and the amount of time consumed by the decomposition process. While the test shows an increasing trend in average values, there is significant variance. In addition, the time required for the decomposition process (even for the highest dimension $n=16$ in our test) is still negligible (${\sim}1.5\,\text{s}$) compared to the time required for the actual viability computations.

\subsection{Recursive Decomposition}\label{S:recursive_decomp}

A recursive decomposition when the standard Riccati transformation can be used is straightforward. Suppose that the modified Riccati transformation is used throughout the process. In deeper level recursions, the decomposition can be applied to the uppermost subsystem since that subsystem is controlled whereas every other subsystem is ETUC. For example, to decompose a 6D system into three 2D subsystems, in the first recursion level, the partitioning can be chosen such that the resulting upper (controlled) subsystem is 4D and the lower (ETUC) subsystem is 2D. In the second recursion level, if the solutions exist, the 4D subsystem is then decomposed into two 2D subsystems. Note that in the recursive application of the decomposition, when the modified Riccati transformation is employed, all subsystems but one are ETUC. Therefore, this iterated decomposition may result in an excessively conservative under-approximation of the true viability kernel.

\subsection{Riccati-Based Viability in Lower Dimensions}

In the new coordinates $z = T^{-1}x$, $T=T_1T_2$, the subsystem dynamics are governed by
\begin{align}
    \dot{z}_1 &= \bigl( A_{11} - A_{12}L - \delta M \mathscr{F}(Z) \bigr) z_1 + B_1 u, \label{E:decomposed_subsystem1}\\
    \dot{z}_2 &= \bigl( A_{22} + LA_{12} + \delta \mathscr{F}(Z)M \bigr) z_2 + B_2 u + \delta \mathscr{F}(Z) z_1 \label{E:decomposed_subsystem2}
\end{align}
with $\delta \mathscr{F}(Z) = \mathbf{0}$ when the standard Riccati transformation yields disjoint input, or $B_2 = \mathbf{0}$ when the modified Riccati transformation is employed. In the latter case, $\delta=\delta^*$ is precomputed so as to minimize $\norm{\delta \mathscr{F}(Z)}$. In addition, the transformation automatically assigns the slowest eigenvalues to the lower subsystem. These in turn prevent excessive conservatism in approximation of the construct in $\Sl_2$. Analysis over sub-intervals are performed according to Algorithm~\ref{Alg:subIntervalAnalysis}, and a qualitative lower-bound for the shrinkage of the invariance kernel in $\Sl_2$ can be formulated according to \eqref{E:inclusion_normed} with $\C_N = \Pi_2 T^{-1}\K$ and $\Delta = \delta \mathscr{F}(Z)$.

\section{Numerical Examples}\label{S:Riccati_examples}

Among Eulerian methods we use the Level Set Toolbox (LS) v.1.1 \cite{MitchellHSCC05} for our analysis. All computations are performed on a dual core Intel-based machine with $2.8\,\text{GHz}$ CPU, $6\,\text{MB}$ L2 cache and $3\,\text{GB}$ RAM running single-threaded 32-bit \textsc{Matlab} 7.5.

\subsection{4D Cart with Two Inverted Pendulums}\label{Example:riccati_cart}

Consider the linearized model of a cart with two separately mounted inverted pendulums from \cite[Ex.\ 2.2.1]{ioannou1996robust} with $l_1 = 30$, $l_2 = 35$:
\[
    A = \begin{bmatrix}
         0   & 1       &  0      &   0\\
    0.3920   &      0  & -0.0327   &      0\\
         0   &      0  &       0   & 1\\
    0.0560   &      0  &  \p0.2753  &       0\\
         \end{bmatrix}, \quad
    B = \begin{bmatrix}
        0\\
        -0.0033\\
        0\\
        -0.0005\\
    \end{bmatrix}.
\]
The state vector $x\in \Real^4$ consists of angular displacement of each inverted pendulum from vertical and the corresponding angular velocities; The input $u \in \Real$, $\abs{u} \leq 10$, arises from a force applied to the cart.

Note that despite the sparsity of the system no permutation matrix can recover our desired structures \eqref{E:sys_decoupled_DisjointB} or \eqref{E:sys_unicoupled_DisjointB} (the graph representation of this system is a strongly connected digraph). We decompose this system using the presented Riccati-based technique into two 2D subsystems, with unidirectional coupling determined by the solution $L = -B_2B_1^\dagger$ regardless of the value of $\delta$:
\[
    A'' = \begin{bmatrix}
        0  &  0.9524    &     0    &      0\\
        0.3920   &      0    &     0    &      0\\
         0  &  0.1429    &     0    & 1.0500\\
         0    &     0   & 0.2800     &     0\\
        \end{bmatrix}, \quad
    B'' = \begin{bmatrix}
            0 \\
       -0.0033 \\
            0 \\
            0 \\
    \end{bmatrix}.
\]

We choose $\K$ such that in the transformed coordinates we have the constraint set $\K_z := \{ z \mid \norm{z} \leq 0.5, \, z = T^{-1}x, \, x\in \K \}$. We seek to identify the set of initial states for which there exists a bounded control law that keeps the angular displacement of the pendulums contained in $\K_z$ and thus within a ball of finite radius about their upright positions, despite control saturation. We perform the analysis over $50$ sub-intervals. LS v.1.1 only accepts hyper-rectangular input sets. To comply with this limitation we modify Step~\ref{Alg_Step:inv} in Algorithm~\ref{Alg:subIntervalAnalysis} so that $\C_i \gets \Inv_{[0,q]}(\C_{i+1},\Boxed(\V_{i+1}))_*$, where $\Boxed(\A)$ is the \emph{interval hull} of $\A$. Conservatism in Proposition~\ref{Prop:sub_interval_analysis} is preserved since $\Boxed(\V(iq)) \supseteq \V(iq)$. Computations are performed over a grid with $41$ nodes in each dimension using a first-order accuracy for $\tau =3 \, \text{s}$ (Fig.~\ref{F:riccati_reach4D_II}). The computation time for the actual and the transformation-based kernels were $1098.48 \, \text{s}$ and $4.27 \, \text{s}$, respectively. The Riccati-based kernel covers $74\%$ of the volume of the full-order set (calculated based on the number of grids contained in each set).


\begin{figure}[t]
  \centering
      \includegraphics[scale=0.6]{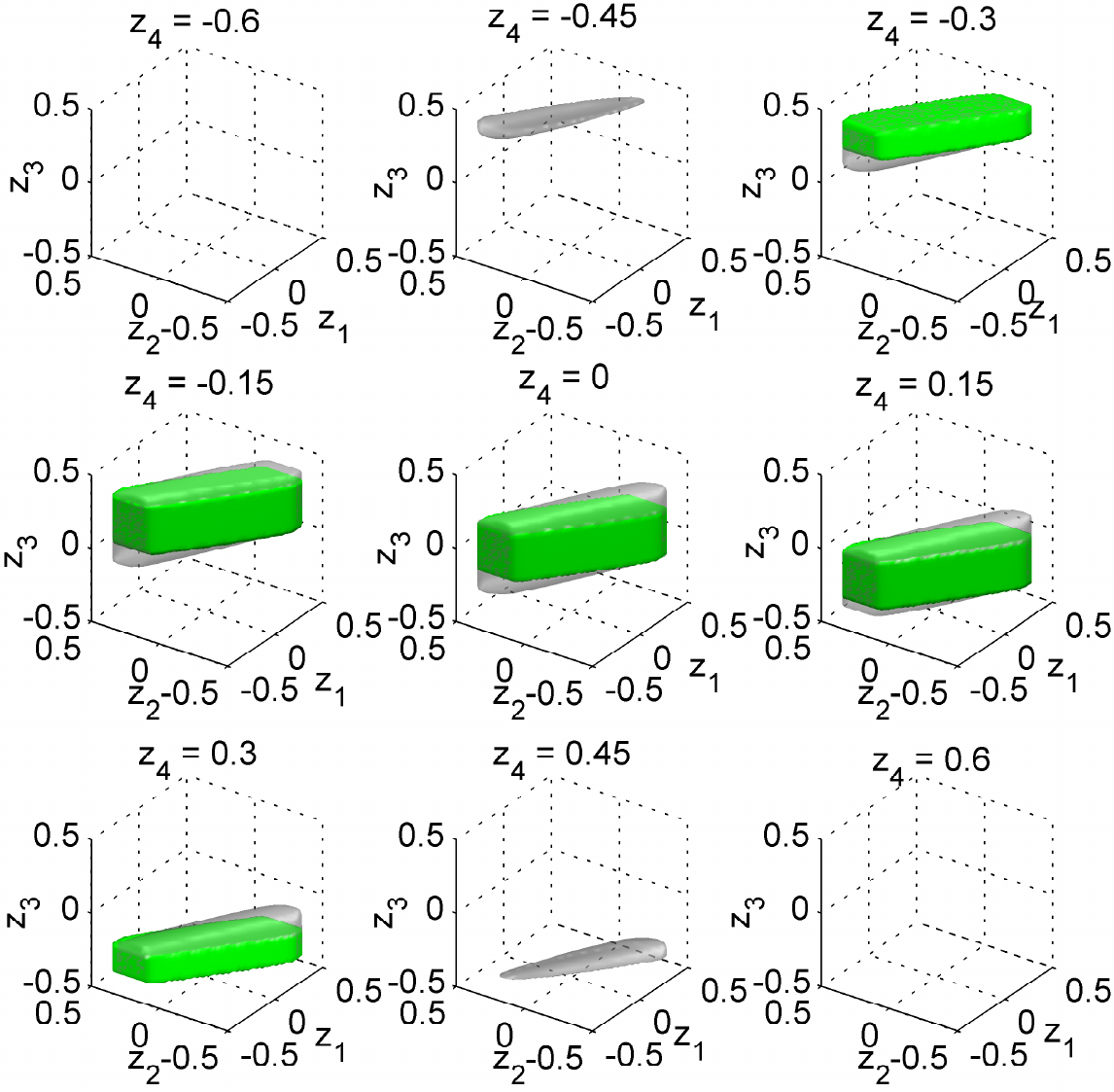}
      \caption{Riccati-based (solid, dark) vs.\ actual (transparent, light) viability kernels in the transformed coordinate space for Example~\ref{Example:riccati_cart}.}
      \label{F:riccati_reach4D_II}
\end{figure}

\subsection{Arbitrary 6D System}\label{Example:arbitrary_6D}

Consider the two-time-scale system $\dot{x} = \left[\begin{smallmatrix}
  A_{11} & A_{12}\\
  \epsilon A_{21} & \epsilon A_{22}
\end{smallmatrix}\right] x + \left[\begin{smallmatrix}
  B_1\\
  \epsilon B_2
\end{smallmatrix}\right] u$ with $\epsilon = 0.1$, and $A\in \Real^{6\times 6}$ and $B\in \Real^{6\times 2}$ matrices randomly drawn from a normal distribution $\mathcal{N}(0,1)$: 

{\footnotesize
\begin{displaymath}
    A = \begin{bmatrix}
       -0.3557  & -0.3078 &  -0.6097  &  \p2.0275 &  -1.3636  & -0.4131\\
        \p0.1233 &  -1.6441 &   \p0.2404 &  -0.6431  &  \p0.0517 &  -0.1454\\
        \p1.8857 &  -1.1748 &  -1.2502 &  -0.7252 &  -0.7801  & -0.3972\\
       -0.0194  & -0.0779  & -0.0208 &   \p0.0160 &  -0.0465  &  \p0.0535\\
       -0.0486  & -0.0192  &  \p0.0781  &  \p0.1017  &  \p0.0838  & -0.0518\\
        \p0.0043  & -0.0849 &  -0.0228  & -0.0901 &  -0.0319 &  -0.1143\\
         \end{bmatrix}, \quad
    B = \begin{bmatrix}
        \p1.0720 &  -0.8153\\
       -1.7390  & -0.7181\\
       -0.8292  & -0.4906\\
        \p0.0156 &   \p0.0540\\
       -0.0960   & \p0.0875\\
       -0.0347  & -0.0054\\
    \end{bmatrix}.
\end{displaymath}
}

We decompose this system into two 3D subsystems using the modified Riccati transformation with $\delta^* \approx -25$:

{\footnotesize
\begin{displaymath}
    A'' = \begin{bmatrix}
           -0.3472 &  -0.1553 &  -0.5243 &        0  &       0  &       0\\
           \p0.1252 &  -1.6394 &   \p0.2499 &        0  &       0  &       0\\
            \p1.8832 &  -0.9445 &  -1.1162 &        0  &       0  &       0\\
            \p0.0069 &  -0.1476 &  -0.0544 &  -0.1011  &  \p0.0244  &  \p0.1152\\
           -0.0523 &  -0.0749 &  -0.0097 &   \p0.1474  &  \p0.0156  & -0.0571\\
           -0.0015 &  -0.0604 &  -0.0238 &  -0.1425  &  \p0.0200  & -0.0762\\
         \end{bmatrix}, \quad
    B'' = \begin{bmatrix}
    \p1.0720  & -0.8153\\
       -1.7390 &  -0.7181\\
       -0.8292 &  -0.4906\\
             0 &        0\\
             0 &        0\\
             0 &        0\\
    \end{bmatrix}.
\end{displaymath}
}

The constraint $\K$ is chosen such that this set in the new coordinates is a nonconvex set formed by the cross-product of the union of a sphere and a hyper-rectangle as shown in Fig.~\ref{F:riccati_reach6D}. We choose $\U$ such that $-0.5 \leq u_1 \leq 0.5$ and $0.5 \leq u_2 \leq 1$. (The shape of $\U$ need not be rectangular since one of the subsystems is ETUC.)
Decentralized approximation of $\Viab_{[0,2]}(\K,\U)$ are carried out over $50$ sub-intervals using $151$ nodes in each dimension and a second-order accuracy (Fig.~\ref{F:riccati_reach6D}). The overall computation time was $1 \, \text{h}$ (including calculation of $\delta^*$, transformation matrices, the decomposition, and projections which took only a few seconds). In contrast, the actual kernel is prohibitively computationally expensive to compute with LS for any meaningful grid resolution. Moreover,  on average $350 \, \text{MB}$ of RAM was used in the Riccati-based viability calculations (of which $110\,\text{MB}$ was to store the grid), whereas the computation of the full-order kernel would require about $380 \,\text{TB}$ (terabyte) merely to store the grid.

\begin{figure}[t]
\centering
    \includegraphics[scale=0.4]{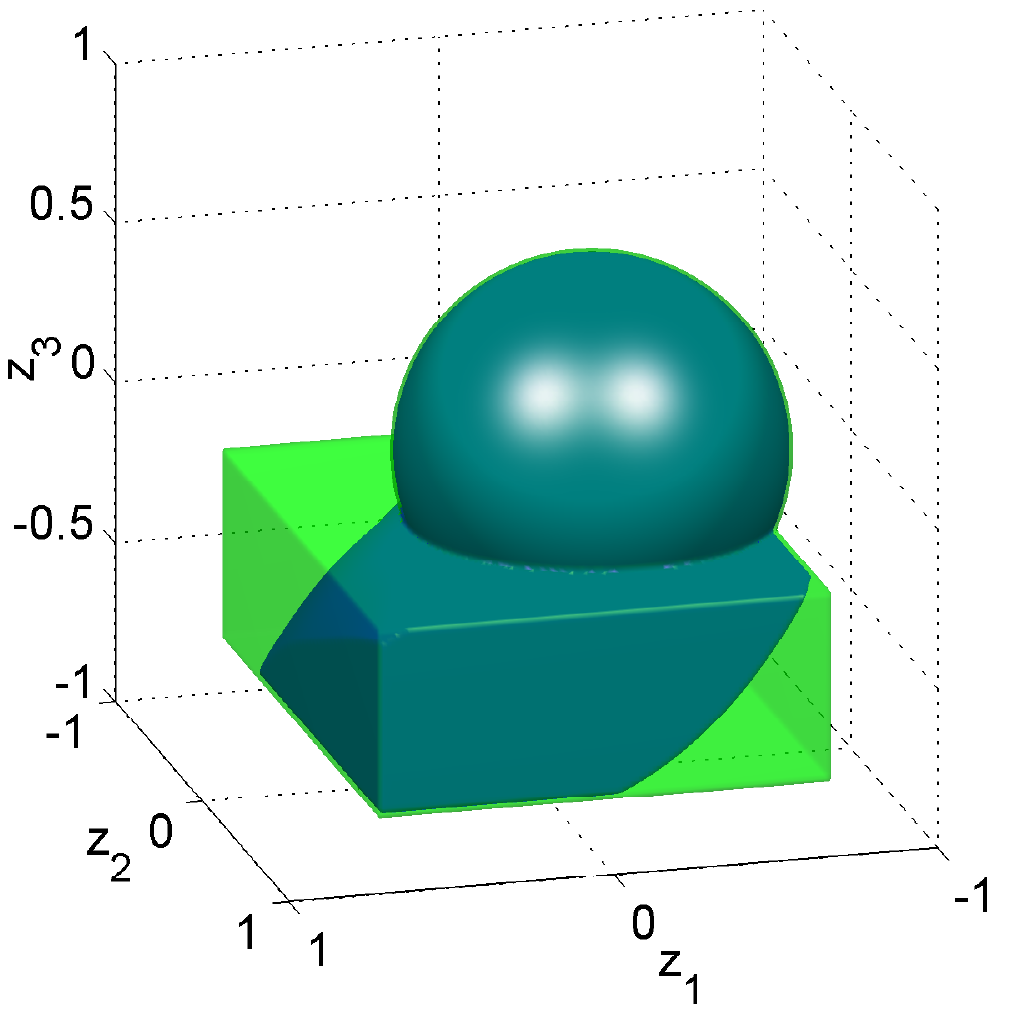}\hspace{20pt}
    \includegraphics[scale=0.4]{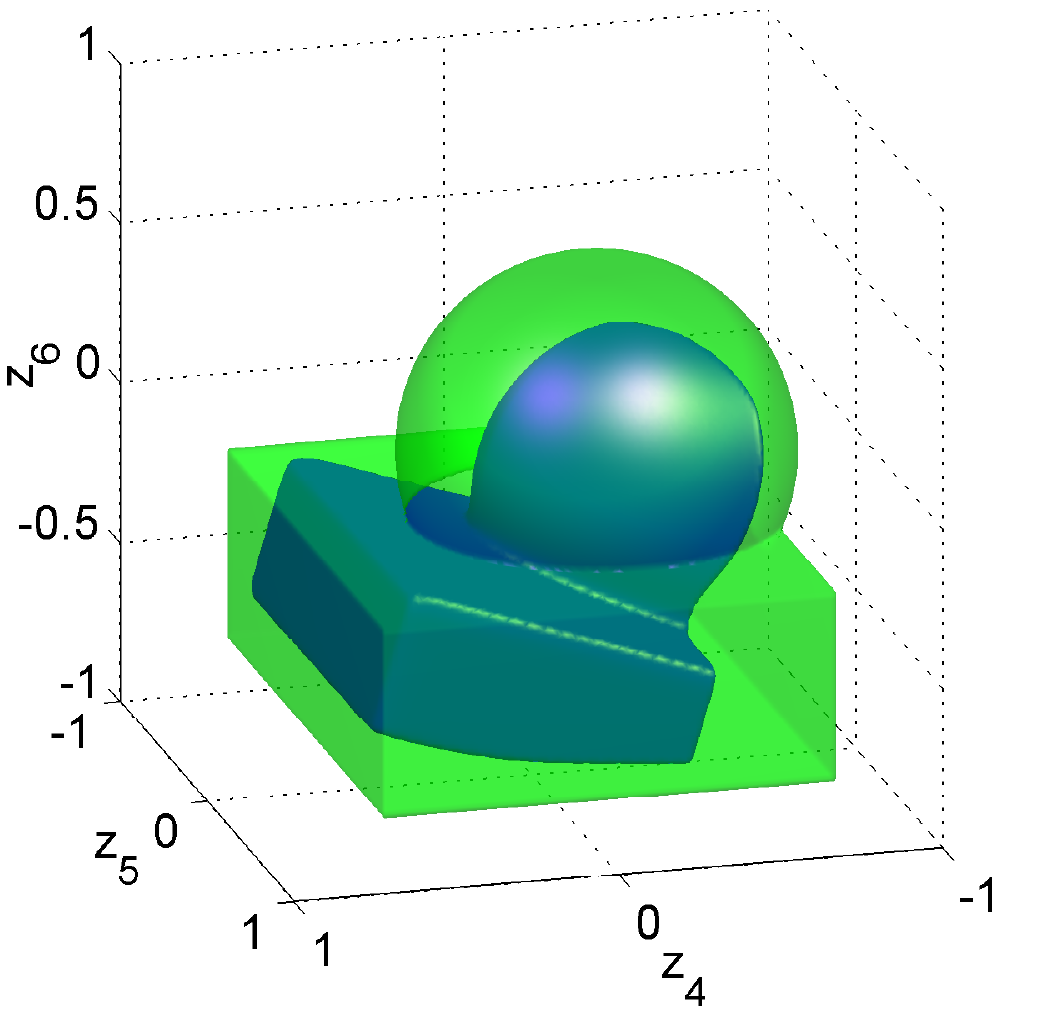}
  \caption{The constraint set (transparent) and its Riccati-based viability kernel in 3D subspaces of the transformed coordinates for Example~\ref{S:Riccati_examples}.}
  \label{F:riccati_reach6D}
\end{figure}

\subsection{Comparison With Schur-Based Decomposition (\cite{kaynama2011complexity})}

In \cite{kaynama2011complexity} we presented a Schur-based decomposition technique that is applicable to almost any LTI system. In contrast, the decomposition method presented here is based on two nonsymmetric algebraic Riccati equations. The existence of solutions to these
algebraic equations, however, is limited by a number of conditions on system matrices and is
therefore heavily problem dependent. Indeed, as pointed out earlier, the conditions are more
likely to be satisfied as the ill-conditioning of the original system matrices increases---e.g., for
two-time-scale systems.\footnote{cf.\ \cite[Figure~4.6]{kaynama2012Thesis} for the fraction of tests on randomly generated systems for which a solution existed.} However, when the algebraic Riccati equations do converge, the resulting subsystems could \emph{potentially} yield less conservative kernel approximations than in the case of the Schur-based decomposition; See \cite[Section~4.5.4]{kaynama2012Thesis} for a simple example. In general, however, it is the problem under study that determines which decomposition method is more suitable. A better strategy may be to use both decomposition techniques if possible and take the union of their resulting sets to obtain a more accurate under-approximation of the viability kernel than what could be achieved using each individual technique.



\section{Conclusions and Future Work}\label{S:Conc}

We considered the problem of guaranteed safety and constraint satisfaction in moderately-dimensioned, safety-critical LTI systems with compact, simply-connected state constraints. To provide such guarantees the computation of the viability kernel is required. Historically, the algorithms that approximate this set---known as Eulerian methods---are based on gridding the state space. While powerful and versatile, their computational complexity increases exponentially with the dimension of the state which renders them impractical for systems of dimensions higher than three or four. We investigated conditions under which the viability kernel can be conservatively approximated in a decentralized fashion in lower-dimensional subspaces. We then presented a new similarity transformation that imposes such conditions on the system, thereby allowing us to employ Eulerian methods on higher-dimensional systems. The transformation is best suited to two-time-scale systems.

It is possible (although uncommon) that the transformation matrix can become poorly-conditioned due to pseudoinverses and numerical algorithms involved, resulting in the state constraint set in the transformed coordinates becoming too severely distorted under the linear map to be of any practical use. An upper-bound on the condition number in terms of the system matrices and the free parameter $\delta$ is provided in \cite[Appendix B.2]{kaynama2012Thesis}. We are currently investigating possible remedies that would ensure a well-conditioned transformation matrix.

With the particular system structure \eqref{E:sys_unicoupled_DisjointB} considered in this paper, the computations in the upper subspace are exact. On the other hand, the lower subspace computations are subject to accuracy loss since the formulated disturbance is assumed to play optimally at all times, aiming to shrink the construct in that subspace. While this is to ensure that we obtain a conservative approximation, in reality it is quite likely that the input is not always adversarial. Moreover, here we have only required the disturbance signal be measurable, and thus it can vary discontinuously. We know, however, that the trajectories of the upper subsystem are continuous in time. Restricting the disturbance input to draw from the subclass of continuous signals may result in a more accurate approximation in the lower subspace. In either case, quantifying the accuracy loss in Lemma~\ref{Lem:disc_subset_viab} is an open problem. Another future direction is in investigating alternative system structures to the ones considered in Section~\ref{S:suitable_structures}.


\section*{Appendix} \label{appendix}


\begin{proof}[Proof of Proposition~\ref{T:riccati_conserv_bound}]
    From the matrix inversion lemma, $(Y + UCV)^{-1} = Y^{-1} - Y^{-1} U (C^{-1}+ VY^{-1} U )^{-1} VY^{-1}$, with $Y = -(\delta +1) I$, $U = B_1$, $C = I$, and $V = B_1^\dagger$ we have
    \begin{equation}\label{E:riccati_Sherman-Morrison}
        \bigl( B_1 B_1^\dagger - (\delta +1) I \bigr)^{-1} = -\frac{1}{\delta + 1} \Bigl( I + \frac{1}{\delta} B_1 B_1^\dagger \Bigr).
    \end{equation}
    Using this, \eqref{E:riccati_F(Z)}, \eqref{E:riccati_Z0}, \eqref{E:D=Z-Z0}, \eqref{E:riccati_D_upperbound}, multiplicative and triangular inequalities, and $\normshort{B_1 B_1^\dagger} \ge 1$,
    {\allowdisplaybreaks
    \begin{align*}
        \norm{\delta \mathscr{F}(Z(\delta))}
        &\leq \abs{\delta} \bigl( \alpha (\norm{Z_0} + \norm{D})^2 + \beta (\norm{Z_0} + \norm{D}) \bigr)\\
        &\leq \abs{\delta} \Biggl( \alpha \biggl(\norm{Z_0} + \frac{2 \norm{A_0} \norm{Z_0}}{\norm{A_0} + \alpha \norm{Z_0}}\biggr)^{\!\! 2} + \beta \biggl(\norm{Z_0} + \frac{2 \norm{A_0} \norm{Z_0}}{\norm{A_0} + \alpha \norm{Z_0}}\biggr) \Biggr)\\
        &\leq \abs{\delta} \bigl( 9 \alpha \norm{Z_0}^2 + 3 \beta \norm{Z_0} \bigr)\\
        &\leq \abs{\delta} \Bigl( 9 \alpha \gamma^2 \bigl\lVert \bigl( B_1 B_1^\dagger - (\delta +1) I \bigr)^{-1}\bigr\rVert^2 + 3 \beta \gamma \bigl\lVert \bigl( B_1 B_1^\dagger - (\delta +1) I \bigr)^{-1}\bigr\rVert \Bigr)\\
        &\leq \abs{\delta} \biggl( 9 \alpha \gamma^2 \Bigl\lvert\frac{1}{\delta+1}\Bigr\rvert^2 \Bigl( 1 + \Bigl\lvert \frac{1}{\delta}\Bigr\rvert \Bigr)^2 \norm{B_1 B_1^\dagger}^2 + 3 \beta \gamma \Bigl\lvert\frac{1}{\delta+1}\Bigr\rvert \Bigl( 1 + \Bigl\lvert\frac{1}{\delta}\Bigr\rvert \Bigr) \norm{B_1 B_1^\dagger} \biggr)\\
        & \leq \frac{1}{\abs{\delta}} \biggl( \frac{\abs{\delta}+1}{\abs{\delta + 1}} \biggr)^{\!\! 2} a + \biggl( \frac{\abs{\delta}+1}{\abs{\delta + 1}} \biggr) b,   \quad \;\; \forall \delta \in \Real \backslash \{-1, 0\}.
     \end{align*}
     }
\end{proof}


\begin{proof}[Proof of Proposition~\ref{P:lim_f(delta)}]
    Notice from \eqref{E:riccati_alg_cond1} and \eqref{E:e_k_forARE(D)} that for large values of $\delta$, $Z$ can be closely approximated by its initial value $Z_0$. Using \eqref{E:riccati_Sherman-Morrison},
    \begin{equation*}
        \begin{split}
            \lim_{\delta \to \pm \infty} \norm{\delta \mathscr{F}(Z(\delta))} &= \lim_{\delta \to \pm \infty} \Bigl\lVert \frac{\delta}{(\delta + 1)^2} Q_1 (I + \frac{1}{\delta} B_1 B_1^\dagger) P_1 Q_1 (I + \frac{1}{\delta} B_1 B_1^\dagger)\\
            & \qquad \qquad  + \frac{\delta}{\delta + 1} P_2 Q_1 (I + \frac{1}{\delta} B_1 B_1^\dagger)\Bigr\rVert = \norm{0 + P_2 Q_1} = \norm{\Gamma}
        \end{split}
    \end{equation*}
    with $Q_1 := \bigl( B_2 B_1^\dagger A_{12} - A_{22} \bigr)^{-1} \Gamma$, $P_1 := (A_{12} - B_1 B_1^\dagger A_{12})$, $P_2 := (B_2 B_1^\dagger A_{12} - A_{22})$.
\end{proof}

\section*{Acknowledgment}

The authors thank I.\ Mitchell and R.\ Nagamune for valuable discussions, and the Associate Editor and anonymous reviewers for their constructive comments.

\bibliographystyle{IEEEtran}
\bibliography{IEEEfull,a_comp_reduct_journal09_v7}

\end{document}